\keywords{coalgebraic logic, many-valued logic, modal logic, semi-primal algebras, one-step completeness, expressivity}
\newcommand{\alg}[1]{\mathbf{#1}}
\newcommand{\cate}[1]{{\mathcal{#1}}}	
\newcommand{\var}[1]{\mathcal{#1}}
\newcommand{\func}[1]{\mathsf{#1}}
\newcommand{\lucas}{\text{\bf\L}}
\newcommand{\Stone}{\mathsf{Stone}}
\newcommand{\Set}{\mathsf{Set}}
\newcommand{\BA}{\mathsf{BA}}
\newcommand{\FLew}{\mathsf{FL_{ew}}}
\newcommand{\Krip}{\mathsf{Krip}}
\newcommand{\variety}[1]{\mathbb{H}\mathbb{S}\mathbb{P}(#1)}
\newcommand{\StoneD}{\mathsf{Stone}_{\alg{D}}}
\newcommand{\SetD}{\mathsf{Set}_{\alg{D}}}
\newcommand{\val}[1]{\mathbf{#1}}
\newcommand{\U}{\mathsf{U}}
\newcommand{\VS}{\mathsf{V}^\alg{S}}
\newcommand{\Vtop}{\mathsf{V}^\top}
\newcommand{\CS}{\mathsf{C}^\alg{S}}
\newcommand{\PS}{\Pow_\alg{S}}
\newcommand{\KS}{\mathsf{K}_\alg{S}}
\newcommand{\C}{\mathsf{C}}
\newcommand{\Pow}{\mathfrak{P}}
\newcommand{\Skel}{\mathfrak{S}}
\newcommand{\V}{\mathsf{V}}
\newcommand{\im}{\mathsf{im}}
\newcommand{\theory}{\mathsf{th}}
\newcommand{\id}{\mathsf{id}}
\newcommand{\Alg}[1]{\mathsf{Alg}(#1)}
\newcommand{\Coalg}[1]{\mathsf{Coalg}(#1)}
\begin{document}

\title[Many-valued coalgebraic logic over semi-primal varieties]{Many-valued coalgebraic logic over semi-primal varieties}
\author[A.~Kurz]{Alexander Kurz\lmcsorcid{0000-0002-8685-5207}}[a]
\author[W.~Poiger]{Wolfgang Poiger\lmcsorcid{0009-0002-8485-5905}}[b]
\author[B.~Teheux]{Bruno Teheux\lmcsorcid{0000-0002-3007-3089}}[b]
\address{Chapman University, 1 University Drive, 92866 Orange, California, USA}
\email{akurz@chapman.edu}  
\address{University of Luxembourg, 6 Avenue de la Fonte, L-4364 Esch-sur-Alzette, Luxembourg}	
\email{wolfgang.poiger@uni.lu, bruno.teheux@uni.lu}  

\begin{abstract}
  \noindent We study many-valued coalgebraic logics with semi-primal algebras of truth-degrees. We provide a systematic way to lift endofunctors defined on the variety of Boolean algebras to endofunctors on the variety generated by a semi-primal algebra. We show that this can be extended to a technique to lift classical coalgebraic logics to many-valued ones, and that (one-step) completeness and expressivity are preserved under this lifting. For specific classes of endofunctors, we also describe how to obtain an axiomatization of the lifted many-valued logic directly from an axiomatization of the original classical one. In particular, we apply all of these techniques to classical modal logic.          
\end{abstract}
\maketitle
\section*{Introduction}\label{sec:Introduction}
Research on \emph{many-valued modal logics} has been very active in recent years (see \cite{Fitting1991, DiaconescuGeorgescu2007, Priest2008, Maruyama2009, CaicedoRodriguez2010, Bou2011, HansoulTeheux2013, RivieccoJung2017, EstevaGodoVidal2017, MartiMetcalfe2018}, to name a few). In this paper, we study these logics from the perspective of \emph{coalgebraic logic}, the generalization of modal logic introduced by Moss in 1999 \cite{Moss1999}.

There are three distinct approaches to coalgebraic logic. The \emph{relation lifting approach} was introduced by Moss himself, and the \emph{predicate lifting} approach was initiated by Pattinson in \cite{Pattinson2003}. A unifying framework for both of these is found in the \emph{abstract approach} \cite{KupkeKurzPattinson2004,BonsangueKurz2005,KurzRosicky2012}. For classical modal logic, all three approaches have been fruitfully followed, resulting in interesting insights, generalizations and novel proof techniques (for a general overview of coalgebraic logic and a large collection of literature we refer the reader to \cite{KupkePattinson2011} and the bibliography therein). Thus, it is all the more surprising that very little research on many-valued coalgebraic logic exists thus far. Examples are  \cite{BilkovaKurzPetrisan2013} following the relation lifting approach and \cite{BilkovaDostal2016,LinLiau2022} following the predicate lifting approach. To the best of the authors knowledge, this paper  (together with the closely related \cite{KurzPoiger2023} by the first two authors), takes the first steps towards many-valued coalgebraic logic following the \emph{abstract approach}. 

In the classical setting, an \emph{abstract coalgebraic logic} (Definition~\ref{defin:AbstractCoalgebraicLogic}) for an endofunctor $\func{T}\colon \Set \to \Set$ is a pair $(\func{L}, \delta)$ consisting of an endofunctor $\func{L}\colon \BA \to \BA$ (essentially determining syntax) together with a natural transformation $\delta$ which, over the usual dual adjunction between $\Set$ and $\BA$, is used to relate $\func{T}$-coalgebras to $\func{L}$-algebras (essentially determining semantics). Important properties of coalgebraic logics, like \emph{one-step completeness} \cite{Pattinson2003,KupkeKurzPattinson2004} (Definition~\ref{defin:oneStepCompleteness}) and \emph{expressivity} \cite{Pattinson2004,Klin2007,Schroeder2008,JacobsSokolova2009} (Definition~\ref{defin:expressivity}) directly correspond to properties of $\delta$.

To retrieve classical modal logic as an example of an abstract coalgebraic logic, let $\func{T} = \mathcal{P}$ be the covariant powerset functor (whose coalgebras are Kripke frames), let $\func{L}$ be the functor which sends a Boolean algebra $\alg{B}$ to the free Boolean algebra generated by the set of formal expressions $\{ \Box b \mid b\in B \}$ modulo the familiar equations $\Box (b_1 \wedge b_2) \approx \Box b_1 \wedge \Box b_2$ and $\Box 1 \approx 1$ (whose algebras are modal algebras), and let the natural transformation $\delta$ take a Kripke frame to its complex modal algebra (Example~\ref{exam:classicalModalLogic}). This is not only an example of an abstract coalgebraic logic, but even of a \emph{concrete coalgebraic logic} (Definition~\ref{defin:ConcreteCoalgebraicLogic}), where $\func{L}$ is defined in terms of a \emph{presentation by operations and equations} \cite{BonsangueKurz2006,KurzPetrisan2010, KurzRosicky2012}, which essentially corresponds to an axiomatization of the corresponding variety of algebras. In this paper, to move from the classical to the many-valued setting, we replace the variety $\BA$ generated by the two-element algebra $\alg{2}$ by varieties $\var{A} = \variety{\alg{D}}$ generated by other finite algebras $\alg{D}$ of truth-degrees.  

More specifically, we discuss coalgebraic logics $(\func{L'},\delta')$ for which $\func{L}'\colon \var{A} \to \var{A}$ is an endofunctor on the variety generated by a \emph{semi-primal algebra} (Definition~\ref{def: semi-primal-algebra}) with an underlying bounded lattice. Semi-primal algebras were introduced by Foster and Pixley \cite{FosterPixley1964a} to generalize \emph{primal algebras}, which themselves were introduced by Foster \cite{Foster1953a} as immediate generalizations of the two-element Boolean algebra. Although usually not explicitly mentioned, modal extensions of semi-primal algebras have been studied before in a number of papers on finitely-valued modal logic. Let us mention some of them to put our work into context. Standard examples of semi-primal algebras in many-valued logic are the finite MV-chains, whose modal extensions have been studied in \cite{HansoulTeheux2013}, or the \L ukasiewicz-Moisil chains, whose extensions by tense operators have been studied in \cite{DiaconescuGeorgescu2007}. Furthermore, any extension of a $\FLew$-algebra by certain unary operations is semi-primal (Theorem~\ref{thm:semiprimal_characterizations}). For the semi-primal algebras thus obtained from finite Heyting algebras, modal extensions have been studied in \cite{Maruyama2009}. Modal extensions of $\FLew$-algebras in all generality have been studied in \cite{Bou2011}. The recent paper \cite{LinLiau2022} on coalgebraic many-valued logic also assumes the algebra of truth-degrees to be a $\FLew$-algebra either extended by these unary operations (rendering it semi-primal) or, even stronger, extended by the Baaz-delta \cite{Baaz1996} and all constants (even rendering it primal). It is, therefore, strongly related to our work.      

For primal algebras $\alg{D}$, in \cite{KurzPoiger2023} the first two authors made use of \emph{Hu's Theorem}, stating that the variety generated by a primal algebra is categorically equivalent to $\BA$, to lift coalgebraic logics over $\BA$ to ones over $\var{A}$. Here, we generalize the results therein to semi-primal algebras, where things get more involved in the absence of Hu's Theorem. 
Luckily, if $\var{A}$ is generated by a semi-primal algebra, there still is a useful Stone-type dual equivalence between $\var{A}$ and a category of structured Stone spaces $\StoneD$ (Definition~\ref{defin:StoneD}). In \cite{KurzPoigerTeheux2023} we explored the category-theoretical relationship between this duality and Stone's original duality, laying the ground work for this paper. Indeed, using the \emph{subalgebra adjunctions} \cite[Subsection 4.4]{KurzPoigerTeheux2023} we show that it is possible to lift functors $\func{T}\colon \Stone \to \Stone$ and $\func{L} \colon \BA \to \BA$ to functors $\func{T'}\colon \StoneD \to \StoneD$ and $\func{L}'\colon \var{A} \to \var{A}$, respectively. This allows us to systematically lift algebra-coalgebra dualities building on Stone duality, for example from \cite{KupkeKurzVenema2003,BezhanishviliDeGroot2022}, to the semi-primal level. In particular, the lifted `semi-primal version' of Jónsson-Tarski duality thus obtained coincides with the one established directly in \cite{Maruyama2012}. 

We also use the subalgebra adjunctions to lift classical coalgebraic logics $(\func{L},\delta)$ to many-valued coalgebraic logics $(\func{L}',\delta')$ over $\var{A}$ in a systematic way and show that one-step completeness and expressivity of $(\func{L}',\delta')$ follows directly from the corresponding properties of $(\func{L},\delta)$. Furthermore, we show that $\func{L}'$ has a presentation by operations and equations given that $\func{L}$ has one, and we demonstrate how, in certain cases, one can directly obtain such a presentation from the original one. For example, this allows us to generalize algebraic completeness results for finitely-valued modal logics (over crisp Kripke frames) as in \cite{Maruyama2009, HansoulTeheux2013}. We emphasize again that, in the theory developed here, such many-valued completeness (and expressivity) results are all \emph{direct consequences} of the corresponding completeness (expressivity) result for classical modal logic (Example~\ref{subsubs:classicalModalLogic}). We believe that results like these demonstrate that it is worthwhile to study many-valued logics coalgebraically, and we hope that this work inspires more future research in similar directions.

To put our work into its larger context, we believe that this research should be of interest to the community for a range of potential applications, from artificial intelligence and cyber-physical systems to the reasoning about software quality. The coalgebraic generalization also broadens the range of potential applications of many-valued modal logic, for example in modelling fuzzy preferences \cite{Vidal2020}, coalitional power \cite{KroupaTeheux2017} or searching games with errors \cite{Teheux2014}. Other applications of many-valued reasoning involve, for example, semiring-based algorithms for solving soft constraints (see, \emph{e.g.}, \cite{SchiendorferKAR18} for a recent example). From the point of view of some of the above-mentioned applications, a restriction of our approach is that the algebra of truth-degrees is finite and, correspondingly, the topological duality is zero-dimensional. Future work will be dedicated to investigating whether the techniques we develop to extend Boolean modal logics to many-valued modal logics can be further generalized to a continuum of truth-degrees, opening up even more possible applications, for example via its connections to metric behavioural theories (see, \emph{e.g.}, \cite{Baldan2018}).
 
The paper is structured as follows. In Section~\ref{sec:Preliminaries}, we give overviews of coalgebraic logic (Subsection~\ref{subsec: CoalgebraicModalLogic}), semi-primal algebras (Subsection~\ref{subsec: Semi-primal algebras}) and the Stone-type dualities for varieties they generate (Subsection~\ref{subs:SemiPrimalDuality}). In Section~\ref{sec:LiftingDualities}, we explain how to lift functors $\func{T}\colon \Stone \to \Stone$ and $\func{L}\colon \BA \to \BA$ to ones $\func{T}'\colon \StoneD \to \StoneD$ and $\func{L}'\colon \var{A} \to \var{A}$ (Definition~\ref{def:canonicalLiftingtop}, Proposition~\ref{prop:LiftingStoneMonoEpi}), and how to lift algebra-coalgebra dualites building on Stone duality to the semi-primal level (Theorem~\ref{thm:LiftingDualities}). In Section 4, we show how to lift abstract coalgebraic logics $(\func{L},\delta)$ based on $\Set$ and $\BA$ to abstract coalgebraic logics $(\func{L}',\delta')$ based on $\SetD$ and $\var{A}$ (Definition~\ref{def:LiftingAbstractCoalgLogic}). In particular, we show that one-step completeness and expressivity are preserved under this lifting (Theorems~\ref{thm:OneStepComplPreservation} and \ref{thm:ExpressPreservation}). In Section~\ref{sec:MVConcrete}, we show how to obtain an axiomatization for such a lifted logic from an axiomatization of the original one in some special cases (Theorems~\ref{thm:liftPresentationTaus} and \ref{thm:liftPresentationEtas}), together with some specific examples including classical modal logic (Subsection~\ref{subsec:applications}). Lastly, In Section~\ref{sec:conclusion}, we conclude the paper with some open questions and suggestions for further research.           
  
\section{Preliminaries}\label{sec:Preliminaries}
In Subsection~\ref{subsec: CoalgebraicModalLogic}, we give an overview of algebraic semantics for coalgebraic logics \cite{KupkeKurzPattinson2004}, one-step completeness  and expressivity. As an example, we recall how classical modal logic can be framed in this context. In Subsection~\ref{subsec: Semi-primal algebras}, we recall the definition of semi-primal algebras \cite{FosterPixley1964a} and give some examples of semi-primal algebras related to logic. In Subsection~\ref{subs:SemiPrimalDuality}, we describe the Stone-type topological duality \cite{KeimelWerner1974, ClarkDavey1998} for semi-primal varieties. For more information on semi-primal varieties we refer the reader to our previous paper \cite{KurzPoigerTeheux2023}, in which much of the `groundwork' for the present paper has been laid.   
     
\subsection{Coalgebraic modal logic}\label{subsec: CoalgebraicModalLogic}
Coalgebras offer a convenient category-theoretical framework to describe various transition-systems. A general theory of \emph{universal coalgebra} similar to that of universal algebra can be found in \cite{Rutten2000}. \emph{Coalgebraic (modal) logic} was introduced by Moss \cite{Moss1999} in 1999 and has quickly developed into an active research area (for an overview see, \emph{e.g.}, \cite{KupkePattinson2011}). In this subsection, we recall the basic definitions of algebras and coalgebras for a functor, as well as coalgebraic logics. Note that, as mentioned in the introduction, our approach to coalgebraic modal logic is the `abstract' one introduced in \cite{KupkeKurzPattinson2004} and further developed (among others) in \cite{BonsangueKurz2005,KurzRosicky2012}.

\begin{defi}\label{def:T-coalgebra}
Given a category $\cate{C}$ and an endofunctor $\func{T}\colon \cate{C} \to \cate{C}$, a $\func{T}$-\emph{coalgebra} is a $\cate{C}$-morphism $\gamma\colon X \to \func{T}(X)$, where $X\in \cate{C}$. Given another $\func{T}$-coalgebra $\gamma'\colon X'\to \func{T}(X')$, a $\func{T}$-coalgebra morphism $\gamma \to \gamma'$ is a $\cate{C}$-morphism $f\colon X \to X'$ for which the square
$$ 
\begin{tikzcd}
X \arrow[r, "\gamma"] \arrow[d, "f"']
& \func{T}(X) \arrow[d, "\func{T}f"] \\
X' \arrow[r,"\gamma'"]
&  \func{T}(X')
\end{tikzcd}  
$$
commutes. We denote by $\Coalg{\func{T}}$ the category of $\func{T}$-coalgebras with these morphisms.
\end{defi}

The following well-known example relates this to classical modal logic. 

\begin{exa}\label{exam:KripkeFrames}
A \emph{Kripke frame} is a pair $(X,R)$ where $X$ is a set and $R$ is a binary relation on $X$. A \emph{bounded morphism} (also called \emph{p-morphism}) between Kripke frames $(X_1,R_1)$ and $ (X_2,R_2)$ is a relation-preserving map $f\colon X_1\to X_2$ which, in addition, satisfies
$$
f(x)R_2y \Rightarrow \text{ there exists } x'\in X_1 \text{ such that } xRx' \text{ and } f(x') = y.
$$
It is well-known that the category $\Krip$ of Kripke frames with bounded morphisms is isomorphic to $\Coalg{\mathcal{P}}$ where $\mathcal{P}\colon \Set \to \Set$ is the covariant powerset functor.

A closely related example, the \emph{descriptive general frames}, are coalgebras for the \emph{Vietoris functor} $\mathcal{V}\colon \Stone \to \Stone$ on the category of Stone spaces (see \cite{KupkeKurzVenema2003}).
\end{exa}

Dealing with non-normal logics, Kripke semantics are usually replaced by the more general neighborhood semantics (see, \emph{e.g.}, \cite{Pacuit2017}).

\begin{exa}\label{exam:NeighborhoodFrames}
A \emph{neighborhood frame} is a pair $(X, N)$ where $X$ is a set and $N\colon X \to \mathcal{P}\mathcal{P}(X)$ sends $x$ to the collection of its neighborhoods $N(x)$. The category of neighborhood frames is isomorphic to the category of coalgebras for the \emph{neighborhood functor} $\mathcal{N} = \wp\circ \wp$, where $\wp\colon \Set \to \Set$ is the contravariant powerset functor (see, \emph{e.g.}, \cite{HansenKupkePacuit2009}).  
\end{exa}       

Coalgebras usually provide the structures on which we interpret formulas semantically. Algebras, on the other hand, tend to be closer to the syntactical side.
 
\begin{defi}\label{def:L-algebra}
Given a category $\cate{C}$ and an endofunctor $\func{L}\colon \cate{C} \to \cate{C}$, an $\func{L}$-\emph{algebra} is a $\cate{C}$-morphism $\alpha\colon\func{L}(A)\to A$ for some $A\in \cate{C}$. Given another $\func{L}$-algebra $\alpha'\colon \func{L}(A')\to A'$, an $\func{L}$-algebra morphism $\alpha \to \alpha'$ is a $\cate{C}$-morphism $h\colon A \to A'$ for which the square
$$ 
\begin{tikzcd}
\func{L}(A) \arrow[r, "\alpha"] \arrow[d, "\func{L}h"']
& A \arrow[d, "h"] \\
\func{L}(A') \arrow[r,"\alpha'"]
&  A'
\end{tikzcd}  
$$
commutes. We denote by $\Alg{\func{L}}$ the category of $\func{L}$-algebras with these morphisms.
\end{defi}

The usual `algebraic counterpart' of Example~\ref{exam:KripkeFrames} is given by the following.
\begin{exa}\label{exam:ModalAlgebras}
A \emph{modal algebra} is a pair $(\alg{B}, \Box)$, where $\alg{B}$ is a Boolean algebra and $\Box \colon B \to B$ preserves the top-element $1$ and finite meets. It was shown in \cite[Proposition 3.17]{KupkeKurzVenema2003} that the category of modal algebras with $\Box$-preserving homomorphisms is equivalent to the category $\Alg{\func{L}}$ for the functor $\func{L}\colon \BA\to \BA$ which assigns to a Boolean algebra $\alg{B}$ the free Boolean algebra generated by the underlying meet-semilattice of $\alg{B}$. 

Equivalently, the functor $\func{L}$ can be described in terms of a \emph{presentation by operations and equations} \cite{BonsangueKurz2006} as follows. For a Boolean algebra $\alg{B}$, the Boolean algebra $\func{L}(\alg{B})$ is the free Boolean algebra generated by the set of formal expressions $\{ \Box b \mid b\in \alg{B} \}$, modulo the equations $\Box 1 \approx 1$ and $\Box(b_1 \wedge b_2) \approx \Box b_1 \wedge \Box b_2$.     
\end{exa}

Similarly, we can describe the `algebraic counterpart' of Example~\ref{exam:NeighborhoodFrames} as follows. 

\begin{exa}\label{exam:NeighborAlgebras}
A \emph{neighborhood algebra} is a pair $(\alg{B}, \bigtriangleup)$, where $\alg{B}$ is a Boolean algebra and $\bigtriangleup\colon B \to B$ is an arbitrary operation. The category of neighborhood algebras is equivalent to the category of algebras for the functor $\func{L}\colon \BA \to \BA$ which has a presentation by one unary operation and no (that is, the empty set of) equations.
\end{exa}

We now recall from \cite{KupkeKurzPattinson2004} how categories of algebras and coalgebras are related in the context of coalgebraic logic. Similar to \cite{BonsangueKurz2006}, we define \emph{abstract} coalgebraic logics as follows.  

\begin{defi}\label{defin:AbstractCoalgebraicLogic}
Let $\cate{C}$ be a concrete category and let $\cate{V}$ be a variety of algebras. Let the functors $\func{P}\colon \cate{C}^\mathrm{op} \to \cate{V}$ and $\func{S}\colon \cate{V} \to \cate{C}^\mathrm{op}$ form a dual adjunction $\func{S}\dashv \func{P}$ (we will always identify them with contravariant functors between $\cate{C}$ and $\cate{V}$). Let $\func{T}\colon \cate{C} \to \cate{C}$ be an endofunctor. 
An \emph{abstract coalgebraic logic for $\func{T}$} is a pair $(\func{L}, \delta)$ consisting of an endofunctor $\func{L}\colon \cate{V}\to \cate{V}$ and a natural transformation $\delta\colon \func{L}\func{P} \Rightarrow \func{P}\func{T}$. 
$$
\begin{tikzpicture}
  \node (Set) at (-2,0) {$\cate{C}$};
  \node (V) at (2,0) {$\var{V}$};
  \draw [->,out=15,in=165,looseness=0.3] (Set) to node[above]{$\func{P}$}  (V);
  \draw [->,out=195,in=345,looseness=0.3] (V) to node[below]{$\func{S}$}  (Set);
  \draw [->,out=225,in=135,looseness=5] (Set) to node[pos = 0.35,above]{$\func{T}\phantom{.....}$}  (Set);
  \draw [->,out=-45,in=45,looseness=5] (V) to node[pos = 0.35,above]{$\phantom{.....}\func{L}$}  (V);
\end{tikzpicture}
$$ 
If $\var{C} = \Set$, $\var{V} = \BA$ and $\func{P}$ and $\func{S}$ are as in Example~\ref{exam:classicalModalLogic} below, we call the abstract coalgebraic logic \emph{classical}.      
\end{defi} 

In Examples~\ref{exam:ModalAlgebras} and \ref{exam:NeighborAlgebras} we saw that it is convenient to have a \emph{presentation} of an endofunctor by operations and equations in the sense of \cite[Definition 6]{BonsangueKurz2006} (see also \cite{KurzPetrisan2010, KurzRosicky2012}). In this case, as in \cite{KurzRosicky2012}, we will talk about \emph{concrete} coalgebraic logics.

\begin{defi}\label{defin:ConcreteCoalgebraicLogic}
A \emph{concrete coalgebraic logic} is an abstract coalgebraic logic $(\func{L},\delta)$ together with a presentation of $\func{L}$ by operations and equations.
\end{defi}

For a useful categorical characterization of concrete coalgebraic logics, we recall \cite[Theorem 4.7]{KurzRosicky2012}, stating that an endofunctor on a variety has a presentation by operations and equations if and only if it preserves \emph{sifted colimits} (for an introduction to sifted colimits and their role in universal algebra we refer the reader to \cite{AdamekRosickyVitale2010}).

In an abstract coalgebraic logic $(\func{L},\delta)$, the natural transformation $\delta$ is used to relate $\func{T}$-coalgebras to $\func{L}$-algebras as follows. Starting with a $\func{T}$-coalgebra $\gamma\colon X\to \func{T}(X)$, we can first apply the (contravariant) functor $\func{P}$ to obtain $\func{P}\gamma \colon \func{P}\func{T}(X) \to \func{P}(X)$. Now the component $\delta_X\colon \func{LP}(X)\to \func{PT}(X)$ of $\delta$ is precisely what is needed to obtain an $\func{L}$-algebra
$$
\func{P}\gamma \circ \delta_X \colon \func{LP}(X) \to \func{P}(X).
$$      
In the following we recall how to retrieve classical modal logic in this form. 
\begin{exa}\label{exam:classicalModalLogic}
In the context of Definition~\ref{defin:AbstractCoalgebraicLogic}, let $\cate{C} = \Set$ be the category of sets and $\cate{V} = \BA$ be the category of Boolean algebras. The dual adjunction is given by the functors $\func{S}\colon \BA \to \Set$ and $\func{P} \colon \Set \to \BA$, where $\func{S}$ takes a Boolean algebra to its set of ultrafilters and $\func{P}$ takes a set to its powerset-algebra.

Let $\func{T} = \var{P}$ be the covariant powerset functor (whose coalgebras, by Example~\ref{exam:KripkeFrames}, correspond to Kripke frames). We consider the following concrete coalgebraic logic $(\func{L}, \delta)$ for $\var{P}$. The functor $\func{L}\colon \BA \to \BA$ is defined as in Example~\ref{exam:ModalAlgebras}, corresponding to modal algebras. For a set $X\in \Set$, the component $\delta_X \colon \func{L}\func{P}(X) \to \func{P}\mathcal{P} (X)$ of $\delta$ is given by 
$$ \Box Y \mapsto \{ Z \subseteq X \mid Z \subseteq Y \} \text{ for } Y\subseteq X.$$ 
Let the coalgebra $\gamma_R\colon X \to \var{P} (X)$ be identified with its corresponding Kripke frame given by $(X,R)$, where $\gamma_R(x) = \{ x'\in X \mid xRx' \}$. Untangling the definitions shows that the operator corresponding to the modal algebra 
$$
\func{P}\gamma \circ \delta_X \colon \func{L}\func{P} (X) \to \func{P} (X)
$$
is defined on the subset $Y\subseteq X$ by
$$
\Box Y =  \{ x\in X \mid \gamma_R(x) \subseteq Y \} = \{ x\in X \mid xRx' \Rightarrow x'\in Y \}.
$$
This is commonly known as the \emph{complex algebra} of the frame $(X,R)$ (see, \emph{e.g.}, \cite{BlackburnRijkeVenema2001}).        
\end{exa}

The analogous coalgebraic logic corresponding to neighborhood semantics is discussed in the following (for more information we refer the reader to \cite{BezhanishviliDeGroot2022}). 

\begin{exa}\label{exam:classicalNeighborhoodLogic}
Let $\func{S}\colon \BA \to \Set$ and $\func{P} \colon \Set \to \BA$ be as in the previous example. We consider the following abstract coalgebraic logic $(\func{L}, \delta)$ for $\var{N}$, the neighborhood functor from Example~\ref{exam:NeighborhoodFrames}. The functor $\func{L}\colon \BA \to \BA$ is defined as in Example~\ref{exam:NeighborAlgebras}, corresponding to neighborhood algebras. For a set $X\in \Set$, the component $\delta_X \colon \func{L}\func{P} (X) \to \func{P}\mathcal{N}(X)$ of $\delta$ is given by 
$$ 
\Box Y \mapsto \{ N \subseteq \mathcal{P}(X) \mid Y \in N \} \text{ for } Y\subseteq X.
$$ 
This corresponds to the way neighborhood algebras are obtained from neighborhood frames in \cite{Dosen1989}.   
\end{exa}      

Next we discuss some key-properties a coalgebraic logic may have, namely (one-step) completeness and expressivity.   

\begin{defi}\label{defin:oneStepCompleteness}\cite{Pattinson2003,KupkeKurzPattinson2004}
An abstract coalgebraic logic $(\func{L}, \delta)$ is called \emph{one-step complete} if $\delta$ is a monomorphism. 
\end{defi} 
If the coalgebraic logic $(\func{L},\delta)$ is concrete, the category of algebras $\Alg{\func{L}}$ forms a variety, whose equational logic can be equivalently described as a modal logic. Semantically, we interpret such modal formulas as follows. Given a coalgebra $\gamma\colon X \to \func{T}(X)$, we use $\delta$ to associate the corresponding  $\func{L}$-algebra to it. The initial $\func{L}$-algebra $\func{L}(I) \to I$ exists and the unique morphism
$$ 
\begin{tikzcd}
\func{L}(I) \arrow[r] \arrow[d, "{\func{L}[[\cdot]]}"']
& I \arrow[d, "{[[\cdot]]}"] \\
\func{LP}(X) \arrow[r]
&  \func{P}(X)
\end{tikzcd}  
$$
yields the \emph{interpretation} of formulas $[[\cdot]]\colon I\to \func{P}(X)$. Via the dual adjunction, this corresponds to $\theory\colon X\to \func{S}(I)$, which assigns a point of $X$ to its \emph{theory}. In \cite[Theorem 6.15]{KurzPetrisan2010} it is shown that, for any classical concrete coalgebraic logic, one-step completeness implies completeness for $\func{T}$-coalgebras in this sense (see also \cite{Pattinson2003} for the point of view of predicate liftings). For example, the concrete classical coalgebraic logics described in Examples~\ref{exam:classicalModalLogic} and \ref{exam:classicalNeighborhoodLogic} are one-step complete and, therefore, complete.  

Similar to how completeness of $(\func{L}, \delta)$ is related to $\delta$ being a monomorphism, expressivity \cite{Pattinson2004,Klin2007,Schroeder2008,JacobsSokolova2009} can be related to the \emph{adjoint transpose} of $\delta$ being a monomorphism. We use \cite[Theorem 4]{JacobsSokolova2009} as definition of expressivity. Recall that, in the setting of Definition~\ref{defin:AbstractCoalgebraicLogic}, the adjoint transpose $\delta^\dagger\colon \func {TS}\Rightarrow \func{SL}$ is obtained from $\delta$ as composition 
$$
\begin{tikzcd}
\func{TS} \arrow[r, "\varepsilon\func{TS}"] & \func{SPTS} \arrow[r, "\func{S}\delta\func{S}"] & \func{SLPS} \arrow[r, "\func{SL}\eta"] & \func{SL} 
\end{tikzcd}
$$
where $\varepsilon$ and $\eta$ are the unit and counit of the adjunction.
\begin{defi}\label{defin:expressivity}
Let $(\func{L},\delta)$ be an abstract coalgebraic logic for $\func{T}\colon \cate{C} \to \cate{C}$ satisfying the following conditions. 
\begin{enumerate}
\item The category $\Alg{\func{L}}$ has an initial object. 
\item The category $\cate{C}$ has $(\var{M},\var{E})$-factorizations with $\var{M}$ a collection of monomorphisms and $\var{E}$ a collection of epimorphisms.   
\item The functor $\func{T}$ preserves members of $\var{M}$.  
\end{enumerate} 
We say that $(\func{L},\delta)$ is \emph{expressive} if every component of the adjoint transpose $\delta^\dag$ of $\delta$ is in $\var{M}$.
\end{defi} 

Assuming the coalgebraic logic is concrete, expressivity is also known as the \emph{Hennessy-Milner property}, stating that for every $\func{T}$-coalgebra $\gamma \colon X \to \func{T}(X)$, two states $x,y\in X$ have the same theory if and only if they are behaviourally equivalent (bisimilar). Here, we call $x_1\in X_1$ and $x_2\in X_2$ \emph{behaviourally equivalent} if there exist coalgebra morphisms $f_1\colon X_1 \to Y$ and $f_2 \colon X_2 \to Y$ (into the same coalgebra) with $f_1(x_1) = f_2(x_2)$.    

The logic of Example~\ref{exam:classicalModalLogic} is not expressive, but becomes expressive if restricted to \emph{image-finite} Kripke frames (that is, the powerset functor is replaced by the \emph{finite powerset functor} $\mathcal{P}_\omega$). Similar results for the logic defined in Example~\ref{exam:classicalNeighborhoodLogic} and the appropriate definition of image-finite neighborhood frames can be found in \cite{HansenKupkePacuit2009}.

Classical coalgebraic logics are partially described by endofunctors on the variety $\BA$, which is generated as a variety by the two-element Boolean algebra $\alg{2}$ of truth-degrees, \emph{i.e.}, $\BA = \variety{\alg{2}}$. To deal with many-valued logics we want to replace $\alg{2}$ by another (in our case, finite) algebra of truth-degrees $\alg{D}$. We will explore the scenario where $\alg{D}$ is \emph{semi-primal}, which implies that $\variety{\alg{D}}$ behaves similarly to (but is usually not categorically equivalent to) $\BA$. We give an overview of semi-primal algebras in the following subsection.  
\subsection{Semi-primal algebras}\label{subsec: Semi-primal algebras}
The study of primality and its variants is a classical topic in universal algebra (see, \emph{e.g.}, \cite{Quackenbush1979} and \cite[Chapter IV]{BurrisSankappanavar1981}), which evolved from Foster's generalized `Boolean' theory of universal algebras \cite{Foster1953a, Foster1953b}, where primal algebras were introduced. A finite algebra $\alg{P}$ is \emph{primal} if every operation $P^n \to P$ (with $n \geq 1$) on the carrier set of $\alg{P}$ is term-definable in $\alg{P}$. As suggested by Foster's original title, arguably the most important example of a primal algebra is the two-element Boolean algebra $\alg{2}$. Indeed, Hu's theorem \cite{Hu1969, Hu1971} states that a variety of algebras $\var{A}$ is categorically equivalent to the variety $\BA$ of Boolean algebras if and only if it is generated by some primal algebra $\alg{P} \in \var{A}$ as a variety, \emph{i.e.}, $\var{A} = \variety{\alg{P}}$.

Foster and Pixley initiated the study of weakened forms of primality, introducing \emph{semi-primal} algebras in \cite{FosterPixley1964a}. Unlike primal algebras, semi-primal algebras $\alg{D}$ can have proper subalgebras. Note that every term-definable operation $f\colon D^n \to D$ necessarily \emph{preserves subalgebras}, meaning that if $\alg{S} \leq \alg{D}$ is a subalgebra, then $f(S^n) \subseteq S$ holds. Of course, only such operations can possibly be term-definable in $\alg{D}$. 
\begin{defi}\label{def: semi-primal-algebra}
A finite algebra $\alg{D}$ is \emph{semi-primal} if for every $n\geq 1$, every operation $f\colon D^n\to D$ which preserves subalgebras is term-definable in $\alg{D}$.
\end{defi} 
Next we recall some equivalent characterizations of semi-primality. The \emph{ternary discriminator} $t\colon L^3 \to L$ is given by
$$
t(x,y,z) = \begin{cases}
z & \text{ if } x = y \\
x & \text{ if } x \neq y.
\end{cases}
$$ 
An algebra in which the ternary discriminator is term-definable is called \emph{discriminator algebra} and finite discriminator algebras are also called \emph{quasi-primal}.  

Furthermore, recall that an \emph{internal isomorphism} of an algebra $\alg{D}$ is an isomorphism $\varphi\colon \alg{S}_1 \to \alg{S}_2$ between any two (not necessarily distinct) subalgebras $\alg{S}_1$ and $\alg{S}_2$ of $\alg{D}$. For example, if $\alg{S} \leq \alg{D}$ is a subalgebra, then the identity $\id_S$ is an internal isomorphism of $\alg{D}$.   

Lastly, recall that a variety of algebras is called \emph{arithmetical} if every member of the variety has a distributive lattice of commuting congruences.

\begin{thm}\label{thm:semiprimal_characterizations}
Let $\alg{D}$ be a finite algebra. Then the following are equivalent. 
\begin{enumerate}
\item $\alg{D}$ is semi-primal. 
\item $\alg{D}$ is quasi-primal and the only internal isomorphisms of $\alg{D}$ are the identities on subalgebras of $\alg{D}$ \cite{Pixley1971}.
\item $\alg{D}$ is simple, the variety generated by $\alg{D}$ is arithmetical and the only internal isomorphisms of $\alg{D}$ are the identities on subalgebras on $\alg{D}$ \cite{FosterPixley1964b}.
\end{enumerate}
If $\alg{D}$ is based on a bounded lattice $\alg{D}^\flat = \langle L, \wedge, \vee, 0, 1\rangle$, the following are also equivalent to the above conditions \cite{Foster1967,KurzPoigerTeheux2023}. 
\begin{enumerate}
\item[(4)] For every $d \in \alg{D}$, the unary operation $T_d \colon D\to D$ defined by 
$$
T_d (x) =  \begin{cases}
1 & \text{ if } x = d \\
0 & \text{ if } x \neq d
\end{cases}
$$
is term-definable in $\alg{D}$.
\item[(5)] The unary term $T_0$ is term-definable and for every $d\in \alg{D}$ the unary operation $\tau_d \colon D\to D$ defined by 
$$
\tau_d (x) =  \begin{cases}
1 & \text{ if } x \geq d \\
0 & \text{ if } x \not\geq d
\end{cases}
$$
is term-definable in $\alg{D}$. 
\end{enumerate} 
\end{thm}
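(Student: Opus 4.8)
My plan is to handle the theorem in two stages: first the purely universal-algebraic equivalences (1)--(3), which I would obtain from the classical theory of the ternary discriminator, and then the two lattice-specific characterizations (4) and (5), where the hands-on work lies. The organizing tool for the first stage is Pixley's clone characterization \cite{Pixley1971}: a finite algebra is quasi-primal exactly when $t$ is term-definable, and for such an algebra the term operations are precisely those preserving all subalgebras \emph{and} all internal isomorphisms of $\alg{D}$. The single observation that makes this usable is that preserving the identity internal isomorphism $\id_S$ of a subalgebra $\alg{S}$ is the same thing as preserving the subalgebra $\alg{S}$ itself.

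With this in hand, (1) $\Leftrightarrow$ (2) falls out quickly. If $\alg{D}$ is semi-primal then the subalgebra-preserving operation $t$ is a term, so $\alg{D}$ is quasi-primal; and if some internal isomorphism were not an identity, some subalgebra-preserving operation would fail to preserve it (its graph is not recoverable from the subalgebras alone), contradicting semi-primality via Pixley's characterization. Conversely, once the only internal isomorphisms are identities the two preservation conditions in that characterization coincide by the observation above, so quasi-primality upgrades to semi-primality. For (2) $\Leftrightarrow$ (3) I would factor out the common internal-isomorphism clause and compare ``quasi-primal'' with ``simple and $\variety{\alg{D}}$ arithmetical''. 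The forward direction is the short computation that $t$ collapses any nontrivial congruence to the full one (whence simplicity) and induces a Pixley term (whence arithmeticity); the converse is the substantive Foster--Pixley structure theorem \cite{FosterPixley1964b}, which I would cite rather than reprove.

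The bounded-lattice equivalences are where I would argue in detail, establishing (1) $\Leftrightarrow$ (4) and (4) $\Leftrightarrow$ (5). Because every subalgebra of $\alg{D}$ contains $0$ and $1$, each $T_d$ sends subalgebras into $\{0,1\}$ and hence preserves subalgebras, so (1) yields (4) at once. For (4) $\Rightarrow$ (1), given a subalgebra-preserving $f\colon D^n \to D$ I would write it in disjunctive normal form: at a point $\vec d = (d_1,\dots,d_n)$ preservation gives $f(\vec d)\in\langle d_1,\dots,d_n\rangle$, so $f(\vec d)=s_{\vec d}(\vec d)$ for some term $s_{\vec d}$, while the indicator $\bigwedge_i T_{d_i}(x_i)$ equals $1$ at $\vec d$ and $0$ elsewhere; then $f = \bigvee_{\vec d\in D^n}\big(s_{\vec d}\wedge\bigwedge_i T_{d_i}(x_i)\big)$ exhibits $f$ as a term. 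Finally (4) $\Leftrightarrow$ (5) is a translation between point-indicators and up-set-indicators: one direction is $\tau_d = \bigvee_{e\geq d} T_e$, and for the other, using that $T_0$ restricts to Boolean negation on $\{0,1\}$, one recovers $T_d(x) = \tau_d(x)\wedge\bigwedge_{d'>d}T_0(\tau_{d'}(x))$.

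The main obstacle is the converse half of Foster--Pixley, i.e.\ (3) $\Rightarrow$ (2): manufacturing the discriminator from bare simplicity and arithmeticity needs the Baker--Pixley theorem \cite{BurrisSankappanavar1981} (arithmeticity makes the term clone determined by the subuniverses of $\alg{D}^2$) together with a classification of those subuniverses, and this I would leave to the cited literature. Within the genuinely new, lattice-specific part the only delicate point is the localization step in (4) $\Rightarrow$ (1) --- the appeal to subalgebra preservation to guarantee that the required output value $f(\vec d)$ already lies in $\langle d_1,\dots,d_n\rangle$, so that a witnessing term $s_{\vec d}$ is available.
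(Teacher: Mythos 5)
You should first be aware that the paper does not prove this theorem at all: it states it as a compilation of known results, attributing (2) and (3) to Pixley and Foster--Pixley and (4), (5) to Foster and to the authors' earlier paper, so the comparison here is against the cited literature rather than an in-paper argument. Measured that way, your treatment of the lattice-specific equivalences is correct and is essentially the standard argument: for $(4)\Rightarrow(1)$ the interpolation $f=\bigvee_{\vec d}\bigl(s_{\vec d}\wedge\bigwedge_i T_{d_i}(x_i)\bigr)$, justified by the observation that $f(\vec d)\in\langle d_1,\dots,d_n\rangle$ means $f(\vec d)$ is a term value at $\vec d$, is exactly right; so are the translations $\tau_d=\bigvee_{e\geq d}T_e$ and $T_d(x)=\tau_d(x)\wedge\bigwedge_{d'>d}T_0(\tau_{d'}(x))$ for $(4)\Leftrightarrow(5)$. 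Your $(2)\Rightarrow(1)$, your $(2)\Rightarrow(3)$ (discriminator collapses congruences; $t$ is a Pixley term), and the deferral of $(3)\Rightarrow(2)$ to Foster--Pixley are also fine and consistent with how the paper itself leans on the literature.

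The genuine gap is in $(1)\Rightarrow(2)$, and it sits precisely in the parenthetical you left unproved: ``its graph is not recoverable from the subalgebras alone.'' That claim is false in general. If $\{a\}$ and $\{b\}$ are two distinct one-element subalgebras, the graph of the internal isomorphism $a\mapsto b$ is the rectangle $\{a\}\times\{b\}$, which is primitive-positively definable from the subalgebras; consequently \emph{every} subalgebra-preserving operation automatically commutes with this isomorphism, and the appeal to Pixley's characterization produces no contradiction. The failure is not cosmetic: the two-element algebra $\langle\{0,1\},t,\wedge\rangle$ is semi-primal (it has the majority term $(x\wedge y)\vee(y\wedge z)\vee(z\wedge x)$, since $x\vee y=t(x,x\wedge y,y)$, and by Baker--Pixley its clone is determined by the subuniverses of its square, which are only rectangles and partial diagonals; hence its term operations are all $0,1$-preserving operations), yet $0\mapsto 1$ is a non-identity internal isomorphism between its singleton subalgebras. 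So the implication you want holds only once trivial subalgebras are excluded --- which is exactly what the paper's standing hypotheses provide later, since a bounded-lattice reduct forces $0,1$ into every subalgebra. Under that hypothesis your argument can be completed, but it needs the witness you omitted: given an internal isomorphism $\varphi$ with $\varphi(a)=b\neq a$, pick $c\in\langle b\rangle\setminus\{b\}$ (possible, as $\langle b\rangle$ is not a singleton) and let $f$ fix everything except $f(b)=c$; then $f$ preserves subalgebras but $f(\varphi(a))=c\neq b=\varphi(f(a))$. The singleton case $\langle b\rangle=\{b\}$, which forces $\langle a\rangle=\{a\}$, is exactly the case in which no such witness exists and the claimed equivalence itself breaks down.
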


Some examples of semi-primal algebras related to many-valued logic are provided below (for more examples see, \emph{e.g.}, \cite{Burris1992} or \cite[Subsection 2.3]{KurzPoigerTeheux2023}). Our first example is the algebraic counterpart of \L ukasiewicz finitely-valued logic (for more details we refer the reader to \cite{Cignoli2000}).      

\begin{exa}\label{exam:LukasChains}
The \emph{$n$-th Łukasiewicz chain} is given by 
$$
\lucas_n = \langle\{0, \tfrac{1}{n}, \dots, \tfrac{n-1}{n},1\},\wedge, \vee, \oplus,\odot,\neg, 0, 1\rangle,
$$
where $x\oplus y = \mathrm{min}(x+y,1)$, $x\odot y = \mathrm{max}(x+y-1, 0)$ and $\neg x = 1 - x$. 
For all $n\geq 1$, the algebra $\lucas_n$ is semi-primal (see \cite[Proposition 2.1]{Niederkorn2001}).
The subalgebras of $\lucas_n$ are exactly of the form $\lucas_d$, where $d$ is a divisor of $n$.
\end{exa}

Many-valued modal logic using $\lucas_n$ as algebra of truth-degrees were studied from an algebraic perspective in \cite{HansoulTeheux2013}.
      
Intending to provide algebraic semantics for \L ukasiewicz finitely-valued logic, Moisil studied the algebras discussed in our next example. However, in general they turned out to encompass more than that (see \cite{Cignoli1982}), and the logic corresponding to these algebras is nowadays commonly named after Moisil.
\begin{exa}\label{exam:MoisilChains}
The \emph{$n$-th \L ukasiewicz-Moisil chain} is given by 
$$
\alg{M}_n = \langle\{0, \tfrac{1}{n}, \dots, \tfrac{n-1}{n},1\}, \wedge, \vee, \neg, 0, 1, (\tau_{\frac{i}{n}})_{i=1}^n\rangle,
$$
where $\neg x = 1-x$ and the unary operations $\tau_{\frac{i}{n}}$ are the ones from Theorem~\ref{thm:semiprimal_characterizations} (5), which is also used to easily check that, for all $n\geq 1$, the algebra $\alg{M}_n$ is semi-primal.  
\end{exa} 
For general information about these algebras see \cite{Boicescu1991}. Extensions of $\alg{M}_n$ by tense operators have been studied in \cite{DiaconescuGeorgescu2007}.

A common framework for various many-valued logics, including Example~\ref{exam:LukasChains}, is provided by the following class of algebras. 

\begin{defi}\label{defin:FLewAlgebras}
A \emph{$\FLew$-algebra} is an algebra $\alg{F} = \langle F, \wedge, \vee, 0, 1, \odot, \rightarrow\rangle$ such that  
\begin{itemize}
\item $\langle F,\wedge, \vee, 0, 1\rangle$ is a bounded lattice, 
\item $\langle F, \odot, 1\rangle$ is a commutative monoid,
\end{itemize}
and the \emph{residuation law}
$x\odot y \leq z \Leftrightarrow x \leq y\rightarrow z$
is satisfied in $\alg{F}$. 
\end{defi} 
In any $\FLew$-algebra, we use the common notation $\neg x := x \rightarrow 0$. The following, together with Theorem~\ref{thm:semiprimal_characterizations}(2), offers various ways to systematically identify semi-primal $\FLew$-algebras. 

\begin{prop}\label{prop:QuasiPrimalFLewAlgebras}
Let $\alg{F}$ be a finite $\FLew$-algebra. 
\begin{enumerate}[(i)]
\item $\alg{F}$ is quasi-primal if and only if there exists some $n\geq 1$ such that the equation $x \vee \neg (x^n) \approx 1$ is satisfied in $\alg{F}$ \cite[Theorem 3.10]{Kowalski2004}. 
\item $\alg{F}$ is quasi-primal if and only if $T_1$ is term-definable in $\alg{F}$ \cite{DaveySchumann1991}.
\item If no element $a\in \alg{F}{\setminus}\{0,1\}$ satisfies $a \odot a = a$, then $\alg{F}$ is quasi-primal. The converse also holds if $\alg{F}$ is based on a chain \cite[Corollary 2.16]{KurzPoigerTeheux2023}.  
\end{enumerate}   
\end{prop}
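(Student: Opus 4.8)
The plan is to reduce all three characterizations to a single pivot: for a finite $\FLew$-algebra, quasi-primality is equivalent to the term-definability of the unary operation $T_1$ from Theorem~\ref{thm:semiprimal_characterizations}(4) (this is exactly (ii)). Recall that $\alg{F}$ is quasi-primal precisely when the ternary discriminator $t$ is a term, and recall the quasi-primal instance of Pixley's theorem underlying Theorem~\ref{thm:semiprimal_characterizations}(2): in a quasi-primal algebra an operation is term-definable if and only if it preserves all subalgebras and all internal isomorphisms. Granting (ii), I would then extract (i) and (iii) from an analysis of the $\odot$-powers.

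For (ii), the reverse implication rests on a switching construction. Writing $x \leftrightarrow y := (x \rightarrow y) \wedge (y \rightarrow x)$, residuation gives $x \leftrightarrow y = 1$ iff $x = y$ and $x \leftrightarrow y < 1$ otherwise, so $T_1(x \leftrightarrow y)$ is the characteristic function of the diagonal, taking values in $\{0,1\}$. Since $\neg$ swaps $0$ and $1$, the term
$$ t(x,y,z) = \bigl(T_1(x \leftrightarrow y) \odot z\bigr) \vee \bigl(\neg T_1(x \leftrightarrow y) \odot x\bigr) $$
evaluates to $z$ when $x = y$ and to $x$ when $x \neq y$, hence is the discriminator; so $T_1$ being term-definable makes $\alg{F}$ quasi-primal. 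For the forward implication one checks that $T_1$ preserves every subalgebra (each contains the constants $0,1$ and $T_1$ takes values in $\{0,1\}$) and every internal isomorphism (these fix $0$ and $1$ and preserve the property of equalling $1$); Pixley's theorem then makes $T_1$ a term.

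The powers analysis handles (iii) and links to (i). Since $\langle F, \odot, 1\rangle$ is finite and integral ($x^{k+1} \leq x^k$), for each $x$ the descending sequence $x \geq x^2 \geq \cdots$ stabilizes at an idempotent $e_x = x^n \leq x$ (uniformly in $x$ for $n \geq |F|$). If $\alg{F}$ has no idempotent in $F \setminus \{0,1\}$, then $e_x \leq x < 1$ forces $e_x = 0$ for every $x \neq 1$; that is, every non-top element is nilpotent. Consequently $T_1(x) = \neg\neg(x^n)$ is a term, and (ii) yields quasi-primality, proving the forward direction of (iii). The converse is soft: a discriminator algebra is simple, while a nontrivial idempotent $a$ would make $\uparrow a$ a proper nontrivial congruence filter (it is an $\odot$-closed up-set by idempotency), contradicting simplicity; hence a quasi-primal $\alg{F}$ has no nontrivial idempotent, which over a chain is exactly the stated converse. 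Finally, for (i) one observes that, after stabilization, the equation $x \vee \neg(x^n) \approx 1$ is equivalent (instantiating at the idempotent elements) to $e_x \vee \neg e_x = 1$ for all $x$, i.e.\ to every idempotent being complemented; over a chain this collapses to $e_x \in \{0,1\}$, reproducing the nilpotency condition above and closing the loop with (ii) and (iii).

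The main obstacle is the passage from an (in)equational hypothesis to the existence of a term, namely the `if' directions. Producing the discriminator amounts, via the switching term, to producing $T_1$, and the decisive step is deriving nilpotency of the non-top elements from the hypothesis of (i) or (iii); this is immediate once the relevant idempotents are forced into $\{0,1\}$ (automatic for chains, since there $a \vee \neg a = 1$ already entails $a \in \{0,1\}$), which is the heart of Kowalski's argument. By contrast, the `only if' directions are the formal consequences of the implication discriminator $\Rightarrow$ simple together with the filter--idempotent correspondence, and require no new ideas.
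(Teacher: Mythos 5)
The paper itself offers no proof of this proposition---all three items are imported from the literature (Kowalski for (i), Davey--Schumann for (ii), and Corollary~2.16 of the authors' earlier paper for (iii))---so your reconstruction must be judged on its own merits. Parts (ii) and (iii) are correct. The switching term $\bigl(T_1(x\leftrightarrow y)\odot z\bigr)\vee\bigl(\neg T_1(x\leftrightarrow y)\odot x\bigr)$ does realize the discriminator; the forward half of (ii) is a legitimate application of Pixley's theorem, since $T_1$ preserves subalgebras ($0,1$ are constants and $T_1$ takes values in $\{0,1\}$) and internal isomorphisms (these are injective and fix $1$); and the stabilization-of-powers argument gives the forward half of (iii), since once all non-units are nilpotent one has $T_1(x)=x^n$ (your $\neg\neg(x^n)$ works equally well). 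Notably, your filter argument establishes the converse of (iii) \emph{without} the chain hypothesis: an idempotent $a\notin\{0,1\}$ makes $\uparrow a$ a proper nontrivial deductive filter, hence yields a proper nontrivial congruence, contradicting the simplicity of any discriminator algebra. This strengthening is in fact correct (for a finite $\FLew$-algebra, simplicity is equivalent to nilpotency of every non-unit), so on (iii) you prove more than the paper states.

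The genuine gap is in part (i). Your argument for the implication from the equation to quasi-primality rewrites the equation as ``every idempotent $e$ satisfies $e\vee\neg e=1$'' and then passes to $e\in\{0,1\}$---a step you yourself qualify with ``over a chain.'' But (i) is asserted for arbitrary finite $\FLew$-algebras, so your proof covers only a special case; and the gap cannot be closed, because the statement as printed is false. The four-element Boolean algebra $\alg{2}\times\alg{2}$ satisfies $x\vee\neg x\approx 1$ (the equation with $n=1$), yet it is not quasi-primal: it is not simple, whereas every discriminator algebra is simple, as your own argument for (iii) shows. Kowalski's Theorem~3.10 is a statement about \emph{varieties}: a variety of $\FLew$-algebras is a discriminator variety iff it satisfies $x\vee\neg(x^n)\approx 1$ for some $n$. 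Transferring this to a single finite algebra $\alg{F}$ requires $\alg{F}$ to be simple (or, as in your argument, a chain, since over a chain the equation forces all idempotents into $\{0,1\}$ and hence forces simplicity). A complete treatment would prove ``quasi-primal $\Rightarrow$ equation'' in general---which your machinery does give: quasi-primality yields simplicity, hence nilpotency of non-units, hence the equation with $n=|F|$---and would either add the simplicity/chain hypothesis to the converse or explicitly flag the discrepancy between the cited variety-level theorem and the single-algebra claim.
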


Modal extensions of Heyting algebras expanded with all $T_d$ as in Theorem~\ref{thm:semiprimal_characterizations}(4) were studied in \cite{Maruyama2009}. A general study of many-valued modal logics over finite $\FLew$-algebras was initiated in \cite{Bou2011}. The recent paper \cite{LinLiau2022} studies them from a coalgebraic (more specifically, predicate lifting) perspective. Since that paper assumes the language to either contain all unary terms $T_d$ or all constants and the unary term $T_1$ (commonly referred to as the \emph{Baaz-delta} due to \cite{Baaz1996}), the corresponding algebras are either semi-primal (by Theorem~\ref{thm:semiprimal_characterizations}) or primal (by the above proposition they are quasi-primal after adding $T_1$, and a quasi-primal algebra becomes primal if all constants are added to the signature).  

In the next subsection we discuss a Stone-type duality for varieties generated by semi-primal algebras. 
\subsection{Semi-primal topological duality}\label{subs:SemiPrimalDuality}  
A milestone result in algebraic logic is \emph{Stone duality}, the famous dual equivalence between the variety $\BA$ and the category $\Stone$ of Stone spaces (\emph{i.e.}, compact, Hausdorff and totally disconnected topological spaces) with continuous maps. Hu \cite{Hu1969,Hu1971} generalized this duality for any variety generated by a primal algebra. Using sheaf representations, Keimel and Werner \cite{KeimelWerner1974} described a similar duality for varieties generated by finitely many quasi-primal algebras. This duality can also be expressed in the language of \emph{natural dualities} \cite{ClarkDavey1998}, and it becomes particularly simple if the algebra is assumed to be not only quasi-primal but even semi-primal \cite[Theorem 3.3.13]{ClarkDavey1998}. Below, we present this duality in a self-contained manner, without referring to the theory of sheaves or natural dualities. We follow the presentation of our previous paper \cite{KurzPoigerTeheux2023}, which the reader may consult for further details.           

\begin{defi}\label{defin:StoneD}
Let $\alg{D}$ be a semi-primal algebra. The category $\StoneD$ has objects $(X,\val{v})$, where $X$ is a Stone space and $\val{v}\colon X \to \mathbb{S}(\alg{D})$ assigns a subalgebra of $\alg{D}$ to every point of $X$, such that for every $\alg{S} \in \mathbb{S}(\alg{D})$, the preimage $\val{v}^{-1}(\alg{S}{\downarrow})$ is closed. A morphism $f\colon (X_1,\val{v}_1) \to (X_2, \val{v}_2)$ is a continuous map $X_1 \to X_2$ which satisfies 
$$
\val{v}_2(f(x)) \leq \val{v}_1(x)
$$
for every $x\in X_1$. 
\end{defi}  

In the following, let $\alg{D}$ be a semi-primal algebra and let $\var{A} = \variety{\alg{D}}$ be the variety it generates. We define two contravariant functors $\Sigma'\colon \var{A} \to \StoneD$ and $\Pi' \colon \StoneD \to \var{A}$, which establish the topological duality.     

The functor $\Sigma' \colon \var{A} \to \StoneD$ is defined on objects $\alg{A}\in \var{A}$ by 
$$
\Sigma'(\alg{A}) = \big(\var{A}(\alg{A},\alg{D}), \val{im}\big),
$$
that is, the set of homomorphisms $u\colon \alg{A} \to \alg{D}$ and $\val{im}\colon \var{A}(\alg{A},\alg{D}) \to \mathbb{S}(\alg{D})$ taking $u$ to its image $u(D)$. The topology on $\var{A}(\alg{A},\alg{D})$ is generated by the subbasis
$$
[a:d] = \{ u \mid u(a) = d \},
$$
where $a$ and $d$ range over $A$ and $D$, respectively. To a homomorphism $h\colon \alg{A}_1 \to \alg{A}_2$ the functor $\Sigma'$ assigns the $\StoneD$-morphism 
$\Sigma' h \colon \var{A}(\alg{A}_2,\alg{D}) \to \var{A}(\alg{A}_1,\alg{D})$ given by 
$u \mapsto u \circ h. $ 

The functor $\Pi' \colon \StoneD \to \var{A}$ is defined on objects $(X,\val{v}) \in \StoneD$ by 
$$
\Pi'(X,\val{v}) = \{ \alpha \in \Stone(X, D) \mid \forall x\in X\colon \alpha(x) \in \val{v}(x) \},
$$
where $\Stone(X, D)$ denotes the set of continuous maps $X\to D$ (the latter equipped with the discrete topology) with componentwise operations (\emph{i.e.}, as a subalgebra of $\alg{D}^X$). To a morphism $f\colon (X_1,\val{v}_1) \to (X_2,\val{v}_2)$ the functor $\Pi'$ assigns the homomorphism $\alpha \mapsto \alpha \circ f$.

In particular, if $\alg{D} = \alg{2}$ is the two-element Boolean algebra, the above functors establish Stone duality. In this case, we will simply denote the functors by $\Sigma$ and $\Pi$ (instead of $\Sigma'$ and $\Pi'$).  

\begin{thm}\label{thm:semiPrimalDuality}\cite{KeimelWerner1974, ClarkDavey1998}
Let $\alg{D}$ be a semi-primal algebra without trivial subalgebras and $\var{A} = \variety{\alg{D}}$ the variety it generates. Then the functors $\Pi'$ and $\Sigma'$ establish a dual equivalence between $\StoneD$ and $\var{A}$.   
\end{thm}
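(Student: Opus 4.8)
The plan is to exhibit the unit and counit of the duality as evaluation maps and to prove that each is a natural isomorphism; since both functors are built from the common ``schizophrenic'' object $\alg{D}$, a dual adjunction is essentially automatic and the real work is the bijectivity of the two evaluations. First I would record that the functors are well defined. For $\alg{A}\in\var{A}$ the hom-set $\var{A}(\alg{A},\alg{D})$ is a closed subspace of the Stone space $\alg{D}^A$ (each homomorphism equation cuts out a clopen set), hence itself a Stone space, and $\val{im}^{-1}(\alg{S}{\downarrow})=\{u\mid u(A)\subseteq S\}=\bigcap_{a\in A}\{u\mid u(a)\in S\}$ is an intersection of clopens, hence closed, so $\Sigma'(\alg{A})\in\StoneD$. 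Dually $\Pi'(X,\val{v})$ is a subalgebra of $\alg{D}^X$ because each $\val{v}(x)$ is a subalgebra, so it lies in $\mathbb{S}\mathbb{P}(\alg{D})\subseteq\var{A}$. The candidate natural isomorphisms are the evaluations $\eta_{\alg{A}}\colon\alg{A}\to\Pi'\Sigma'(\alg{A})$, $a\mapsto(u\mapsto u(a))$, and $\epsilon_{(X,\val{v})}\colon(X,\val{v})\to\Sigma'\Pi'(X,\val{v})$, $x\mapsto(\alpha\mapsto\alpha(x))$; naturality is routine, $\eta_{\alg{A}}(a)$ respects the valuation since $u(a)\in\val{im}(u)$, and $\epsilon$ is continuous because the preimage of the subbasic clopen $[\alpha:d]$ is $\alpha^{-1}(d)$.

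For injectivity I would use that $\alg{D}$ is quasi-primal (Theorem~\ref{thm:semiprimal_characterizations}), so that $\var{A}=\variety{\alg{D}}$ is arithmetical and semisimple with its subdirectly irreducibles being exactly the nontrivial subalgebras of $\alg{D}$; consequently $\var{A}=\mathbb{I}\mathbb{S}\mathbb{P}(\alg{D})$ and $\alg{D}$-valued homomorphisms separate the points of any $\alg{A}$, giving injectivity of $\eta_{\alg{A}}$. On the space side, the points of $X$ are separated by global sections (the indicator of a clopen, using that the constants $0,1$ of the bounded-lattice reduct lie in every subalgebra), so $\epsilon$ is injective; as a continuous bijection of Stone spaces it is then automatically a homeomorphism, and it remains to match valuations, i.e.\ to show $\val{im}(\epsilon(x))=\val{v}(x)$.

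The main obstacle is surjectivity. For the counit it splits into a \emph{section-existence} and a \emph{point-representation} statement. For the former, given $x$ and $s\in\val{v}(x)$, the set $\{y\mid s\in\val{v}(y)\}$ is open (its complement is the $\val{v}$-preimage of a down-set of $\mathbb{S}(\alg{D})$, hence closed), so one finds a clopen $U\ni x$ inside it and pastes the value $s$ on $U$ with a constant value off $U$ to obtain $\alpha\in\Pi'(X,\val{v})$ with $\alpha(x)=s$; this gives $\val{im}(\epsilon(x))=\val{v}(x)$. For the point-representation claim, that every homomorphism $w\colon\Pi'(X,\val{v})\to\alg{D}$ is an evaluation $\epsilon(x_0)$, I would argue by compactness: the agreement loci $\alpha^{-1}(w(\alpha))$ are clopen, and it suffices to prove they have the finite intersection property, any common point $x_0$ then satisfying $w=\epsilon(x_0)$. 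Here the term-definability coming from semi-primality does the work --- the ternary discriminator together with the operations $T_d$ of Theorem~\ref{thm:semiprimal_characterizations}(4) lets one combine finitely many sections into one whose $w$-value is forced to be contradictory if the corresponding loci were disjoint, using that $\alg{D}$ is simple and $0\neq 1$.

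For the unit, surjectivity is the hardest point and the place where semi-primality (rather than mere quasi-primality) is indispensable. Writing $\alg{A}\leq\alg{D}^I$ via $\var{A}=\mathbb{I}\mathbb{S}\mathbb{P}(\alg{D})$, the claim is that $\alg{A}$ is cut out inside $\alg{D}^I$ exactly by the unary ``membership in $\val{im}(u)$'' conditions, equivalently that every continuous valuation-respecting section is an evaluation. The crucial input is that, by Theorem~\ref{thm:semiprimal_characterizations}(2), the only internal isomorphisms of $\alg{D}$ are identities, so no nontrivial binary compatibility relations constrain $\alg{A}$ and the subalgebra conditions are the only ones; a compactness argument reduces an arbitrary $\alpha\in\Pi'\Sigma'(\alg{A})$ to a dependence on finitely many coordinates, and semi-primality (every subalgebra-preserving operation is a term, with $\alpha(u)\in\val{im}(u)$ supplying subalgebra-preservation) realizes $\alpha$ as an evaluation. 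More cleanly, once the counit is known to be a natural isomorphism, the triangle identities of the dual adjunction force $\Sigma'\eta_{\alg{A}}$ to be an isomorphism for every $\alg{A}$, which together with $\var{A}=\mathbb{I}\mathbb{S}\mathbb{P}(\alg{D})$ reduces the unit claim to showing that the ``fixed'' algebras $\Pi'(X,\val{v})$ already exhaust $\var{A}$. Throughout, the hypothesis that $\alg{D}$ has no trivial subalgebras guarantees that every $\val{im}(u)$ and every $\val{v}(x)$ is nontrivial, so that the simplicity and $0\neq 1$ arguments apply and the evaluation maps do not degenerate.
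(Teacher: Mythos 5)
Your overall architecture --- set up the dual adjunction via the two evaluation maps and prove that unit and counit are isomorphisms --- is the standard one (note the paper itself does not prove this theorem: it cites Keimel--Werner and Clark--Davey, and points to the authors' earlier paper for a self-contained proof under the lattice-reduct assumption). Your counit half is essentially fine: section-existence and the finite-intersection-property argument for point-representation both go through, using the terms $T_d$, the constants $0,1$ and compactness. The genuine gap is in the surjectivity of the unit, which is the heart of the theorem. Your mechanism is: compactness reduces $\alpha\in\Pi'\Sigma'(\alg{A})$ to a function $F$ of finitely many coordinates $u\mapsto(u(a_1),\dots,u(a_m))$, and then semi-primality turns $F$ into a term, ``with $\alpha(u)\in\val{im}(u)$ supplying subalgebra-preservation.'' That last clause is false as stated: for $F$ to extend to a subalgebra-preserving operation $D^m\to D$ one needs $F(u(a_1),\dots,u(a_m))\in\langle u(a_1),\dots,u(a_m)\rangle$, whereas the valuation condition only gives $\alpha(u)\in\val{im}(u)=u(A)$, which is in general strictly larger than the subalgebra generated by the chosen coordinates. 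Concretely, take $\alg{A}=\alg{D}=\lucas_2$. Its dual space is a single point (any endomorphism of $\lucas_2$ fixes $\tfrac{1}{2}$, the unique fixed point of $\neg$, hence is the identity), so every map factors through the single coordinate $a_1=1$. The element $\alpha=\mathsf{ev}_{1/2}$ is continuous and valuation-respecting, but the induced $F$ has $F(1)=\tfrac{1}{2}\notin\{0,1\}=\langle 1\rangle$, so $F$ has no subalgebra-preserving extension and no term $t$ satisfies $\alpha=\mathsf{ev}_{t(a_1)}$: your method breaks down even though $\alpha$ \emph{is} an evaluation (of an element outside $\langle a_1\rangle$). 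The repair is to use the valuation condition \emph{before} compactness: for each $u$ pick $a_u\in A$ with $u(a_u)=\alpha(u)$; since $D$ is discrete, $\alpha$ agrees with $\mathsf{ev}_{a_u}$ on a clopen neighbourhood of $u$; compactness then yields a finite clopen partition of $\Sigma'(\alg{A})$ on which $\alpha$ agrees with finitely many evaluations; finally one patches these together, using that clopens of $\Sigma'(\alg{A})$ are of the form $\{u\mid u(b)=1\}$ with $T_1(b)=b$ (elements of the Boolean skeleton $\Skel(\alg{A})$) and that the ``switch'' operation $s(x,y,z)$ (equal to $y$ if $x=1$ and to $z$ otherwise) is subalgebra-preserving, hence a term by semi-primality. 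Your fallback via the triangle identities does not avoid this work: it only shows $\eta$ is an isomorphism on algebras of the form $\Pi'(X,\val{v})$, and the remaining claim that every $\alg{A}$ is of this form is exactly the representation theorem being proved.

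Two smaller points. First, the theorem as stated does not assume a bounded lattice reduct, yet your proof uses the constants $0,1$ throughout (indicator sections, the constant-$0$ contradiction, the value pasted off $U$); under the paper's standing Assumption~\ref{ass:mainassumption} this is harmless, but then ``without trivial subalgebras'' is automatic, so your argument proves a special case of the stated theorem rather than the general one. Second, your appeal to ``the only internal isomorphisms are identities, so no nontrivial binary compatibility relations constrain $\alg{A}$'' is the correct intuition for why the unary relations $\mathbb{S}(\alg{D})$ suffice to dualize $\alg{D}$, but as written it asserts the content of the duality theorem rather than proving a step of it.
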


Both \cite{KeimelWerner1974} and \cite{ClarkDavey1998} actually allow $\alg{D}$ to have trivial subalgebras. However, the description of the dual category gets slightly more involved, and it will not be necessary for us, since in the following we will always assume that $\alg{D}$ has an underlying bounded lattice structure (under this assumption, in \cite[Section 3]{KurzPoigerTeheux2023} we provide an alternative proof of the above theorem based on methods developed in \cite{Johnstone1982}). 

\section{Lifting Algebra-Coalgebra Dualities}\label{sec:LiftingDualities}

For the remainder of this paper, we always work under the following assumption, unless explicitly stated otherwise. 

\begin{asm}\label{ass:mainassumption}
The finite algebra $\alg{D}$ is semi-primal algebra and has a bounded lattice reduct $\langle D,\wedge, \vee, 0, 1 \rangle$. We use $\var{A} = \variety{\alg{D}}$ to denote the variety generated by $\alg{D}$.  
\end{asm}

In this section, we describe a canonical way to lift endofunctors on $\Stone$ to ones one $\StoneD$ and, dually, to lift endofunctors on $\BA$ to ones on $\var{A}$. In particular, if  $\func{T}\colon\Stone \to \Stone$ and $\func{L}\colon \BA \to \BA$ are dual (in the sense that $\func{L} \cong \Pi\func{T}\Sigma$), then their respective liftings $\func{T}'\colon \StoneD \to \StoneD$ and $\func{L}'\colon \var{A} \to \var{A}$ are dual as well (in the sense that $\func{L}' \cong \Pi'\func{T}'\Sigma'$). For example, the `semi-primal version' of Jónsson-Tarski duality which Maruyama established directly in \cite{Maruyama2012} can also be obtained by lifting the classical Jónsson-Tarski duality in the systematic way we describe here. Other dualities that can be lifted to the `semi-primal level' include Došen duality \cite{Dosen1989} as framed, among others, in \cite{BezhanishviliDeGroot2022} by algebras and coalgebras.  

In order to lift endofunctors, we will make use of the \emph{subalgebra adjunctions} described in \cite[Subsection 4.4]{KurzPoigerTeheux2023}. For an overview of the functors described in the following, see Figure~\ref{fig:1}. 
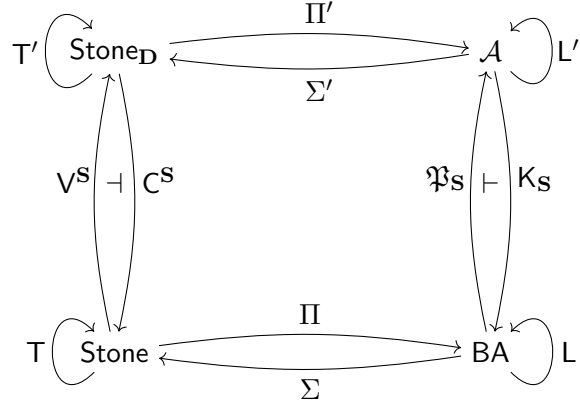
\begin{figure}[ht]
$$
\begin{tikzpicture}
  \node (Stone) at (-2.5,0) {$\Stone$};
  \node (BA) at (2.5,0) {$\BA$};
  \node (StoneD) at (-2.5,4) {$\StoneD$};
  \node (A) at (2.5,4) {$\var{A}$};
  \node (lad) at (-2.5,2.25) {$\dashv$};
  \node (rad) at (2.5,2.25) {$\vdash$};
  \draw [->, bend left = 8] (Stone) to node[above]{$\Pi$}  (BA);
  \draw [->, bend left = 8] (BA) to node[below]{$\Sigma$}  (Stone);
  \draw [->, bend left = 8] (StoneD) to node[above]{$\Pi'$}  (A);
  \draw [->, bend left = 8] (A) to node[below]{$\Sigma'$}  (StoneD);
  \draw [->, bend left = 12] (Stone) to node[above]{$\VS\phantom{.....}$}  (StoneD);
  \draw [->, bend left = 12] (StoneD) to node[above]{$\phantom{......}\CS$}  (Stone);
  \draw [->, bend left = 12] (BA) to node[above]{$\PS\phantom{......}$}  (A);
  \draw [->, bend left = 12] (A) to node[above]{$\phantom{......}\KS$}  (BA);
  \draw [->,out=225,in=135,looseness=5] (Stone) to node[pos = 0.35,above]{$\func{T}\phantom{.....}$}  (Stone);
  \draw [->,out=225,in=135,looseness=5] (StoneD) to node[pos = 0.35,above]{$\func{T}'\phantom{......}$}  (StoneD);
  \draw [->,out=-45,in=45,looseness=5] (BA) to node[pos = 0.35,above]{$\phantom{.....}\func{L}$}  (BA);
  \draw [->,out=-45,in=45,looseness=5] (A) to node[pos = 0.35,above]{$\phantom{.....}\func{L}'$}  (A);
\end{tikzpicture}
$$
\caption{Lifting algebra-coalgebra dualities via the subalgebra adjunctions.}
\label{fig:1}
\end{figure}
   
For every subalgebra $\alg{S} \in \mathbb{S}(\alg{D})$ the subalgebra adjunction corresponding to $\alg{S}$ is given by the following functors $\VS \dashv \CS$. The functor $\VS\colon \Stone\to\StoneD$ takes a Stone space $X$ to $(X,\val{v}^\alg{S})$ where $\val{v}^\alg{S}(x) = \alg{S}$ for all $x\in X$ (and $\VS$ is the identity on morphisms). Its right-adjoint $\CS\colon \StoneD \to \Stone$ takes $(X,\val{v}) \in \StoneD$ to the closed subspace $\CS(X,\val{v}) = \{ x\in X \mid \val{v}(x) \leq \alg{S} \}$ (and $\CS$ acts by restriction on morphisms). Note that $\U := \func{C}^\alg{D}$ is simply the \emph{forgetful functor} $(X,\val{v}) \mapsto X$.

The dual of $\VS$ is given by $\PS\colon \BA \to \var{A}$ which takes a Boolean algebra $\alg{B}$ to the \emph{Boolean power} $\alg{S}[\alg{B}]$ (see  \cite{Burris1975, KurzPoigerTeheux2023}). 

The dual of the forgetful functor $\U$ is given by the \emph{Boolean skeleton} $\Skel\colon \var{A} \to \BA$, taking an algebra $\alg{A}$ to the Boolean algebra defined on the subset 
$$ \Skel(A) = \{ a\in A \mid T_1(a) = a \} $$
together with the lattice operations inherited from $\alg{A}$ and negation $T_0$ (here, the unary terms $T_0$ and $T_1$ come from the ones in Theorem~\ref{thm:semiprimal_characterizations}). More generally, the dual of $\CS$ is given by $\KS\colon \var{A} \to \BA$ which takes $\alg{A}$ to the Boolean skeleton of an appropriate quotient of $\alg{A}$ (see \cite[Subsection 4.4]{KurzPoigerTeheux2023}). 

The subalgebra adjunctions can be used to reconstruct (up to isomorphism) an object $(X,\val{v})\in \StoneD$ from the information carried by all $\CS(X,\val{v})$ as a \emph{coend} (for the general theory of ends and coends see, \emph{e.g.}, \cite{MacLane1997}).

More specifically, considering $\mathbb{S}(\alg{D})$ ordered by inclusion, we define a coend diagram $\mathbb{S}(\alg{D})^\mathrm{op}\times \mathbb{S}(\alg{D}) \to \StoneD$ corresponding to $(X,\val{v})$ as follows. A pair of subalgebras $(\alg{S},\alg{T})$ gets assigned to $\V^{\alg{S}}\C^{\alg{T}}(X,\val{v})$ and if $\alg{S}_1 \leq \alg{S}_2$ and $\alg{T}_1 \leq \alg{T}_2$, then the inclusion $\C^{\alg{T}_1}(X,\val{v}) \hookrightarrow \C^{\alg{T}_2}(X,\val{v})$ yields a well-defined morphism $\V^{\alg{S}_2}\C^{\alg{T}_1}(X,\val{v}) \hookrightarrow \V^{\alg{S}_1}\C^{\alg{T}_2}(X,\val{v})$ in $\StoneD$.   

\begin{prop}\label{prop:CoendPresentation}
Let $(X,\val{v}) \in \StoneD$. Then $(X,\val{v})$ is isomorphic the coend of the diagram defined above, that is,
$$
(X,\val{v}) \cong  \int^{\alg{S}\in\mathbb{S}(\alg{D})} \VS \CS (X,\val{v}).
$$
\end{prop}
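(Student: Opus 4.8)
The plan is to verify directly that $(X,\val{v})$, equipped with the family of counits of the subalgebra adjunctions, satisfies the universal property of the coend, i.e.\ that it is the universal cowedge of the diagram. Recall from the subalgebra adjunctions that for each $\alg{S}\in\mathbb{S}(\alg{D})$ the counit $\epsilon^\alg{S}\colon \VS\CS(X,\val{v}) \to (X,\val{v})$ of $\VS \dashv \CS$ is, on underlying spaces, the inclusion of the closed subspace $\CS(X,\val{v}) = \{x\in X \mid \val{v}(x)\leq\alg{S}\}$ into $X$; it is a $\StoneD$-morphism precisely because $\val{v}(x)\leq\alg{S}$ holds on this subspace. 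Showing that $(X,\val{v})$ together with $(\epsilon^\alg{S})_\alg{S}$ satisfies the defining universal property identifies it with the coend (in particular, the coend exists).

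First I would check that the family $(\epsilon^\alg{S})_{\alg{S}\in\mathbb{S}(\alg{D})}$ is a cowedge, that is, for every inclusion $\alg{S}_1\leq\alg{S}_2$ the two induced composites out of $\V^{\alg{S}_2}\C^{\alg{S}_1}(X,\val{v})$ into $(X,\val{v})$ agree. Since every transition map in the diagram, as well as each $\epsilon^\alg{S}$, is on the underlying space either an inclusion or the identity, both composites are simply the inclusion $\C^{\alg{S}_1}(X,\val{v})\hookrightarrow X$, so the cowedge condition is immediate.

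The substance lies in the universal property. Given any cowedge $\omega_\alg{S}\colon \VS\CS(X,\val{v}) \to (Y,\val{w})$ into some $(Y,\val{w})\in\StoneD$, I would define $u\colon (X,\val{v}) \to (Y,\val{w})$ on points by $u(x) = \omega_{\val{v}(x)}(x)$, using that $x\in \C^{\val{v}(x)}(X,\val{v})$. Instantiated at $\alg{T} = \val{v}(x) \leq \alg{S}$, the cowedge condition reads $\omega_\alg{T}(x) = \omega_\alg{S}(x)$ for all $x\in\C^\alg{T}(X,\val{v})$; this shows on the one hand that $u$ agrees with the single map $\omega_\alg{D}$ (so $u$ is continuous, being a $\StoneD$-morphism on underlying spaces), and on the other hand that $u\circ\epsilon^\alg{S} = \omega_\alg{S}$ for every $\alg{S}$, giving the required factorization. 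Uniqueness is forced because $\epsilon^\alg{D}$ is the identity on the underlying space $X = \C^\alg{D}(X,\val{v})$, so any mediating morphism must coincide with $\omega_\alg{D} = u$.

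The main obstacle I anticipate is not the factorization itself but verifying that the candidate $u$ is a genuine $\StoneD$-morphism \emph{out of} $(X,\val{v})$, namely that it respects labels in the sense $\val{w}(u(x)) \leq \val{v}(x)$. This is exactly where the full cowedge, and not merely its top component $\omega_\alg{D}$ (which only yields the trivial bound $\val{w}(u(x))\leq\alg{D}$), does the essential work: since $\omega_{\val{v}(x)}$ maps the object $\V^{\val{v}(x)}\C^{\val{v}(x)}(X,\val{v})$, whose label is constantly $\val{v}(x)$, into $(Y,\val{w})$, being a $\StoneD$-morphism forces $\val{w}(\omega_{\val{v}(x)}(x)) \leq \val{v}(x)$, hence $\val{w}(u(x)) \leq \val{v}(x)$. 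In other words, the coend is precisely the construction that corrects the constant label $\alg{D}$ of the top piece down to the true value $\val{v}(x)$ at each point, and confirming that this correction is well-behaved (continuity of $u$ via its identification with $\omega_\alg{D}$, and closedness of the fibres $\CS(X,\val{v}) = \val{v}^{-1}(\alg{S}{\downarrow})$, which holds by the defining condition of $\StoneD$-objects) is the one place requiring genuine care.
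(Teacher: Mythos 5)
Your proposal is correct and follows essentially the same route as the paper: exhibit the inclusions (counits of $\VS\dashv\CS$) as a cowedge, take the underlying map of the $\alg{D}$-component of any competing cowedge as the mediating morphism, and use the cowedge condition to get the label bound $\val{w}(u(x))\leq\val{v}(x)$ — the paper derives this as $\val{w}(c_\alg{D}(x))\leq\bigwedge\{\alg{S}\mid x\in\CS(X,\val{v})\}=\val{v}(x)$, while you instantiate directly at the minimal subalgebra $\val{v}(x)$, which is the same argument. Your explicit treatment of the factorization $u\circ\epsilon^{\alg{S}}=\omega_{\alg{S}}$ and of uniqueness via $\epsilon^{\alg{D}}$ being the identity on underlying spaces fills in details the paper leaves implicit.
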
         

\begin{proof}
The inclusion maps $\iota_\alg{S} \colon \CS(X,\val{v}) \hookrightarrow X$ are morphisms $\VS\CS(X,\val{v}) \hookrightarrow (X,\val{v})$ because if $x\in \CS(X)$ then $\val{v}(x) \leq \alg{S}$ and thus $\val{v}(\iota(x)) = \val{v}(x) \leq \alg{S} = \val{v}^\alg{S}(x)$. Since the diagram only contains inclusion maps, clearly this defines a cowedge. 

Now assume that $c_\alg{S}\colon \VS\CS(X,\val{v})\to(Y,\val{w})$ is another cowedge. Then the underlying  map of $c_\alg{D}\colon X \to Y$ yields a well-defined morphism $(X,\val{v}) \to (Y,\val{w})$. To see this, take $x\in X$ and note that $\val{w}(c_\alg{D}(x))\leq \alg{S}$ whenever $x\in \CS(X,\val{v})$, because the diagram 
$$
\begin{tikzcd}[row sep=4.5em,column sep=4.5em]
\V^{\alg{D}}\CS(X,\val{v}) \arrow[r, hookrightarrow] \arrow[d, hookrightarrow] & \VS\CS(X,\val{v}) \arrow[d, "c_\alg{S}"] \\
\V^\alg{D}\C^\alg{D}(X,\val{v}) \arrow[r, "c_\alg{D}"'] &  (Y,\val{w})
\end{tikzcd}
$$    
commutes. Therefore, we have $\val{w}(c_\alg{D}(x)) \leq \bigwedge\{\alg{S} \mid x\in \CS(X,\val{v})\} = \val{v}(x).$ 
\end{proof}

As an immediate consequence of this we get the following dual statement. 

\begin{cor}\label{cor:EndPresentation}
Let $\alg{A}\in \var{A}$. Then $\alg{A}$ is isomorphic to the end of the diagram dual to the coend diagram corresponding to $\Sigma'(\alg{A})$, that is, 
$$
\alg{A} \cong \int_{\alg{S}\in \mathbb{S}(\alg{D})} \PS\KS(\alg{A}).
$$  
\end{cor}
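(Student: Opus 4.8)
The plan is to dualize Proposition~\ref{prop:CoendPresentation} by applying the contravariant functor $\Pi'$, exploiting that $\Pi'$ is one half of the dual equivalence of Theorem~\ref{thm:semiPrimalDuality} and therefore converts the coend into an end. Note first that under Assumption~\ref{ass:mainassumption} the algebra $\alg{D}$ has no trivial subalgebras (its bounded lattice reduct forces $0\neq 1$ in every subalgebra), so Theorem~\ref{thm:semiPrimalDuality} applies and $\Pi'\Sigma' \cong \id_{\var{A}}$.

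First I would instantiate Proposition~\ref{prop:CoendPresentation} at the object $(X,\val{v}) = \Sigma'(\alg{A})$, which gives
$$
\Sigma'(\alg{A}) \cong \int^{\alg{S}\in\mathbb{S}(\alg{D})} \VS\CS\Sigma'(\alg{A}).
$$
Applying $\Pi'$ to both sides and using $\Pi'\Sigma'(\alg{A}) \cong \alg{A}$ turns the left-hand side into $\alg{A}$. For the right-hand side I would argue that $\Pi'$ sends the coend to an end: viewed as an equivalence $\StoneD \simeq \var{A}^{\mathrm{op}}$, the functor $\Pi'$ preserves colimits up to the dualization, and a coend is precisely a colimit (a universal cowedge). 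Concretely, $\Pi'$ carries the universal cowedge built from the inclusions $\iota_\alg{S}$ in the proof of Proposition~\ref{prop:CoendPresentation} to a universal wedge over the opposite of the indexing category $\mathbb{S}(\alg{D})^{\mathrm{op}}\times\mathbb{S}(\alg{D})$, which is exactly the diagram dual to the coend diagram referenced in the statement. Hence
$$
\Pi'\Big(\int^{\alg{S}\in\mathbb{S}(\alg{D})}\VS\CS\Sigma'(\alg{A})\Big) \cong \int_{\alg{S}\in\mathbb{S}(\alg{D})} \Pi'\VS\CS\Sigma'(\alg{A}).
$$

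Second, I would identify each term $\Pi'\VS\CS\Sigma'(\alg{A})$ with $\PS\KS(\alg{A})$ using the duality relations for the subalgebra adjunctions from \cite{KurzPoigerTeheux2023}. Since $\PS$ is dual to $\VS$ we have $\PS \cong \Pi'\VS\Sigma$, and because $\Sigma\Pi \cong \id_{\Stone}$ this yields $\Pi'\VS \cong \PS\Pi$; likewise $\KS$ is dual to $\CS$, giving $\KS \cong \Pi\CS\Sigma'$. Chaining these two natural isomorphisms produces
$$
\Pi'\VS\CS\Sigma'(\alg{A}) \cong \PS\Pi\CS\Sigma'(\alg{A}) \cong \PS\KS(\alg{A}),
$$
and substituting into the end above delivers the claimed presentation $\alg{A} \cong \int_{\alg{S}\in\mathbb{S}(\alg{D})} \PS\KS(\alg{A})$.

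The main obstacle I expect is the bookkeeping in the first step, namely verifying cleanly that the contravariant equivalence $\Pi'$ really does send this particular coend to the end over the opposite diagram, with the cowedge/wedge correspondence and the reversal of the twisted indexing category matching up exactly. Once that is granted, everything else is a direct substitution of established natural isomorphisms, which is why the statement is advertised as an immediate consequence of Proposition~\ref{prop:CoendPresentation}.
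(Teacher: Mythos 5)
Your proposal is correct and is essentially the paper's own argument: the paper offers no separate proof, stating only that the corollary is an immediate consequence of Proposition~\ref{prop:CoendPresentation} by duality, which is precisely what you carry out. Your writeup just makes explicit the bookkeeping the paper leaves implicit—applying $\Pi'$, using $\Pi'\Sigma'\cong\id_{\var{A}}$ and the fact that a contravariant equivalence sends coends to ends, and identifying $\Pi'\VS\CS\Sigma'\cong\PS\KS$ via the stated dualities $\PS\cong\Pi'\VS\Sigma$ and $\KS\cong\Pi\CS\Sigma'$.
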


The presentations of Proposition~\ref{prop:CoendPresentation} and Corollary~\ref{cor:EndPresentation} yield canonical ways to lift functors from $\Stone$ to $\StoneD$ and from $\BA$ to $\var{A}$.

\begin{defi}\label{def:canonicalLiftingtop}
Let $\func{T}\colon \Stone \to \Stone$ and $\func{L}\colon \BA \to \BA$ be functors. 
\begin{enumerate}[(a)]
\item The \emph{lifting of $\func{T}$ to $\StoneD$} is the functor $\func{T}'\colon \StoneD \to \StoneD$ defined on objects by
$$
\func{T}'(X,\val{v}) = \int^{\alg{S}\in\mathbb{S}(\alg{D})} \VS \func{T} \CS (X,\val{v}).
$$ 
\item The \emph{lifting of $\func{L}$ to $\var{A}$} is the functor $\func{L}'\colon \var{A}\to \var{A}$ defined on objects by 
$$
\func{L}'(\alg{A}) = \int_{\alg{S}\in \mathbb{S}(\alg{D})} \PS \func{L}\KS(\alg{A}).
$$  
\end{enumerate}
The definitions of $\func{T}'$ and $\func{L}'$ on morphisms are discussed in the next paragraph.
\end{defi}

Let $f\colon (X_1,\val{v}_1)\to (X_2,\val{v}_2)$ be a $\StoneD$-morphism. Then we can define $\func{T}'f$ by the universal property of the coend, once we define a cowedge $\VS \func{T} \CS (X_1,\val{v}_1) \to \func{T}'(X_2,\val{v}_2)$ as follows. 
$$
\begin{tikzcd}[row sep=2em, column sep=3em]
\V^{\alg{S}_2}\func{T}\C^{\alg{S}_2}(X_1,\val{v}_1) \arrow[r] & \V^{\alg{S}_2}\func{T}\C^{\alg{S}_2}(X_2,\val{v}_2) \arrow[dr] & \\
\V^{\alg{S}_2}\func{T}\C^{\alg{S}_1}(X_1,\val{v}_1) \arrow[u] \arrow[d] \arrow[r] & \V^{\alg{S}_2}\func{T}\C^{\alg{S}_1}(X_2,\val{v}_2) \arrow[u] \arrow[d] & \func{T}'(X_2,\val{v}_2)\\
\V^{\alg{S}_1}\func{T}\C^{\alg{S}_1}(X_1,\val{v}_1) \arrow[r]& \V^{\alg{S}_1}\func{T}\C^{\alg{S}_1}(X_2,\val{v}_2) \arrow[ur] & \\
\end{tikzcd}
$$ 
Here we have $\alg{S}_1 \leq \alg{S}_2$, all vertical arrows arise from inclusion mappings and all horizontal arrows are defined by application of the corresponding functors to $f$. The triangle on the right commutes because $\func{T}'(X_2,\val{v}_2)$ is a cowedge and the two smaller rectangles commute by functoriality of $\func{T}$.
    
We define $\func{L}'$ on morphisms in a similar manner by duality. Since the definitions of $\func{T}'$ and $\func{L}'$ are completely dual, the following is obvious. 

\begin{thm}\label{thm:LiftingDualities}
If $\func{T}\colon \Stone \to \Stone$ and $\func{L}\colon \BA \to \BA$ are dual (that is, $\func{L}\cong \Pi\func{T}\Sigma$), then the corresponding liftings $\func{T}'\colon \StoneD \to \StoneD$ and $\func{L}'\colon \var{A}\to \var{A}$ are dual as well (that is, $\func{L}'\cong \Pi'\func{T}'\Sigma'$).  
\end{thm}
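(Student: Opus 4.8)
The plan is to show directly that applying the contravariant equivalence $\Pi'$ to the coend defining $\func{T}'\Sigma'(\alg{A})$ produces precisely the end defining $\func{L}'(\alg{A})$. The two ingredients I would assemble first are the dualities between the subalgebra adjunctions recorded above, namely the natural isomorphisms $\Pi'\VS \cong \PS\Pi$ (expressing that $\VS$ and $\PS$ are dual) and $\CS\Sigma' \cong \Sigma\KS$. The latter I would derive from the duality of $\CS$ and $\KS$, i.e.\ from $\Pi\CS \cong \KS\Pi'$, by precomposing with $\Sigma'$ and using $\Pi'\Sigma' \cong \id$ to get $\Pi\CS\Sigma' \cong \KS$, and then applying $\Sigma$ together with $\Sigma\Pi \cong \id$. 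I would emphasize that all these isomorphisms are natural in the subalgebra parameter $\alg{S}$, so that they respect the bifunctorial indexing over $\mathbb{S}(\alg{D})^\mathrm{op}\times\mathbb{S}(\alg{D})$.

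The second ingredient is that $\Pi'$, being one half of the dual equivalence of Theorem~\ref{thm:semiPrimalDuality}, is continuous in the contravariant sense: it sends colimits in $\StoneD$ to limits in $\var{A}$, and in particular carries the coend defining $\func{T}'$ to the corresponding end, the swap from coend to end being exactly the variance reversal induced by the contravariant $\Pi'$. Thus
$$
\Pi'\func{T}'\Sigma'(\alg{A}) = \Pi'\int^{\alg{S}}\VS\func{T}\CS\Sigma'(\alg{A}) \cong \int_{\alg{S}}\Pi'\VS\func{T}\CS\Sigma'(\alg{A}).
$$

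From here the computation is a substitution. Applying the two duality isomorphisms from the first paragraph to the left and right ends of the integrand yields
$$
\int_{\alg{S}}\Pi'\VS\func{T}\CS\Sigma'(\alg{A}) \cong \int_{\alg{S}}\PS\Pi\func{T}\Sigma\KS(\alg{A}),
$$
and finally the hypothesis $\func{L}\cong\Pi\func{T}\Sigma$ turns the integrand into $\PS\func{L}\KS(\alg{A})$, whose end is $\func{L}'(\alg{A})$ by Definition~\ref{def:canonicalLiftingtop}. Chaining these isomorphisms gives $\Pi'\func{T}'\Sigma' \cong \func{L}'$.

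The step I expect to require the most care is justifying that $\Pi'$ transports the coend to the matching end \emph{as diagrams}, not merely objectwise: one must check that the cowedge structure maps of the coend (arising from the subalgebra inclusions, as in the wedge diagram following Definition~\ref{def:canonicalLiftingtop}) are carried by $\Pi'$, via the naturality of $\Pi'\VS\cong\PS\Pi$ and $\CS\Sigma'\cong\Sigma\KS$ in $\alg{S}$, to exactly the wedge structure maps of the end defining $\func{L}'$. Once this compatibility is verified, the isomorphism of the two (co)limits is forced by their universal properties, which is why the result can be regarded as essentially formal once the dualities of the subalgebra adjunctions are in place.
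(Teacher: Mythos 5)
Your proposal is correct and takes essentially the same route as the paper: the paper's entire proof is the one-line observation that the definitions of $\func{T}'$ and $\func{L}'$ are completely dual, so the claim is immediate. Your computation---pushing the coend through the contravariant equivalence $\Pi'$ to obtain an end, then substituting the subalgebra-adjunction dualities $\Pi'\VS \cong \PS\Pi$ and $\CS\Sigma' \cong \Sigma\KS$ together with the hypothesis $\func{L}\cong\Pi\func{T}\Sigma$, and checking compatibility of the (co)wedge maps---is precisely the verification that this one-liner compresses.
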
 

For example, the `semi-primal Jónsson-Tarski duality' due to Maruyama \cite{Maruyama2012} can be obtained from the (usual) Jónsson-Tarski duality by this method. To illustrate this, we first show that there is an easier description of $\func{T}'$, given that $\func{T}$ \emph{preserves mono- and epimorphisms}. If $\func{T}$ preserves monomorphisms, then it also preserves finite intersections in the sense of \cite[Proposition 2.1]{Trnkova1969} (the proof therein still works for $\Stone$ instead of $\Set$, which is easy to check). It also preserves restrictions of morphisms $f\colon X \to Y$ to $X_0 \subseteq X$ in the sense that $\func{T}(f{\mid}_{X_0}) = (\func{T}f){\mid}_{\func{T}X_0}$ if we identify $\func{T}X_0$ with a subset of $\func{X}$. If $\func{T}$ also preserves epimorphisms, then $\func{T}$ preserves images in the sense that $\func{T}(f(X)) = \func{T}f(\func{TX})$ for every $f\colon X \to Y$ (since $\Stone$ is a regular category in which all epimorphisms are regular).  

\begin{prop}\label{prop:LiftingStoneMonoEpi}
Let $\func{T}\colon \Stone \to \Stone$ preserve mono- and epimorphisms. Then the following functor $\hat{\func{T}}\colon \StoneD \to \StoneD$ is naturally isomorphic to the lifting $\func{T}'$. On objects $(X,\val{v})$, the functor $\hat{\func{T}}$ is defined by 
$$
\hat{\func{T}}(X,\val{v}) = (\func{T}(X), \hat{\val{v}}),
$$
where, for $Z\in \func{T}(X)$, considering $\func{T}\CS(X,\val{v})$ as subspace of $\func{T}(X)$, 
$$
\hat{\val{v}}(Z) = \bigwedge \{ \alg{S} \mid Z \in \func{T}\C^{\alg{S}}(X,\val{v}) \}.
$$
On morphisms, $\hat{\func{T}}$ acts precisely like $\func{T}$.   
\end{prop}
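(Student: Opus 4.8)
The plan is to show that $\hat{\func{T}}(X,\val{v})$ satisfies the universal property of the coend defining $\func{T}'(X,\val{v})$, following closely the template of the proof of Proposition~\ref{prop:CoendPresentation}. Throughout, since $\func{T}$ preserves monomorphisms, for each $\alg{S}\in\mathbb{S}(\alg{D})$ the image of $\func{T}$ applied to the inclusion $\iota_\alg{S}\colon \CS(X,\val{v})\hookrightarrow X$ is a monomorphism in $\Stone$, hence an injective continuous map from a compact space into a Hausdorff space, i.e.\ a closed embedding. This is exactly what lets us regard $\func{T}\CS(X,\val{v})$ as a closed subspace of $\func{T}(X)$, as in the statement.

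First I would record the key identity
$$
\{ Z\in\func{T}(X) \mid \hat{\val{v}}(Z)\leq \alg{S} \} = \func{T}\CS(X,\val{v}).
$$
The main point is that the family $\{\alg{S}'\mid Z\in\func{T}\C^{\alg{S}'}(X,\val{v})\}$ is closed under finite meets: since $\C^{\alg{S}_1}(X,\val{v})\cap \C^{\alg{S}_2}(X,\val{v}) = \C^{\alg{S}_1\wedge \alg{S}_2}(X,\val{v})$ (both sides are $\{x\mid\val{v}(x)\leq\alg{S}_1\wedge\alg{S}_2\}$) and $\func{T}$ preserves finite intersections (which follows from mono-preservation, as noted before the statement), membership of $Z$ in both $\func{T}\C^{\alg{S}_1}$ and $\func{T}\C^{\alg{S}_2}$ forces $Z\in\func{T}\C^{\alg{S}_1\wedge\alg{S}_2}$. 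As $\mathbb{S}(\alg{D})$ is finite, this family therefore has a least element, equal to $\hat{\val{v}}(Z)$, and in particular $Z\in\func{T}\C^{\hat{\val{v}}(Z)}(X,\val{v})$; the displayed identity then follows by monotonicity of $\alg{S}\mapsto \func{T}\C^{\alg{S}}$. Because the right-hand side is closed, this simultaneously verifies that $(\func{T}(X),\hat{\val{v}})$ is a genuine object of $\StoneD$ (the preimages $\hat{\val{v}}^{-1}(\alg{S}{\downarrow})$ are closed), and it makes the remaining verifications routine.

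Next I would build the cowedge $\omega_\alg{S}\colon \VS\func{T}\CS(X,\val{v})\to \hat{\func{T}}(X,\val{v})$ whose underlying map is the inclusion $\func{T}\CS(X,\val{v})\hookrightarrow \func{T}(X)$. Each $\omega_\alg{S}$ is a $\StoneD$-morphism because for $Z\in\func{T}\CS(X,\val{v})$ the identity above gives $\hat{\val{v}}(Z)\leq\alg{S}=\val{v}^\alg{S}(Z)$, and the cowedge condition holds since all comparison maps are inclusions into the single space $\func{T}(X)$. For universality, given any cowedge $c_\alg{S}\colon \VS\func{T}\CS(X,\val{v})\to (Y,\val{w})$, I would take $h$ to have underlying map $c_\alg{D}\colon \func{T}(X)\to Y$ (using $\C^\alg{D}(X,\val{v})=X$, so $\func{T}\C^\alg{D}(X,\val{v})=\func{T}(X)$). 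Exactly as in Proposition~\ref{prop:CoendPresentation}, the cowedge squares give $\val{w}(c_\alg{D}(Z))\leq\alg{S}$ whenever $Z\in\func{T}\CS(X,\val{v})$; taking the meet over all such $\alg{S}$ and using the identity of the second paragraph yields $\val{w}(c_\alg{D}(Z))\leq\hat{\val{v}}(Z)$, so $h$ is a well-defined $\StoneD$-morphism, and it is the unique one with $h\circ\omega_\alg{D}=c_\alg{D}$ since $\omega_\alg{D}$ is the identity on $\func{T}(X)$.

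This yields a canonical isomorphism $\func{T}'(X,\val{v})\cong\hat{\func{T}}(X,\val{v})$ for each object, whose underlying map is the identity on $\func{T}(X)$. Finally I would check naturality: on a morphism $f$, both $\func{T}'f$ and $\hat{\func{T}}f$ have underlying map $\func{T}f$ (for $\func{T}'$ this is how its action on morphisms was defined, via the cowedge built by applying $\func{T}$ to $f$; for $\hat{\func{T}}$ it holds by definition), so the comparison isomorphisms commute with the two functors and naturality is immediate from the uniqueness clause of the universal property. I expect the one genuinely load-bearing step to be the finite-intersection/closure argument of the second paragraph: it is what guarantees that $\hat{\val{v}}$ takes values in $\mathbb{S}(\alg{D})$, that its fibres are closed, and that it coincides with the valuation forced by the coend. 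Everything else is a direct transcription of the proof of Proposition~\ref{prop:CoendPresentation}.
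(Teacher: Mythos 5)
Your proof is correct and follows the same skeleton as the paper's own argument: realize each $\func{T}\CS(X,\val{v})$ as a closed subspace of $\func{T}(X)$ via mono-preservation, check that the inclusions are $\StoneD$-morphisms forming a cowedge (which only needs $\hat{\val{v}}(Z)\leq\alg{S}$ for $Z\in\func{T}\CS(X,\val{v})$, immediate from the definition of $\hat{\val{v}}$ as a meet), and verify universality by showing that the $\alg{D}$-component $c_\alg{D}$ of any competing cowedge is itself a $\StoneD$-morphism out of $(\func{T}(X),\hat{\val{v}})$, exactly as in Proposition~\ref{prop:CoendPresentation}; your ``key identity'' $\{Z\mid\hat{\val{v}}(Z)\leq\alg{S}\}=\func{T}\CS(X,\val{v})$ is the claim the paper compresses into ``$\CS\hat{\func{T}}(X,\val{v})\cong\func{T}\CS(X,\val{v})$''. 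The one place where you genuinely diverge is the treatment of morphisms. The paper verifies directly that $\func{T}f$ respects the lifted valuations, via $(\func{T}f)(\func{T}\CS(X_1,\val{v}_1))=\func{T}\big(f(\CS(X_1,\val{v}_1))\big)\subseteq\func{T}\CS(X_2,\val{v}_2)$, and this is the only point where preservation of epimorphisms (hence of images) is used. You instead read off from the $\alg{D}$-component of the cowedge defining $\func{T}'f$ that its underlying map is $\func{T}f$, and transport along the comparison isomorphisms, which carry identity underlying maps; this gives well-definedness of $\hat{\func{T}}$ on morphisms and naturality in one stroke, and never invokes epi-preservation. That is a legitimate and slightly more economical route, but you should state it as such: as written, you declare that $\hat{\func{T}}f=\func{T}f$ ``by definition'' before anything guarantees the inequality $\hat{\val{v}}_2(\func{T}f(Z))\leq\hat{\val{v}}_1(Z)$, whereas in your setup this inequality is a \emph{conclusion} of the transport argument, not a definition.

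One caveat, inherited from the paper rather than introduced by you: your justification of the key identity (and the paper's) uses preservation of intersections $\C^{\alg{S}_1}(X,\val{v})\cap\C^{\alg{S}_2}(X,\val{v})=\C^{\alg{S}_1\wedge\alg{S}_2}(X,\val{v})$ that may be \emph{empty}, while Trnkov\'a-type results for mono-preserving functors cover nonempty intersections; when the intersection is empty the meet $\hat{\val{v}}(Z)$ need not be attained and the identity can fail. Fortunately, contrary to your closing remark, this identity is not actually load-bearing: closedness of $\hat{\val{v}}^{-1}(\alg{S}{\downarrow})$ holds anyway because that set is a finite union of finite intersections of the closed sets $\func{T}\C^{\alg{R}}(X,\val{v})$, and your universality estimate $\val{w}(c_\alg{D}(Z))\leq\hat{\val{v}}(Z)$ needs only that $\hat{\val{v}}(Z)$ is defined as the meet of all $\alg{S}$ with $Z\in\func{T}\CS(X,\val{v})$, not that this meet is attained. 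So your proof survives intact even in the edge case where the identity fails, provided those two steps are rephrased to avoid it.
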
 

\begin{proof}
First note that $\hat{T}$ is well-defined on objects because $\CS\hat{\func{T}}(X,\val{v}) \cong \func{T}\CS(X,\val{v})$ and $\func{T}$ preserves intersections as discussed in the paragraph before the proposition. It is also well-defined on morphisms since, given $f\colon (X_1,\val{v}_1)\to (X_2,\val{v}_2)$, we have 
$$
(\func{T}f) (\func{T}\C^\alg{S}(X_1,\val{v}_1)) = \func{T} f(\C^\alg{S}(X_1,\val{v})) \subseteq \func{T} \C^\alg{S}(X_2,\val{v}_2),
$$
where in the first step we used that $\func{T}$ preserves images and restrictions as described above and in the second step we used that $\func{T}$ preserves inclusions up to isomorphisms and $f(\CS(X_1,\val{v}_1)) \subseteq \CS(X_2,\val{v}_2)$, which holds because $f$ is a morphism in $\StoneD$.

Next we show that $\hat{\func{T}}(X,\val{v})$ with the inclusions $i_\alg{S}\colon\VS \func{T} \CS (X,\val{v}) \hookrightarrow (\func{T}(X),\hat{\val{v}})$ is a cowedge over the diagram defining $\func{T}'$. Since $\func{T}$ preserves monomorphisms, we only need to make sure that the $i_\alg{S}$ are $\StoneD$-morphisms. But this is easy to see, since for $Z\in \func{T}\CS(X,\val{v})$ we have 
$$
\hat{\val{v}}(Z) = \bigwedge \{ \alg{R} \in \mathbb{S}(\alg{D}) \mid Z\in \func{T}\C^\alg{R}(X,\val{v}) \} \leq \alg{S}.
$$
For universality, simply note that, given another cowedge $c_\alg{S}\colon\VS \func{T} \CS (X,\val{v}) \to (Y,\val{w})$, the underlying map of $c_\alg{D}\colon \func{T}(X)\to Y$ also defines a morphism $\hat{\func{T}}(X,\val{v}) \to (Y,\val{w})$, and this morphism witnesses the universal property (due to an argument similar to the one in the proof of Proposition~\ref{prop:CoendPresentation}). 

The diagram after Definition~\ref{def:canonicalLiftingtop} was used to define $\func{T}'$ on morphisms $f\colon (X,\val{v}) \to (Y,\val{w})$. Since the diagram 
$$
\begin{tikzcd}[row sep=4.5em,column sep=4.5em]
\VS\func{T}\CS(X,\val{v}) \arrow[r, "\func{T}f{\mid}_{\CS(X,\val{v})}"] \arrow[d, hookrightarrow] & \VS\func{T}\CS(Y,\val{w}) \arrow[d, hookrightarrow] \\
\hat{\func{T}}(X,\val{v}) \arrow[r, "\func{T}f"'] &  \hat{\func{T}}(Y,\val{w})
\end{tikzcd}
$$            
commutes for every $\alg{S}\in \mathbb{S}(\alg{D})$, the morphisms $\func{T}'f$ and $\hat{\func{T}}f$ coincide by uniqueness.
\end{proof}
Due to this proposition, from now on we may not distinguish between $\hat{\func{T}}$ and $\func{T}'$ in our notation if $\func{T}$ preserves mono- and epimorphisms.

In particular, Proposition~\ref{prop:LiftingStoneMonoEpi} applies if $\func{T}$ is the dual of a functor $\func{L}\colon \BA \to \BA$ appearing in a concrete coalgebraic logic, that is, if $\func{L}$ has a presentation by operations and equations. 

\begin{cor}\label{cor:NiceLiftingifPresentation}
If $\func{L}\colon \BA \to \BA$ has a presentation by operations and equations, then it preserves mono- and epimorphisms. Therefore, for $\func{T} = \Sigma\func{L}\Pi$ the dual of $\func{L}$, the lifting $\func{T}'$ can be obtained as in Proposition~\ref{prop:LiftingStoneMonoEpi}. 
\end{cor}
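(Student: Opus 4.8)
The plan is to prove the two claims of Corollary~\ref{cor:NiceLiftingifPresentation} in sequence: first that a functor $\func{L}\colon \BA \to \BA$ with a presentation by operations and equations preserves both monomorphisms and epimorphisms, and second that this lets us invoke Proposition~\ref{prop:LiftingStoneMonoEpi} for the Stone-dual functor $\func{T} = \Sigma\func{L}\Pi$. The second part is essentially immediate once the first is established, so the real content is showing preservation of monos and epis.

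For preservation of epimorphisms, I would argue via sifted-colimit preservation. Recall from the characterization cited after Definition~\ref{defin:ConcreteCoalgebraicLogic} (namely \cite[Theorem 4.7]{KurzRosicky2012}) that $\func{L}$ has a presentation by operations and equations if and only if $\func{L}$ preserves sifted colimits. In a variety, regular epimorphisms are reflexive coequalizers, which are sifted colimits; since in $\BA$ every epimorphism is surjective and hence regular, preservation of sifted colimits immediately yields preservation of epimorphisms. For preservation of monomorphisms I would use the standard fact that a functor presented by operations and equations on a variety preserves filtered colimits (these too are sifted) and, more to the point, that in $\BA$ every monomorphism is a split mono on finitely generated subalgebras or, more robustly, that $\func{L}$ restricted to the finitely presentable objects determines it by left Kan extension along a sifted-colimit-dense inclusion. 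Concretely, one shows a monomorphism $m\colon \alg{B}_1 \hookrightarrow \alg{B}_2$ in $\BA$ can be written as a filtered colimit of split monomorphisms between finitely presented Boolean algebras (every Boolean algebra is a filtered colimit of its finite subalgebras, and inclusions of finite Boolean subalgebras split); applying $\func{L}$ and using that it preserves filtered colimits and split monos gives that $\func{L}m$ is a filtered colimit of monos, hence a mono since filtered colimits of monomorphisms are monomorphisms in a locally finitely presentable category.

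I expect the main obstacle to be the preservation of monomorphisms, since epi-preservation follows so cleanly from the reflexive-coequalizer description. The subtlety is that general functors do not preserve monos, so one genuinely needs the presentability structure of $\BA$: the key lemma is that every mono in $\BA$ is a filtered colimit of split monos between finitely presentable algebras, which rests on the facts that finite Boolean subalgebra inclusions split and that filtered colimits of monos are monos in a locally finitely presentable category. Once preservation of both classes is in hand, the corollary's second sentence is purely formal: $\func{T} = \Sigma\func{L}\Pi$ is a composite in which $\Sigma$ and $\Pi$ are the contravariant functors of Stone duality (Theorem~\ref{thm:semiPrimalDuality} with $\alg{D} = \alg{2}$), so each of them swaps monos with epis; consequently the composite $\func{T}$ preserves monomorphisms and epimorphisms, and Proposition~\ref{prop:LiftingStoneMonoEpi} applies to produce the explicit description $\hat{\func{T}}$ of the lifting $\func{T}'$.
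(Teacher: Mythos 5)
Your proof is correct, and its overall architecture matches the paper's; the difference lies in how the monomorphism half is discharged. For epimorphisms you argue exactly as the paper does: epis in $\BA$ are surjective, hence regular, hence reflexive coequalizers, and these are sifted colimits preserved by $\func{L}$, so $\func{L}e$ is again a coequalizer and thus epi. For monomorphisms the paper simply observes that sifted-colimit preservation makes $\func{L}$ finitary and then cites \cite[Lemma 6.14]{KurzPetrisan2010} for the fact that finitary endofunctors on $\BA$ preserve monos; you instead give a self-contained proof of that fact, writing an arbitrary mono as a filtered colimit of monos between finite Boolean algebras, noting that such monos split (a finite Boolean algebra is complete, hence injective in $\BA$ by Sikorski's theorem, so the identity extends along any mono out of it to give a retraction), and concluding via the facts that every functor preserves split monos and that filtered colimits of monos are monos in a locally finitely presentable category. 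This is essentially the argument underlying the cited lemma, so what your route buys is self-containedness rather than genuinely new content; your first, vaguer formulation of the key lemma (``split mono on finitely generated subalgebras'') is imprecise, but the concrete diagram-of-pairs construction you then describe is the right one and is sound. Finally, you make explicit the step the paper leaves implicit: since $\Sigma$ and $\Pi$ form a contravariant equivalence they interchange monos and epis, so $\func{T} = \Sigma\func{L}\Pi$ preserves both classes and Proposition~\ref{prop:LiftingStoneMonoEpi} applies to yield the concrete description of $\func{T}'$.
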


\begin{proof}
A functor $\func{L}$ has a presentation by operations and equations if and only if it preserves sifted colimits. Since every filtered colimit is sifted, $\func{L}$ is finitary and preserves monomorphisms due to \cite[Lemma 6.14]{KurzPetrisan2010}. 

If $e\colon \alg{B} \to \alg{C}$ is a (necessarily regular) epimorphism between Boolean algebras, then $\alg{C}$ is isomorphic to a quotient of $\alg{B}$ by a congruence (namely, the kernel of $e$). Such a quotient is a reflexive coequalizer, which is preserved by $\func{L}$. Therefore, $\func{L}e\colon \func{L}(\alg{B}) \to \func{L}(\alg{C})$ is a coequalizer in $\BA$, which implies that $\func{L}e$ is an epimorphism again. 
\end{proof}

Next we discuss two examples of dualities between a category of coalgebras on $\Stone$ and a category of algebras on $\BA$ which can be lifted in this way.   

\begin{exa}\label{exam:LiftingVietoris}
The \emph{Vietoris functor} $\mathcal{V}\colon \Stone \to \Stone$ is dual to the functor $\func{L}$ from Example~\ref{exam:ModalAlgebras}, whose algebras correspond to modal algebras (this algebra-coalgebra version of Jónsson-Tarski duality is due to \cite{KupkeKurzVenema2003}). By Proposition~\ref{prop:LiftingStoneMonoEpi}, it is easy to see that the lifting $\mathcal{V}'$ of $\mathcal{V}$ coincides with the functor described by Maruyama in \cite[Section 4]{Maruyama2012}. 

The category of coalgebras $\Coalg{\mathcal{V}'}$ is isomorphic to (see also Example~\ref{exam:LiftingPowerset}) the category of \emph{descriptive general $\alg{D}$-frames}, which are triples $(X,R,\val{v})$ where 
\begin{itemize}
\item $(X,\val{v}) \in \StoneD$, 
\item $(X,R)$ is a descriptive general frame, 
\item if $x_1Rx_2$ then $v(x_2) \leq v(x_1)$ (Compatibility). 
\end{itemize}
A morphism of descriptive general $\alg{D}$-frames $(X_1,R_1,\val{v}_1) \to (X_2,R_2,\val{v}_2)$ is a map $f\colon X_1 \to X_2$ which is both a $\StoneD$-morphism and a bounded morphism. 

By Theorem~\ref{thm:LiftingDualities}, the category $\Coalg{\mathcal{V}'}$ of descriptive general $\alg{D}$-frames is dually equivalent to the category $\Alg{\func{L}'}$ of algebras for the lifting of $\func{L}$ to $\var{A}$. Later on, we provide a concrete presentation of $\func{L}'$ by operations and equations (see Subsection~\ref{subsubs:classicalModalLogic}).    
\end{exa}

\begin{exa}\label{exam:DescriptiveNeighborhoodLifting}
A \emph{descriptive neighborhood frame} is a coalgebra for the functor $\var{D}\colon \Stone \to \Stone$ defined in \cite[Subsection 5.1]{BezhanishviliDeGroot2022}. Therein, it is also shown that it is the dual of the functor $\func{L}\colon \BA \to \BA$ which has a presentation by one unary operation $\bigtriangleup$ and the empty set of equations (this is also known as Došen duality due to \cite{Dosen1989}).

Therefore, we can use Proposition~\ref{prop:LiftingStoneMonoEpi} to find the lifting $\mathcal{D'}$of $\mathcal{D}$. The corresponding coalgebras are \emph{descriptive neighborhood $\alg{D}$-frames}, \emph{i.e.}, tuples $(X,(N_x)_{x\in X},\val{v})$ where 
\begin{itemize}
\item $(X,\val{v}) \in \StoneD$,
\item $(X, (N_x)_{x\in X})$ is a descriptive neighborhood frame,
\item If $\val{v}(x) = \alg{S}$, then there is a collection of clopens $M\subseteq \mathcal{P}(\CS(X,\val{v}))$ such that 
$$
Y \in N_x \Leftrightarrow Y\cap \CS(X,\val{v}) \in M.
$$
\end{itemize}
Again, a morphism of descriptive neighborhood $\alg{D}$-frames is a map which is both a $\StoneD$-morphism and a coalgebra morphism.     
\end{exa}

Our next goal is to show that if $\func{L}\colon \BA \to \BA$ has a presentation by operations and equations, the same is true for its lifting $\func{L}'$ (for now, we only focus on the existence of such a presentation). For this, we will need the following.   

\begin{lem}\label{lem:L'siftedcolimits}
Let $\func{L}\colon \BA \to \BA$ be a functor and let $\func{T} = \Sigma\func{L}\Pi$ be its dual. 
\begin{enumerate}[(i)]
\item If $\func{L}$ is finitary (\emph{i.e.}, preserves filtered colimits), then so is $\func{L}'$.
\item If $\func{L}$ preserves mono- and epimorphisms, then $\func{T}'$ preserves all equalizers which $\func{T}\U$ preserves.  
\end{enumerate} 
\end{lem}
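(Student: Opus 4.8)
The plan is to prove the two parts separately, exploiting the fact that $\func{L}'$ is defined pointwise as an end of functors built from $\func{L}$ via the subalgebra adjunctions, and that $\func{T}'$ is the dual coend construction.

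For part (i), I would argue that $\func{L}'$ preserves filtered colimits because it is assembled from $\func{L}$ and the adjunction functors $\PS$ and $\KS$ in a way that interacts well with colimits. The key observations are as follows. The functors $\KS\colon \var{A}\to\BA$ are left adjoints (dual to the right-adjoints $\CS$ under the respective dualities, or rather, one should check the variance carefully using Figure~\ref{fig:1}); in any case, since $\mathbb{S}(\alg{D})$ is a \emph{finite} poset, the end $\int_{\alg{S}} \PS\func{L}\KS(\alg{A})$ is in fact a \emph{finite limit} — a finite product of the $\PS\func{L}\KS(\alg{A})$ subject to finitely many equalizer conditions coming from the inclusions $\alg{S}_1\leq\alg{S}_2$. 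First I would establish that each $\KS$ preserves filtered colimits: this should follow because $\KS$ takes an algebra to the Boolean skeleton of an appropriate quotient, and both "quotient by a fixed congruence scheme" and "Boolean skeleton" are definable by operations/equations preserved by filtered colimits (the skeleton $\{a \mid T_1(a)=a\}$ is carved out by a finitary equation). Then $\func{L}\KS$ preserves filtered colimits since $\func{L}$ does by hypothesis, and $\PS$ preserves them as well (being definable via Boolean powers, equivalently as a left adjoint or via a finitary construction). Finally, since filtered colimits commute with finite limits in any variety (a standard fact in locally finitely presentable categories, e.g. from \cite{AdamekRosickyVitale2010}), the finite-limit formula for $\func{L}'$ commutes with filtered colimits, giving that $\func{L}'$ is finitary.

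For part (ii), I would work on the topological side with $\func{T}'$, using its description as $\int^{\alg{S}} \VS\func{T}\CS$, and since $\func{L}$ preserves mono- and epimorphisms, so does $\func{T}=\Sigma\func{L}\Pi$ (dualizing, via Corollary~\ref{cor:NiceLiftingifPresentation} the relevant preservation properties transfer across the duality), whence Proposition~\ref{prop:LiftingStoneMonoEpi} applies and I may use the concrete description $\func{T}'=\hat{\func{T}}$ with $\hat{\func{T}}(X,\val{v})=(\func{T}(X),\hat{\val{v}})$ acting as $\func{T}$ on morphisms. The claim to prove is that $\func{T}'$ preserves any equalizer that $\func{T}\U$ preserves. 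So suppose $e\colon (E,\val{v}_E)\to(X,\val{v}_X)$ is the equalizer in $\StoneD$ of a parallel pair $f,g\colon(X,\val{v}_X)\to(Y,\val{v}_Y)$. The plan is to show $\func{T}'e$ is the equalizer of $\func{T}'f,\func{T}'g$. Since $\U$ creates equalizers (the forgetful functor $\U=\func{C}^\alg{D}$ reflects and the underlying Stone-space equalizer lifts along $\val{v}$), the underlying map of $e$ is the $\Stone$-equalizer of $\U f,\U g$; applying the hypothesis that $\func{T}\U=\U\func{T}'$ preserves this equalizer, the underlying map of $\func{T}'e$ is the $\Stone$-equalizer of $\U\func{T}'f,\U\func{T}'g$. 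It then remains to check that this underlying equalizer carries exactly the right $\StoneD$-structure $\hat{\val{v}}$, i.e. that the valuation on $\func{T}(E)$ induced as a subspace of $\func{T}(X)$ agrees with $\hat{\val{v}}_E$ computed intrinsically; this uses that $\func{T}$ preserves the relevant intersections and restrictions (as recorded in the paragraph preceding Proposition~\ref{prop:LiftingStoneMonoEpi}), so that $\func{T}\CS(E,\val{v}_E)$ sits inside $\func{T}\CS(X,\val{v}_X)$ compatibly.

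The main obstacle I anticipate is precisely this last valuation-matching step in part (ii): one must verify that forming the equalizer commutes with the meet-of-subalgebras formula defining $\hat{\val{v}}$, which amounts to checking that for $Z\in\func{T}(E)\subseteq\func{T}(X)$ the two expressions $\bigwedge\{\alg{S}\mid Z\in\func{T}\C^\alg{S}(E,\val{v}_E)\}$ and $\bigwedge\{\alg{S}\mid Z\in\func{T}\C^\alg{S}(X,\val{v}_X)\}$ coincide. This relies on $\C^\alg{S}$ commuting with the equalizer and on $\func{T}$ preserving the induced inclusions and intersections, so the finite-meet/preservation bookkeeping is where the real content lies; once that is in place, uniqueness of equalizers in $\StoneD$ (which are created by $\U$) closes the argument.
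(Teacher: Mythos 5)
Your proposal is correct and takes essentially the same route as the paper: for (i), the end defining $\func{L}'$ over the finite poset $\mathbb{S}(\alg{D})$ is a finite limit, $\KS$ preserves colimits as a left adjoint, $\func{L}$ and $\PS$ are finitary, and filtered colimits commute with finite limits; for (ii), the concrete description of $\func{T}'$ from Proposition~\ref{prop:LiftingStoneMonoEpi} together with the fact that equalizers in $\StoneD$ are Stone equalizers carrying the restricted valuation reduces the claim to exactly the valuation-matching computation you isolate, which the paper settles using that $\func{T}$ preserves finite intersections and that $\CS(E,\val{w}) = \CS(X,\val{v}_1)\cap E$. The only blemish is your parenthetical suggestion that $\PS$ is finitary \emph{qua} left adjoint — it is in fact the right adjoint of $\KS$, and its finitarity requires the separate Boolean-power argument the paper imports from its predecessor — but since you also offer a correct justification, this does not affect the soundness of your argument.
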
   

\begin{proof}
(i): The functor $\KS$ preserves all colimits because it is a left-adjoint, $\func{L}$ is finitary by assumption and it is shown as in the proof of \cite[Theorem 4.11]{KurzPoigerTeheux2023} that $\PS$ is finitary. Since, in addition, filtered colimits commute with finite limits, this implies that $\func{L}'$ is finitary as well.

(ii): Let $f,g\colon (X_1,\val{v}_1) \to (X_2,\val{v}_2)$ be two $\StoneD$-morphisms. It is easy to check that the equalizer of $f$ and $g$ is given by $(E,\val{w})$ where $E\subseteq X_1$ is the equalizer of $f$ and $g$ in $\Stone$ and $\val{w}$ is the restriction of $\val{v}_1$ to $E$. Now, assuming that $\func{T}(E)$ is the equalizer of $\func{T}f$ and $\func{T}g$ in $\Stone$, we show that $(\func{T}(E), \hat{\val{w}})$ (in the notation of Proposition~\ref{prop:LiftingStoneMonoEpi}) is the corresponding equalizer in $\StoneD$. By definition, for $Z\in \func{T}(E)$ we have 
\begin{align*}
\hat{\val{w}}(Z) & = \bigwedge\{ \alg{S} \mid Z \in \func{T}\CS(E,\val{w}) \} \\
 &    = \bigwedge\{ \alg{S} \mid Z\in \func{T}\CS(X,\val{v}_1) \cap \func{T}(E) \} \\
 & = \bigwedge \{ \alg{S} \mid Z\in \func{T}\CS(X,\val{v}_1) \} = \hat{\val{v}}_1(Z), 
\end{align*}    
where we used that $\func{T}$ preserves finite intersections and $\CS(E,\val{w}) = \CS(X,\val{v}_1)\cap E$ since $\val{w}$ is the restriction of $\val{v}_1$. Thus $\hat{\val{w}}$ is the restriction of $\hat{\val{v}}_1$ to $\func{T}(E)$, finishing the proof. 
\end{proof}

We can now easily conclude the following. 

\begin{cor}\label{cor:L'haspresentation}
If $\func{L}\colon \BA \to \BA$ has a presentation by operations and equations, then the same is true for the lifting $\func{L}'\colon \var{A} \to \var{A}$. 
\end{cor}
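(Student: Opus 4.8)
The plan is to use the categorical characterization of presentable functors recalled after Definition~\ref{defin:ConcreteCoalgebraicLogic}: by \cite[Theorem 4.7]{KurzRosicky2012}, an endofunctor on a variety admits a presentation by operations and equations precisely when it preserves sifted colimits. Thus it suffices to show that $\func{L}'$ preserves sifted colimits. I would invoke the standard fact (see \cite{AdamekRosickyVitale2010}) that a functor between varieties preserves sifted colimits if and only if it preserves both filtered colimits and reflexive coequalizers, and treat these two classes separately. Since $\func{L}$ has a presentation, it already preserves sifted colimits, hence filtered colimits and reflexive coequalizers; moreover, by Corollary~\ref{cor:NiceLiftingifPresentation}, $\func{L}$ preserves mono- and epimorphisms, so both parts of Lemma~\ref{lem:L'siftedcolimits} are available.

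Filtered colimits are immediate: since $\func{L}$ is in particular finitary, Lemma~\ref{lem:L'siftedcolimits}(i) gives that $\func{L}'$ is finitary as well.

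For reflexive coequalizers I would pass to the dual side. By Theorem~\ref{thm:LiftingDualities} we have $\func{L}' \cong \Pi'\func{T}'\Sigma'$ with $\func{T} = \Sigma\func{L}\Pi$, and since $\Pi'$ and $\Sigma'$ constitute a dual equivalence (Theorem~\ref{thm:semiPrimalDuality}), $\func{L}'$ preserves reflexive coequalizers if and only if $\func{T}'$ preserves reflexive equalizers. By Lemma~\ref{lem:L'siftedcolimits}(ii), $\func{T}'$ preserves every equalizer that $\func{T}\U$ preserves, so it remains to verify that $\func{T}\U$ preserves reflexive equalizers. For this I would combine two observations: first, the forgetful functor $\U$ preserves equalizers, since (as computed in the proof of Lemma~\ref{lem:L'siftedcolimits}(ii)) the equalizer of a pair in $\StoneD$ has underlying Stone space the equalizer of the underlying pair; and second, $\func{T} = \Sigma\func{L}\Pi$ preserves reflexive equalizers in $\Stone$, because Stone duality turns them into reflexive coequalizers in $\BA$, which $\func{L}$ preserves, and then back into reflexive equalizers. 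Composing these, $\func{T}\U$ preserves reflexive equalizers, hence so does $\func{T}'$, and dually $\func{L}'$ preserves reflexive coequalizers.

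Putting the two cases together shows that $\func{L}'$ preserves all sifted colimits, and the cited characterization then yields the desired presentation. The step I expect to require the most care is the reflexive-coequalizer case: one must check that the equalizers actually needed on the $\Stone$ side are exactly the reflexive ones, and that these lie in the class preserved by $\func{T}\U$ --- which is precisely why both the preservation of equalizers by $\U$ and the preservation of reflexive equalizers by $\func{T}$ (coming, via Stone duality, from $\func{L}$'s preservation of reflexive coequalizers) are needed in tandem.
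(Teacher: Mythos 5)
Your proposal is correct and takes essentially the same route as the paper's proof: the characterization of presentability via sifted colimits, the decomposition into filtered colimits plus reflexive coequalizers, Lemma~\ref{lem:L'siftedcolimits}(i) for the filtered part, and Lemma~\ref{lem:L'siftedcolimits}(ii) combined with duality for the coequalizer part --- you merely spell out the paper's parenthetical claim that $\func{T}\U$ preserves the relevant equalizers (via $\U$ preserving equalizers and Stone duality transporting $\func{L}$'s preservation of reflexive coequalizers to $\func{T}$). One terminological correction: the equalizers dual to reflexive coequalizers are \emph{coreflexive} equalizers (pairs with a common retraction, not a common section), which is the term the paper uses; your argument is right once ``reflexive equalizer'' is read in this dual sense.
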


\begin{proof}
By \cite[Theorem 4.7]{KurzRosicky2012} we know that $\func{L}'$ has a presentation if and only if it preserves sifted colimits. By \cite[Theorem 7.7]{AdamekRosickyVitale2010} we know that $\func{L}'$ preserves sifted colimits if and only if it preserves filtered colimits and reflexive coequalizers. Since $\func{L}$ has a presentation, it preserves filtered colimits and reflexive coequalizers. By part (i) of Lemma~\ref{lem:L'siftedcolimits} we know that $\func{L}'$ preserves filtered colimits and by part (ii) we know that $\func{L}'$ preserves reflexive coequalizers since $\func{T}'$ preserves coreflexive equalizers (because $\func{T}\U$ preserves them as well).  
\end{proof}

Thus we showed that a presentation of $\func{L}'$ necessarily \emph{exists}, however our proof does not indicate what such a presentation actually `looks like'.
In Section~\ref{sec:MVConcrete}, we describe some circumstances under which one can directly obtain a presentation of $\func{L}'$ from a given presentation of $\func{L}$. 

But before we deal with these matters, which essentially concern \emph{concrete} coalgebraic logics, in the following section we first describe how to lift classical \emph{abstract} coalgebraic logics to the semi-primal level.  
\section{Many-valued abstract coalgebraic logic}\label{sec:ManyvaluedCoalgModalLogic} 
In this section, we discuss how to lift classical abstract coalgebraic logics $(\func{L},\delta)$ (recall Definition~\ref{defin:AbstractCoalgebraicLogic}) to many-valued abstract coalgebraic logics $(\func{L}',\delta')$, where $\func{L}'$ is an endofunctor on the variety $\var{A}$.  

We will be more flexible in our choice of base category on the semantical side. It is common in many-valued modal logic (see, \emph{e.g.}, \cite{Bou2011}) to interpret formulas on Kripke frames, making $\Set$ a natural candidate for this base category. However, in light of the previous section, a category closer to $\StoneD$ seems preferable from a duality-theoretical point of view. Our compromise is the category $\SetD$, which provides a natural environment for coalgebraic logics over $\var{A}$. More `familiar' results on many-valued coalgebraic logics for functors $\func{T}\colon \Set\to \Set$ are immediate consequences of the corresponding results for the lifted functors $\func{T}'\colon \SetD \to \SetD$ (see Corollary~\ref{cor:ReplaceT'byT}). In the case where $\alg{D}$ is primal, the corresponding results already appeared in \cite{KurzPoiger2023} by the first two authors.     
\subsection{Lifting Endofunctors on Set}\label{subsec:LiftingFunctors} Just like $\Set$ arises from $\Stone$ by `forgetting topology', the following category arises from $\StoneD$ in this sense.   
\begin{defi}\label{defin:SetL}
The category $\SetD$ has objects $(X,v)$, where $X$ is a set and $v\colon X \to \mathbb{S}(\alg{D})$ is a map assigning a subalgebra of $\alg{D}$ to every point of $X$. A morphism $f\colon (X_1,v_1) \to (X_2, v_2)$ is a map $X_1 \to X_2$ which satisfies 
$$ v_2(f(x)) \leq v_1(x) $$
for every $x\in X$.  
\end{defi}
We define functors $\func{S}'\colon \var{A} \to \SetD$ and $\func{P}'\colon \SetD\to \var{A}$ (similar to the functors $\Sigma'$ and $\Pi'$ defined in Subsection~\ref{subs:SemiPrimalDuality}) which form a dual adjunction between $\SetD$ and $\var{A}$.

The functor $\func{S}'\colon \var{A} \to \SetD$ is defined on objects $\alg{A}\in \var{A}$ by 
$$ \func{S}'(\alg{A}) = \big(\var{A}(\alg{A},\alg{D}), \im\big),$$           
where every homomorphism $u\colon \alg{A} \to \alg{D}$ gets assigned its image $\im(u) = u(A) \leq \alg{D}$. To a homomorphism $h\colon \alg{A} \to \alg{A'}$ the functor $\func{S}'$ assigns the $\SetD$-morphism 
$\func{S}' h \colon \var{A}(\alg{A}',\alg{D}) \to \var{A}(\alg{A},\alg{D})$ given by 
$u \mapsto u \circ h. $

The functor $\func{P}' \colon \SetD \to \var{A}$ is defined on objects by 
$$
\func{P}'(X,v) = \prod_{x\in X}v(x).
$$
To a morphism $f\colon (X_1,v_1) \to (X_2,v_2)$, the functor $\func{P}'$ assigns the homomorphism $\alpha \mapsto \alpha \circ f$. 
These functors define a dual adjunction. The corresponding natural transformations $\eta'\colon 1_\var{A} \Rightarrow \func{P'S'}$ and $\varepsilon'\colon 1_{\SetD} \Rightarrow \func{S'P'}$ are given by evaluations, that is, for for all $\alg{A}\in \var{A}$ and $(X,v)\in \SetD$ we have 
\begin{align*}
\eta'_\alg{A}\colon \alg{A} &\to \prod_{u\in \var{A}(\alg{A},\alg{D})}\im(u)  &  \varepsilon'_{(X,v)}\colon X &\to \var{A}\big(\prod_{x\in X}v(x),\alg{D}\big) \\
a &\mapsto \mathsf{ev}_a & x &\mapsto \mathsf{ev}_x
\end{align*}
where $\mathsf{ev}_a(u) = u(a)$ and $\mathsf{ev}_x(\alpha) = \alpha(x)$. Note that if $\alg{D} = \alg{2}$ is the two-element Boolean algebra, this adjunction coincides with the one given by $\func{S}$ and $\func{P}$ in Example~\ref{exam:classicalModalLogic}. We will keep this notation and also use $\eta$ and $\varepsilon$ instead of $\eta'$ and $\varepsilon'$ in this case.  

Analogous to Section~\ref{sec:LiftingDualities} (see Figure~\ref{fig:1}), there exists an adjunction for every subalgebra $\alg{S} \in \mathbb{S}(\alg{D})$. Slightly abusing notation, we again denote the functors involved by $\VS\colon \Set \to \SetD$ and $\CS\colon \SetD \to \Set$ (see Figure~\ref{fig:2}). 
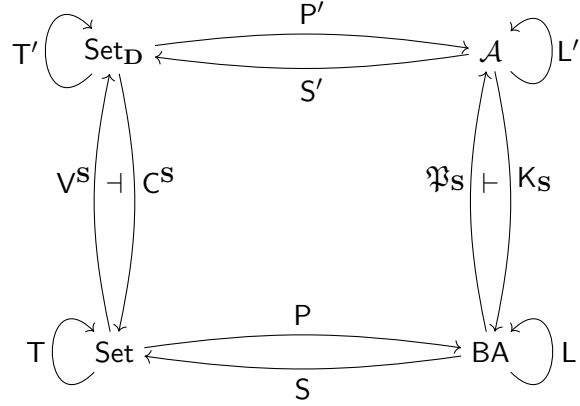
\begin{figure}[ht]
$$
\begin{tikzpicture}
  \node (Stone) at (-2.5,0) {$\Set$};
  \node (BA) at (2.5,0) {$\BA$};
  \node (StoneD) at (-2.5,4) {$\SetD$};
  \node (A) at (2.5,4) {$\var{A}$};
  \node (lad) at (-2.5,2.25) {$\dashv$};
  \node (rad) at (2.5,2.25) {$\vdash$};
  \draw [->, bend left = 8] (Stone) to node[above]{$\func{P}$}  (BA);
  \draw [->, bend left = 8] (BA) to node[below]{$\func{S}$}  (Stone);
  \draw [->, bend left = 8] (StoneD) to node[above]{$\func{P}'$}  (A);
  \draw [->, bend left = 8] (A) to node[below]{$\func{S}'$}  (StoneD);
  \draw [->, bend left = 12] (Stone) to node[above]{$\VS\phantom{.....}$}  (StoneD);
  \draw [->, bend left = 12] (StoneD) to node[above]{$\phantom{......}\CS$}  (Stone);
  \draw [->, bend left = 12] (BA) to node[above]{$\PS\phantom{......}$}  (A);
  \draw [->, bend left = 12] (A) to node[above]{$\phantom{......}\KS$}  (BA);
  \draw [->,out=225,in=135,looseness=5] (Stone) to node[pos = 0.35,above]{$\func{T}\phantom{.....}$}  (Stone);
  \draw [->,out=225,in=135,looseness=5] (StoneD) to node[pos = 0.35,above]{$\func{T}'\phantom{......}$}  (StoneD);
  \draw [->,out=-45,in=45,looseness=5] (BA) to node[pos = 0.35,above]{$\phantom{.....}\func{L}$}  (BA);
  \draw [->,out=-45,in=45,looseness=5] (A) to node[pos = 0.35,above]{$\phantom{.....}\func{L}'$}  (A);
\end{tikzpicture}
$$
\caption{Lifting coalgebraic logics via the subalgebra adjunctions.}
\label{fig:2}
\end{figure} 

In the case where $\alg{S} = \alg{D}$ is the entire algebra, the functor $\CS$ is the forgetful functor and will be denoted by $\U\colon \SetD \to \Set$. On the algebraic side, $\KS$ is then the Boolean skeleton $\Skel\colon \var{A} \to \BA$ which was already defined in Section~\ref{sec:LiftingDualities}.

Of course, there is no longer a dual equivalence between the left-hand and right-hand side of Figure~\ref{fig:2} (however, restricted to the finite level there still is). Fortunately, the following useful relationships still hold. 

\begin{lem}\label{lem:SetDAdjunctionsandBAAdjunctions}
The functors involved in Figure~\ref{fig:2} and the natural transformations $\varepsilon, \eta, \varepsilon', \eta'$ satisfy the following properties.
\begin{enumerate}[(i)]
\item For all $\alg{S}\in \mathbb{S}(\alg{D})$, there is a natural isomorphism $\Theta^{\alg{S}}\colon \func{P'}\VS\Rightarrow \PS\func{P}$, with components 
$$
\Theta_X^\alg{S}(\alpha)(s) = T_s(\alpha),
$$
where $\alpha\in \func{P'}\VS(X) \cong \alg{S}^X$ and $T_s$ are the unary terms defined in Theorem~\ref{thm:semiprimal_characterizations}. 
\item For all $\alg{S}\in \mathbb{S}(\alg{D})$, there is a natural isomorphism $\Psi^{\alg{S}}\colon \func{P}\CS\Rightarrow \KS\func{P'}$ given by the identification of $\alg{2}^{\CS(X,v)}$ with 
$$
\KS\big(\prod_{X} v(x)\big) \cong \Skel \big(\prod_{\CS(X,v)} v(x)\big) \cong \prod_{ \CS(X,v)} \Skel\big(v(x)\big) \cong \alg{2}^{\CS(X,v)}.
$$
In particular, for $\alg{S} = \alg{D}$, there is a natural isomorphism $\Psi\colon \func{P}\U\Rightarrow \Skel\func{P'}$.
\item There is a natural isomorphism $\Phi\colon \U\func{S}' \Rightarrow \func{S}\Skel$ given by restriction
\begin{align*}
\Phi_\alg{A}\colon \var{A}\big(\alg{A},\alg{D}\big) &\to \BA\big(\Skel(\alg{A}),\alg{2}\big) \\
u &\mapsto u{\mid}_{\Skel(\alg{A})}.
\end{align*}   
\item For $\Psi$ and $\Phi$ from (ii) and (iii) the identities $$\varepsilon\U = \func{S}\Psi \circ \Phi\func{P}' \circ \U\varepsilon'\text{ and } \Skel\eta' = \Psi\func{S}' \circ \func{P}\Phi \circ \eta\Skel
$$ hold.
\end{enumerate} 
\end{lem}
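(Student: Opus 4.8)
The plan is to establish the three natural isomorphisms (i)--(iii) directly from the definitions of the functors and the semi-primal unary terms $T_d$ of Theorem~\ref{thm:semiprimal_characterizations}, and then to verify the two identities in (iv) by chasing a single element through each composite, once the components of $\Theta^\alg{S}$, $\Psi^\alg{S}$ and $\Phi$ are made explicit. For (i), I would first observe that $\func{P}'\VS(X) = \prod_{x\in X}\alg{S} = \alg{S}^X$ and recall the description of the Boolean power $\PS\func{P}(X) = \alg{S}[\func{P}(X)]$ as the algebra of $\alg{S}$-labelled partitions of unity of $\func{P}(X)$; since $\alg{S}$ is finite, these correspond bijectively to arbitrary maps $\alpha\colon X \to S$ via the finite partition $(\alpha^{-1}(s))_{s\in S}$. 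The assignment $\Theta^\alg{S}_X(\alpha)(s) = T_s(\alpha)$ is exactly this correspondence, because $T_s(\alpha)$ is the characteristic function of $\alpha^{-1}(s)$. It is a bijection, it is a homomorphism because all operations are computed coordinatewise and the identities $\bigvee_{s\in S} T_s(x)\approx 1$ and $T_s(x)\wedge T_{s'}(x)\approx 0$ (for $s\neq s'$) hold in $\alg{D}$ hence in $\var{A}$, and naturality in $X$ is immediate.

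For (iii), the map $\Phi_\alg{A}$ is well defined since every $\var{A}$-homomorphism $u$ commutes with $T_1$, so it sends $\Skel(\alg{A})$ into $\Skel(\alg{D}) = \{0,1\}\cong\alg{2}$ and preserves the inherited Boolean operations; I would prove injectivity by recovering $u$ from its restriction via the equivalence $u(a) = d \Leftrightarrow u(T_d(a)) = 1$ together with $T_d(a)\in\Skel(\alg{A})$. Surjectivity is the one step that genuinely uses semi-primality: given $g\in\BA(\Skel(\alg{A}),\alg{2})$, the partition identities above guarantee that for each $a$ there is a unique $d$ with $g(T_d(a)) = 1$, and I would check that $a\mapsto d$ defines a $\var{A}$-homomorphism extending $g$ by verifying preservation of each basic operation, using that every such operation is term-definable in $\alg{D}$. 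For (ii) it suffices to justify the three displayed isomorphisms: the middle one is that $\Skel$ preserves products (as $T_1$ acts coordinatewise), the right-hand one is that $\Skel(v(x)) = \{0,1\}\cong\alg{2}$ for every subalgebra $v(x)\leq\alg{D}$, and the left-hand one is the value of $\KS$ on the product $\prod_X v(x)$, whose relevant quotient is the projection onto the coordinates in $\CS(X,v) = \{x\mid v(x)\leq\alg{S}\}$ (invoking the description of $\KS$ from \cite{KurzPoigerTeheux2023}); naturality is inherited from each factor.

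For the two identities in (iv) I would compute both composites on elements, having first unwound (ii) at $\alg{S} = \alg{D}$: there $\CS = \U$, $\KS = \Skel$, and $\Psi_{(X,v)}$ identifies a subset $Y\subseteq X$ with the Boolean element $\chi_Y\in\Skel\prod_X v(x)$ whose coordinate at $x$ is $1$ exactly when $x\in Y$. For the first identity, chasing $x\in X$ through $\U\varepsilon'$, $\Phi\func{P}'$ and $\func{S}\Psi$ sends $x$ to $\mathsf{ev}_x$, then to its restriction $\mathsf{ev}_x{\mid}_{\Skel}$, and finally to the Boolean homomorphism $Y\mapsto \mathsf{ev}_x(\chi_Y) = \chi_Y(x)$, which is $1$ iff $x\in Y$; this is precisely the principal ultrafilter $\varepsilon_X(x)$, as required.

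The second identity is the dual computation: for $a\in\Skel(\alg{A})$, applying $\eta\Skel$ yields $\{g\mid g(a) = 1\}$, $\func{P}\Phi$ takes its preimage $\{u\mid u(a) = 1\}$ under $\Phi_\alg{A}$, and $\Psi\func{S}'$ sends this to $\chi_{\{u\mid u(a) = 1\}}\in\Skel\func{P}'\func{S}'(\alg{A})$; on the other side, $\Skel(\eta'_\alg{A})(a) = \mathsf{ev}_a$ has coordinate $u(a)\in\{0,1\}$ at $u$, so it equals the same element $\chi_{\{u\mid u(a)=1\}}$. I expect the main obstacle to be purely organizational: keeping the contravariant directions of $\func{S}$, $\func{P}$ and $\Skel$ straight and matching them with the explicit components of $\Psi$ and $\Phi$, since the computations themselves collapse quickly once this bookkeeping is fixed. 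The only conceptually substantial point is the surjectivity argument in (iii), where semi-primality of $\alg{D}$ is indispensable.
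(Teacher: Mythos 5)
Your proposal is correct and takes essentially the same approach as the paper: the partition-of-unity description of the Boolean power for (i), the description of $\KS$ together with limit-preservation of $\Skel$ and $\Skel(\alg{S})\cong\alg{2}$ for (ii), the restriction map recovered via the terms $T_d$ for (iii), and evaluation-chasing for (iv). The only difference is presentational: the paper delegates the homomorphism check in (i), all of (iii), and all of (iv) to citations of \cite{KurzPoigerTeheux2023} and \cite{KurzPoiger2023}, whereas you write out exactly those verifications explicitly.
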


\begin{proof}
For part (i), showing that $\Theta_X$ is a homomorphism is analogous to \cite[Proposition 4.8]{KurzPoigerTeheux2023}. It is injective because $\alpha_1 \neq \alpha_2$ means there is some $x\in X$ such that $\alpha_1(x) \neq \alpha(x)$. Then, for $s = \alpha(x)$ we have $T_s(\alpha_1) \neq T_s(\alpha_2)$. 

To see that $\Theta_X$ is surjective, let $\xi\colon S \to 2^X$ be in $\PS \func{P}(X)$. By definition of the Boolean power this means that, in every component $x\in X$, there is a unique $s_x\in \alg{S}$ with $\xi(s_x)(x) = 1$. Thus $\alpha(x) = s_x$ is in the preimage of $\xi$. 
  
For naturality, we need to show that the following diagram commutes for any morphism $f\colon Y \to X$.
$$
\begin{tikzcd}[row sep=4.5em,column sep=4.5em]
\func{P'}\VS(X) \arrow[r, "\Theta_X"] \arrow[d, "\func{P'}\VS f"'] & \PS\func{P}(X) \arrow[d, "\PS\func{P}f"] \\
\func{P'}\VS(Y) \arrow[r, "\Theta_Y"'] &  \PS\func{P}(Y)
\end{tikzcd}
$$             
Given $\alpha\in \func{P}'\VS(X)\cong \alg{S}^X$, on the one hand we have $\PS\func{P}f(\theta_X(\alpha)) = \func{P}f \circ \Theta_X(\alpha)$, which sends $s\in S$ to $\func{P}f(T_s(\alpha)) = T_s(\alpha \circ f)$. On the other hand we have $\Theta_Y(\func{P'}\VS f(\alpha)) = \Theta_Y(\alpha \circ f)$ sends $s$ to $T_s(\alpha \circ f)$ as well. This finishes the proof of part (i).

The equations in (ii) follow from the results of \cite[Subsection 4.4]{KurzPoigerTeheux2023}, the fact that $\Skel$ preserves limits and $\Skel(\alg{S}) \cong \alg{2}$ holds for all $\alg{S}\in \mathbb{D}$. Naturality is easy to check by definitions. 

The proof of (iii) is completely analogous to that of \cite[Proposition 4.3]{KurzPoigerTeheux2023} and the proof of (iv) is completely analogous to that of \cite[Proposition 10(c)]{KurzPoiger2023}.
\end{proof}    

Exactly like in Definition~\ref{def:canonicalLiftingtop}, we can use these subalgebra adjunctions to lift an endofunctor $\func{T}\colon \Set \to \Set$ to $\func{T}'\colon \SetD\to \SetD$. 

\begin{defi}\label{def:canonicalLiftingSetD}
Let $\func{T}\colon \Set \to \Set$ be a functor. 
The \emph{lifting of $\func{T}$ to $\SetD$} is the functor $\func{T}'\colon \SetD \to \SetD$ defined on objects by
$$
\func{T}'(X,v) = \int^{\alg{S}\in\mathbb{S}(\alg{D})} \VS \func{T} \CS (X,v)
$$
and on morphisms as discussed in the paragraph after Definition~\ref{def:canonicalLiftingtop}.   
\end{defi}                 

Again, the canonical lifting $\func{T}'$ of $\func{T}$ can be described more concretely if the functor $\func{T}$ preserves mono- and epimorphisms. In particular, this is true for $\Set$-endofunctors which are \emph{standard} (that is, inclusion-preserving), and up to what it does on the empty set, every $\Set$-endofunctor is naturally isomorphic to one which is standard \cite{Trnkova1969}. The proof of Proposition~\ref{prop:LiftingStoneMonoEpi} can be adapted to obtain the following.

\begin{prop}\label{prop:LiftingSetStandard}
Let $\func{T}\colon \Set \to \Set$ preserve mono- and epimorphisms. Then, up to natural isomorphism, $\func{T}'$ is defined on objects by 
$$
\func{T}'(X,v) = (\func{T}(X), \hat{v}),
$$
where, for $Z\in \func{T}(X)$, considering $\func{T}\CS(X,v)$ as subspace of $\func{T}(X)$, 
$$
\hat{v}(Z) = \bigwedge \{ \alg{S} \mid Z \in \func{T}\C^{\alg{S}}(X,v) \}.
$$
On morphisms, $\func{T}'$ acts precisely like $\func{T}$.   
\end{prop}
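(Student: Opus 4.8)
The plan is to transcribe the proof of Proposition~\ref{prop:LiftingStoneMonoEpi} almost verbatim, replacing $\Stone$ by $\Set$ and $\StoneD$ by $\SetD$ throughout, since the subalgebra adjunctions $\VS \dashv \CS$ and the coend defining $\func{T}'$ are set up identically in the two settings. Write $\hat{\func{T}}$ for the candidate functor sending $(X,v)$ to $(\func{T}(X),\hat{v})$ and acting as $\func{T}$ on morphisms; the goal is to produce a natural isomorphism $\hat{\func{T}} \cong \func{T}'$. The $\Stone$ argument rests entirely on three preservation properties of $\func{T}$—preservation of finite intersections of subobjects, of restrictions $\func{T}(f{\mid}_{X_0}) = (\func{T}f){\mid}_{\func{T}X_0}$, and of images $\func{T}(f(X)) = (\func{T}f)(\func{T}X)$—so the first thing I would do is confirm each of these holds for $\func{T}\colon \Set \to \Set$ under the hypothesis that $\func{T}$ preserves monos and epis.

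These are in fact the original $\Set$-level statements on which the $\Stone$ versions were modelled. Preservation of monomorphisms yields preservation of finite intersections and of restrictions by \cite{Trnkova1969}, which is stated for $\Set$ in the first place, while preservation of epimorphisms yields preservation of images because $\Set$ is a regular category in which every epimorphism is split, hence regular. So every categorical ingredient of the $\Stone$ proof transfers to $\Set$, in several places more easily. The one point warranting genuine care—and the step I expect to be the main obstacle—is that $\CS(X,v)$ may be empty for small $\alg{S}$, so that $\func{T}\CS(X,v) = \func{T}(\emptyset)$ need not itself be empty; this is exactly why the discussion preceding the proposition foregrounds standardness and the behaviour of $\func{T}$ on $\emptyset$. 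Preservation of the monomorphism $\emptyset \hookrightarrow X$ nonetheless embeds $\func{T}(\emptyset)$ as a subset of $\func{T}(X)$, which is all that the meet defining $\hat{v}$ requires.

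With these facts in hand the remaining steps are routine adaptations. First $\hat{\func{T}}$ is well-defined on objects because $\CS\hat{\func{T}}(X,v) \cong \func{T}\CS(X,v)$ together with preservation of intersections, and on morphisms because, for $f\colon (X_1,v_1) \to (X_2,v_2)$, preservation of images and restrictions and the inclusion $f(\CS(X_1,v_1)) \subseteq \CS(X_2,v_2)$ give $(\func{T}f)(\func{T}\CS(X_1,v_1)) \subseteq \func{T}\CS(X_2,v_2)$. Next I would verify that $\hat{\func{T}}(X,v)$ with the inclusions $i_\alg{S}\colon \VS\func{T}\CS(X,v) \hookrightarrow (\func{T}(X),\hat{v})$ is a cowedge over the diagram defining $\func{T}'$; since $\func{T}$ preserves monos the only thing to check is that each $i_\alg{S}$ is a $\SetD$-morphism, which is immediate from $\hat{v}(Z) = \bigwedge\{\alg{R} \mid Z \in \func{T}\C^\alg{R}(X,v)\} \leq \alg{S}$ for $Z \in \func{T}\CS(X,v)$. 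Universality then follows exactly as in Proposition~\ref{prop:CoendPresentation}, the underlying map of $c_\alg{D}$ providing the mediating $\SetD$-morphism for any competing cowedge $c_\alg{S}$, and a commuting square identical to the one closing the proof of Proposition~\ref{prop:LiftingStoneMonoEpi} shows by uniqueness that $\func{T}'f$ and $\hat{\func{T}}f$ agree on morphisms.
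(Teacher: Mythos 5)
Your proposal is correct and takes essentially the same approach as the paper: the paper's own proof of this proposition consists exactly of the remark that the proof of Proposition~\ref{prop:LiftingStoneMonoEpi} can be adapted, and your transcription---confirming that Trnková's intersection/restriction results and the regular-category argument for images hold in $\Set$, then rerunning the well-definedness, cowedge, and universality arguments---is precisely that adaptation. Your additional care about the behaviour of $\func{T}$ on $\emptyset$ is a sensible refinement, consistent with the paper's preceding remarks on standard functors.
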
 

An obvious question related to modal logic is what the lifting of the powerset functor $\mathcal{P}\colon \Set \to \Set$ and the corresponding coalgebras look like.

\begin{exa}\label{exam:LiftingPowerset}
Since $\mathcal{P}$ is standard, by Proposition~\ref{prop:LiftingSetStandard} its lifting is given on objects by $\mathcal{P}'(X,v) = (\mathcal{P}(X),\hat{v})$, where, for $Y\subseteq X$ we have
$$
\hat{v}(Y) = \bigwedge\{ \alg{S} \mid Y\in \mathcal{P}\C^\alg{S}(X,v) \} = \bigvee \{ v(y) \mid y\in Y \},$$
where the second equation holds because both terms describe the smallest subalgebra $\alg{S}$ which satisfies $v(y) \leq \alg{S}$ for all $y\in Y$. 

Similarly to Example~\ref{exam:LiftingVietoris}, we define a \emph{Kripke $\alg{D}$-frame} to be a triple $(X,R,v)$ such that  
\begin{itemize}
\item $(X,R)$ is a Kripke frame,
\item $(X,v)$ is a member of $\SetD$, 
\item if $x_1Rx_2$ then $v(x_2) \leq v(x_1)$ (Compatibility).
\end{itemize}
A \emph{bounded $\alg{D}$-morphism} between Kripke $\alg{D}$-frames $(X_1,R_1,v_1)$ and $(X_2,R_2,v_2)$ is a map $f\colon X_1\to X_2$ which is both a bounded morphism $(X_1,R_1)\to (X_2,R_2)$ and a $\SetD$-morphism $(X_1,v_1) \to (X_2,v_2)$. 

The category thus arising is isomorphic to $\Coalg{\mathcal{P}'}$. Indeed, given a $\mathcal{P}'$-coalgebra $\gamma\colon (X,v) \to \mathcal{P}'(X,v)$, we define the corresponding Kripke $\alg{D}$-frame $(X,R_\gamma, v)$ by $x_1 R_\gamma x_2 \Leftrightarrow x_2 \in \gamma(x_1)$. The compatibility condition is satisfied because $\gamma$ being a $\SetD$-morphism implies 
$$ \hat{v}(\gamma(x)) = \bigvee \{ v(x') \mid x'\in \gamma(x) \} \leq v(x).$$

Conversely, to every Kripke $\alg{D}$-frame $(X,R,v)$ we can associate the $\mathcal{P}'$-coalgebra $\gamma_R \colon (X,v) \to \mathcal{P}'(X,v)$ given by $\gamma_R(x) = R[x]$. Again, the compatibility condition is what is needed to assure that this is a morphism in $\SetD$. 

In the case where $\alg{D} = \lucas_n$ is a \L ukasiewicz-chain, Kripke $\alg{D}$-frames appeared first in \cite[Section 7]{HansoulTeheux2013} (called `\emph{$\lucas_n$-valued frames}' therein).

The intended semantics of modal formulas over a Kripke $\alg{D}$-frames $(X,R,v)$ are the following. Given a set $\mathsf{Prop}$ of propositional variables, the corresponding valuations will only be those $\mathsf{Val}\colon X \times \mathsf{Prop} \to D$ which satisfy $\mathsf{Val}(x,p) \in v(x)$ for all $x\in X$. Such a valuation can be inductively extended to formulas obtained via connectives of $\alg{D}$ in the obvious way and to formulas of the form $\Box \varphi$ by
$$ \mathsf{Val}(x,\Box\varphi) = \bigwedge \{ \mathsf{Val}(x', \varphi) \mid xRx'\},$$
which will still satisfy $\mathsf{Val}(x,\Box\varphi) \in v(x)$.   
\end{exa}  

We come back to this example in Subsection~\ref{subsec:applications}, when we deal with concrete coalgebraic logics. In the following subsection, we describe how to lift abstract coalgebraic logics, and show that one-step completeness and expressivity are preserved under this lifting.

\subsection{Lifting Coalgebraic Logics, Completeness and Expressivity}\label{subsec:LiftingCoalgLogics}  

Assume we are given a classical abstract coalgebraic logic $(\func{L}, \delta)$ (as in Definition~\ref{defin:AbstractCoalgebraicLogic}) for $\func{T}\colon \Set \to \Set$. Since by now we have a way to lift $\func{T}$ and $\func{L}$, we only need to lift the natural transformation $\delta\colon \func{LP}\Rightarrow \func{PT}$ to a natural transformation $\delta'\colon \func{L'P'}\Rightarrow\func{P'T'}$. 

By definition we have 
$$\func{L'P'}(X,v) = \int_{\alg{S}\in\mathbb{S}(\alg{D})}\PS \func{L} \KS \func{P'}(X,v)$$
and by Lemma~\ref{lem:SetDAdjunctionsandBAAdjunctions}(ii) there is a natural isomorphism $\PS \func{L} \KS \func{P'} \cong \PS \func{LP} \CS$. Furthermore, we have 
$$ \func{P'T'}(X,v) = \func{P}' \big(\int^{\alg{S}\in \mathbb{S}(\alg{D})} \VS \func{T} \CS(X,v)\big) \cong \int_{\alg{S}\in\mathbb{S}(\alg{D})} \func{P'}\VS \func{T} \CS(X,v)$$
because $\func{P}$ is right-adjoint as a functor $\SetD^\mathrm{op}\to \var{A}$ and due to Lemma~\ref{lem:SetDAdjunctionsandBAAdjunctions}(i) we know $\func{P'}\VS\func{T}\CS \cong \PS \func{P}\func{T}\CS$. Because $\delta$ is natural, for every $(X,v)\in\SetD$ we can define a wedge
$$
\begin{tikzcd}
\func{L'P'}(X,v) \arrow[r] & \PS \func{LP} \CS(X,v) \arrow[r] & \PS \func{PT} \CS(X,v) \arrow[r] & \func{P'}\VS\func{T}\CS(X,v) 
\end{tikzcd}
$$
where the first arrow is the corresponding limit morphism up to the first natural isomorphism mentioned above, the second arrow is $\PS\delta_{\CS(X,v)}$ and the last arrow is the second natural isomorphism mentioned above.  Thus the universal property of the end $\func{P'T
'}(X,v)$ yields a morphism
$$ \delta'_{(X,v)}\colon \func{L'P'}(X,v) \to \func{P'T'}(X,v),$$
which in fact defines a natural transformation $\func{L'P'}\Rightarrow \func{P'T'}$, by naturality of $\delta$ and of all isomorphisms involved in the definition of $\delta'$.    

Now we have everything at hand to define the lifting of a classical abstract coalgebraic logic.

\begin{defi}\label{def:LiftingAbstractCoalgLogic}
Let $(\func{L},\delta)$ be a classical abstract coalgebraic logic for $\func{T}$. The \emph{lifting of $(\func{L},\delta)$ to $\var{A}$} is the abstract coalgebraic logic $(\func{L',\delta'})$ for $\func{T}'$, where $\func{L}'$ and $\func{T}'$ are the liftings of $\func{L}$ and $\func{T}$ to $\var{A}$ and $\SetD$, respectively, and $\delta'$ is the natural transformation defined above.  
\end{defi} 

In the remainder of this section, we show that under the assumption that $\func{L}$ preserves mono- and epimorphisms (in particular, if the coalgebraic logic is concrete), one-step completeness and expressivity of abstract coalgebraic logics are preserved under this lifting. In Section~\ref{sec:MVConcrete} we deal with concrete coalgebraic logics, in particular we discuss how to lift axiomatizations (\emph{i.e.}, presentations of functors) as well. 

First we deal with the preservation of one-step completeness (Definition~\ref{defin:oneStepCompleteness}) under the lifting of Definition~\ref{def:LiftingAbstractCoalgLogic}.

\begin{thm}\label{thm:OneStepComplPreservation}
Let $(\func{L},\delta)$ be a classical abstract coalgebraic logic for a functor $\func{T}\colon \Set \to \Set$ such that $\func{L}\colon \BA \to \BA$  and $\func{T}$ preserve mono- and epimorphisms. If $(\func{L},\delta)$ is one-step complete, then so is its lifting $(\func{L}',\delta')$.   
\end{thm}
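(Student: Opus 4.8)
The plan is to show that every component $\delta'_{(X,v)}\colon \func{L'P'}(X,v)\to\func{P'T'}(X,v)$ is a monomorphism, since one-step completeness of $(\func{L}',\delta')$ means precisely that the natural transformation $\delta'$ is a monomorphism, i.e.\ that each of its components is mono, and in the variety $\var{A}$ monomorphisms are exactly the injective homomorphisms. The starting point is the explicit construction of $\delta'_{(X,v)}$ recalled just before Definition~\ref{def:LiftingAbstractCoalgLogic}: it is the unique morphism out of the end $\func{L'P'}(X,v)$ into the end $\func{P'T'}(X,v)$ obtained, via the universal property of the latter, from the wedge whose $\alg{S}$-component is, up to the natural isomorphisms of Lemma~\ref{lem:SetDAdjunctionsandBAAdjunctions}(i),(ii), the map $\PS\delta_{\CS(X,v)}$. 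After transporting along those isomorphisms, $\delta'_{(X,v)}$ is therefore exactly the morphism of ends $\int_{\alg{S}\in\mathbb{S}(\alg{D})}\PS\func{LP}\CS(X,v)\to\int_{\alg{S}\in\mathbb{S}(\alg{D})}\PS\func{PT}\CS(X,v)$ induced by the family $\PS\delta_{\CS(X,v)}$, natural in $\alg{S}$.

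First I would reduce the problem to two facts. The first is that each map $\delta_{\CS(X,v)}\colon \func{LP}\CS(X,v)\to\func{PT}\CS(X,v)$ is a monomorphism; this is immediate from one-step completeness of $(\func{L},\delta)$ applied at the set $\CS(X,v)\in\Set$. The second is that the functor $\PS\colon\BA\to\var{A}$ preserves monomorphisms. For this I would use that $\PS$ is the dual of the left adjoint $\VS$, so that $\KS\dashv\PS$ (consistently with the observation in the proof of Lemma~\ref{lem:L'siftedcolimits}(i) that $\KS$ is a left adjoint); being a right adjoint, $\PS$ preserves all limits, in particular pullbacks and hence monomorphisms. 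Combining the two facts, every component $\PS\delta_{\CS(X,v)}$ of the inducing family is a monomorphism in $\var{A}$.

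The remaining step is the general principle that a morphism between ends (a special kind of limit) induced by a pointwise monomorphism is itself a monomorphism. I would prove this directly: writing $\pi_{\alg{S}}$ and $\pi'_{\alg{S}}$ for the projections of source and target, the induced morphism satisfies $\pi'_{\alg{S}}\circ\delta'_{(X,v)}=\PS\delta_{\CS(X,v)}\circ\pi_{\alg{S}}$ for every $\alg{S}$; hence if $\delta'_{(X,v)}\circ f=\delta'_{(X,v)}\circ g$ for parallel maps $f,g$, then post-composing with each $\pi'_{\alg{S}}$ and cancelling the monomorphism $\PS\delta_{\CS(X,v)}$ yields $\pi_{\alg{S}}\circ f=\pi_{\alg{S}}\circ g$ for all $\alg{S}$, so $f=g$ by joint monicity of the end projections. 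Since isomorphisms are monomorphisms and monomorphisms compose, transporting back along the isomorphisms of Lemma~\ref{lem:SetDAdjunctionsandBAAdjunctions} leaves $\delta'_{(X,v)}$ mono. As $(X,v)$ is arbitrary, $\delta'$ is a monomorphism and $(\func{L}',\delta')$ is one-step complete.

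The hypotheses that $\func{L}$ and $\func{T}$ preserve mono- and epimorphisms enter mainly to legitimise the explicit descriptions used along the way, via Proposition~\ref{prop:LiftingSetStandard} for $\func{T}'$ and the identification $\PS\func{L}\KS\func{P'}\cong\PS\func{LP}\CS$ of Lemma~\ref{lem:SetDAdjunctionsandBAAdjunctions}(ii) underlying the construction of $\delta'$, and are not otherwise essential to the monicity argument itself. I expect the only genuinely delicate point to be the bookkeeping of the first step: verifying carefully that $\PS\delta_{\CS(X,v)}$ really is a natural transformation between the two end-diagrams, so that $\delta'_{(X,v)}$ is precisely the induced end morphism to which the cancellation argument applies. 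This follows from naturality of $\delta$ together with functoriality of $\PS$, $\func{P}$ and $\CS$, but it must be tracked through the natural isomorphisms of Lemma~\ref{lem:SetDAdjunctionsandBAAdjunctions}.
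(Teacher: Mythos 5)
Your proof is correct, but it takes a genuinely different route from the paper's. The paper reduces everything to the single wedge component at $\alg{S}=\alg{D}$: it proves that $\Skel\delta' \cong \delta\U$ up to natural isomorphism and then invokes the fact that the faithful functor $\Skel$ reflects monomorphisms; the technical core there is showing that the two comparison morphisms ($f$ and $g$ in the paper's diagram) become isomorphisms after applying $\Skel$, and this is exactly where the hypotheses that $\func{L}$ and $\func{T}$ preserve mono- and epimorphisms enter, via the explicit descriptions of the liftings in Propositions~\ref{prop:LiftingStoneMonoEpi} and \ref{prop:LiftingSetStandard}. You instead argue componentwise over \emph{all} subalgebras: one-step completeness gives monicity of $\delta_{\CS(X,v)}$, the adjunction $\KS\dashv\PS$ (correctly extracted from the proof of Lemma~\ref{lem:L'siftedcolimits}(i)) gives that $\PS$ preserves monomorphisms, and joint monicity of the end projections finishes the argument. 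Each step checks out, including the commutativity $\pi'_{\alg{S}}\circ\delta'_{(X,v)}=\PS\delta_{\CS(X,v)}\circ\pi_{\alg{S}}$, which holds by the very universal property used to construct $\delta'$, so the ``delicate point'' you flag was already settled when $\delta'$ was defined. Your approach buys two things: it is a purely formal limit argument that avoids the explicit computations with $\Skel f$ and $\Skel g$, and — as you correctly sense — it makes the mono/epi preservation hypotheses on $\func{L}$ and $\func{T}$ inessential for this particular theorem (they are only needed to justify the concrete descriptions of the liftings, not the monicity argument). What the paper's approach buys in exchange is the sharper structural identity $\Skel\delta' \cong \delta\U$, which is not merely a stepping stone: it is explicitly reused later to establish commutativity of the square $D_2$ in the proof of the expressivity theorem (Theorem~\ref{thm:ExpressPreservation}), so the paper's choice of proof does double duty.
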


\begin{proof}
By definition we have to show that if $\delta$ is a componentwise monomorphism, then so is $\delta'$. It suffices to show that $\Skel\delta' = \delta\U$ holds up to natural isomorphism since $\Skel$ is faithful and thus reflects monomorphisms. For all $(X,v)$, by our definition of $\delta'$ (in the special case where $\alg{S} = \alg{D}$) the following diagram commutes.
$$
\begin{tikzcd}[row sep=4.5em,column sep=4.5em]
\func{L'P'}(X,v) \arrow[r, "f"] \arrow[d, "\delta'_{(X,v)}"'] &
\Pow\func{L}\Skel\func{P'}(X,v) \arrow[r, "\cong"] &
\Pow\func{LPU}(X,v) \arrow[d, "\Pow\delta_X"]\\
\func{P'T'}(X,v) \arrow[r, "g"'] &
\func{P'V}^\alg{D}\func{TU}(X,v) \arrow[r,"\cong"] &
\Pow\func{PTU}(X,v)
\end{tikzcd}
$$
Here, $f$ and $g$ are defined via the corresponding limit morphisms. We now apply $\Skel$ to this diagram and use the fact that $\Skel\Pow \cong \id_{\BA}$ to get the following commutative diagram.
$$
\begin{tikzcd}[row sep=4.5em,column sep=4.5em]
\Skel\func{L'P'}(X,v) \arrow[r, "\Skel f"] \arrow[d, "\Skel \delta'_{(X,v)}"'] &
\Skel\Pow\func{L}\Skel\func{P'}(X,v) \arrow[r, "\cong"] &
\func{LPU}(X,v) \arrow[d, "\delta_X"]\\
\Skel\func{P'T'}(X,v) \arrow[r, "\Skel g"'] &
\Skel\func{P'V}^\alg{D}\func{TU}(X,v) \arrow[r,"\cong"] &
\func{PTU}(X,v)
\end{tikzcd}
$$ 
To conclude our proof, it remains to be shown that $\Skel f$ and $\Skel g$ are isomorphisms. 

The fact that $\Skel f$ commutes follows by duality from Proposition~\ref{prop:LiftingStoneMonoEpi}, where it is shown that the cowedge morphism corresponding to $\alg{S} = \alg{D}$ of the (Stone) dual of $\func{L'}$ is the identity on the underlying space, thus applying the forgetful functor (the Stone dual of $\Skel$) to it yields an isomorphism. 

Similarly, by Proposition~\ref{prop:LiftingSetStandard} (we also use the notation used there), again we know that the cowedge morphism $\func{V}^\alg{D}\func{T}\U(X,v) \to \func{T'}(X,v)$ is the identity map as a (notably non-identity) morphism $(\func{T}(X), v^\alg{D}) \to (\func{T}(X),\hat{v})$. The homomorphism $g$ is obtained by applying $\func{P}'$ to this morphism, and it is easy to see that this is the natural inclusion 
$g\colon \prod_{Z\in \func{T}(X)}\hat{v}(Z) \hookrightarrow \prod_{Z\in \func{T}(X)}\alg{D}$. Applying $\Skel$ to this natural inclusion clearly yields (up to identifying $\alg{2}$ with the subset $\{ 0,1 \} \subseteq D$) the identity $\alg{2}^{\func{T}(X)} \to \alg{2}^{\func{T}(X)}$. Thus $\Skel g$ is also an isomorphism, which concludes the proof.     
\end{proof} 

Similarly, expressivity (Definition~\ref{defin:expressivity}) is preserved under this lifting as follows.

\begin{thm}\label{thm:ExpressPreservation}
Let $(\func{L},\delta)$ be a classical abstract coalgebraic logic for a functor $\func{T}\colon \Set \to \Set$ such that $\func{L}\colon \BA \to \BA$ and $\func{T}$ preserve mono- and epimorphisms. If $(\func{L},\delta)$ is expressive, then so is its lifting $(\func{L}',\delta')$.   
\end{thm}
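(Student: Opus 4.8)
The plan is to mirror the strategy of the proof of Theorem~\ref{thm:OneStepComplPreservation}, but now applying the forgetful functor $\U\colon\SetD\to\Set$ to the adjoint transpose, so that $\U$ plays on the semantic side the role that $\Skel$ played on the algebraic side. Recall that $\U$ is faithful and has $\func{V}^\alg{D}$ as a left adjoint, so $\U$ both preserves and reflects monomorphisms; concretely, a $\SetD$-morphism $f$ lies in the mono-class $\var{M}'$ precisely when $\U f$ is injective. I would first fix the factorization system on $\SetD$ by declaring $\var{M}'$ (resp.\ $\var{E}'$) to consist of those $f$ with $\U f$ injective (resp.\ surjective); a factorization of $f\colon(X_1,v_1)\to(X_2,v_2)$ is obtained by factoring $\U f$ in $\Set$ as a surjection $e$ followed by an injection $m$ and equipping the image with the valuation $w(i)=\bigwedge\{v_1(x)\mid e(x)=i\}$, which makes both $e$ and $m$ into $\SetD$-morphisms. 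The three hypotheses of Definition~\ref{defin:expressivity} are then readily checked for $(\func{L}',\delta')$: condition~(1) holds because $\func{L}'$ is finitary by Lemma~\ref{lem:L'siftedcolimits}(i), so an initial $\func{L}'$-algebra exists; condition~(2) is the factorization just described; and condition~(3) holds because $\U\func{T}'=\func{T}\U$ by Proposition~\ref{prop:LiftingSetStandard}, whence $\U(\func{T}'f)=\func{T}(\U f)$ is injective whenever $\U f$ is, using that $\func{T}$ preserves monomorphisms.

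The heart of the argument is to show that $\U$ transforms the lifted transpose back into the original one, precisely
$$ \U(\delta')^{\dagger}\cong\delta^{\dagger}\Skel, $$
meaning that for every $\alg{A}\in\var{A}$ the component $\U(\delta')^{\dagger}_\alg{A}$ agrees, up to the relevant natural isomorphisms, with $\delta^{\dagger}_{\Skel(\alg{A})}$. This requires two auxiliary identifications. The first is $\Skel\func{L}'\cong\func{L}\Skel$, which I would obtain by duality: since $\func{L}'$ is defined by the same algebraic end $\int_{\alg{S}}\PS\func{L}\KS$ independently of the base category, one may compute it via the Stone duality of Theorem~\ref{thm:LiftingDualities}, where $\U\func{T}'\cong\func{T}\U$ dualizes to $\Skel\func{L}'\cong\func{L}\Skel$ (alternatively, using $\Skel\PS\cong\id_\BA$ and that $\Skel$ preserves limits). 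The second ingredient is already available: the natural isomorphisms $\Phi\colon\U\func{S}'\Rightarrow\func{S}\Skel$ and $\Psi\colon\func{P}\U\Rightarrow\Skel\func{P}'$ of Lemma~\ref{lem:SetDAdjunctionsandBAAdjunctions}(iii),(ii), together with the unit/counit compatibilities of Lemma~\ref{lem:SetDAdjunctionsandBAAdjunctions}(iv) and the identity $\Skel\delta'\cong\delta\U$ established inside the proof of Theorem~\ref{thm:OneStepComplPreservation}.

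With these in hand, I would unwind the defining composite $(\delta')^{\dagger}=(\func{S}'\func{L}'\eta')\circ(\func{S}'\delta'\func{S}')\circ(\varepsilon'\func{T}'\func{S}')$ and push $\U$ through it factor by factor. The counit factor $\U(\varepsilon'\func{T}'\func{S}')$ is matched with $\varepsilon\func{T}\func{S}$ at $\Skel(\alg{A})$ using the first identity of Lemma~\ref{lem:SetDAdjunctionsandBAAdjunctions}(iv); the middle factor $\U\func{S}'(\delta'_{\func{S}'\alg{A}})$ is rewritten, via $\Phi$ and $\Skel\delta'\cong\delta\U$, as $\func{S}(\delta\func{S})_{\Skel(\alg{A})}$; and the unit factor $\U(\func{S}'\func{L}'\eta')$ is matched with $\func{S}\func{L}\eta$ at $\Skel(\alg{A})$ using the second identity of Lemma~\ref{lem:SetDAdjunctionsandBAAdjunctions}(iv) together with $\Skel\func{L}'\cong\func{L}\Skel$. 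Concatenating the three commuting squares yields $\U(\delta')^{\dagger}_\alg{A}\cong\delta^{\dagger}_{\Skel(\alg{A})}$.

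Finally, expressivity of $(\func{L},\delta)$ means every component of $\delta^{\dagger}$ lies in $\var{M}$, in particular $\delta^{\dagger}_{\Skel(\alg{A})}$ is injective; by the displayed isomorphism $\U(\delta')^{\dagger}_\alg{A}$ is then injective, hence $(\delta')^{\dagger}_\alg{A}\in\var{M}'$ since $\U$ reflects monomorphisms. As $\alg{A}$ is arbitrary, $(\func{L}',\delta')$ is expressive. I expect the main obstacle to be the bookkeeping in this factorwise comparison — in particular correctly threading the two compatibility identities of Lemma~\ref{lem:SetDAdjunctionsandBAAdjunctions}(iv) through the unit and the counit — exactly as the interplay of the limit morphisms $f$, $g$ with the isomorphism $\Skel\Pow\cong\id_\BA$ was the delicate point in the proof of Theorem~\ref{thm:OneStepComplPreservation}.
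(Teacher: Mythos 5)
Your core argument is exactly the paper's: the paper likewise reduces expressivity of $(\func{L}',\delta')$ to the natural isomorphism $\U(\delta')^\dagger \cong \delta^\dagger\Skel$, establishes it by splitting the transpose into its counit, middle, and unit factors (three commuting squares $D_1$, $D_2$, $D_3$), and discharges these using $\Phi$, $\Psi$, the two identities of Lemma~\ref{lem:SetDAdjunctionsandBAAdjunctions}(iv), the identity $\U\func{T}'=\func{T}\U$, the isomorphism $\Skel\func{L}'\cong\func{L}\Skel$ dual to Proposition~\ref{prop:LiftingStoneMonoEpi}, and $\Skel\delta'\cong\delta\U$ from the proof of Theorem~\ref{thm:OneStepComplPreservation}; the conclusion via faithfulness of $\U$ (hence reflection of monomorphisms) is also the paper's. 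Your explicit description of the $(\var{M}',\var{E}')$-factorization on $\SetD$, with image valuation $w(i)=\bigwedge\{v_1(x)\mid e(x)=i\}$, and your check that $\func{T}'$ preserves $\var{M}'$ are correct and in fact more detailed than the paper, which only remarks that $\SetD$ has epi-mono factorizations for essentially the same reason that $\Set$ does.

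However, your verification of condition (1) of Definition~\ref{defin:expressivity} has a genuine gap. You invoke Lemma~\ref{lem:L'siftedcolimits}(i) to conclude that $\func{L}'$ is finitary and hence has an initial algebra, but that lemma requires $\func{L}$ itself to be finitary, which is not among the hypotheses of the theorem: only preservation of mono- and epimorphisms is assumed (for instance $\func{L}(\alg{B})=\alg{B}^\kappa$ for infinite $\kappa$ preserves both yet is not finitary), and expressivity of $(\func{L},\delta)$ only gives you that $\Alg{\func{L}}$ has an initial object, not that $\func{L}$ preserves filtered colimits. The paper closes this step differently, deriving condition (1) for $\func{L}'$ from condition (1) for $\func{L}$ by duality: the initial $\func{L}$-algebra dualizes to the terminal coalgebra for the Stone dual of $\func{L}$, and endowing that coalgebra with the constant valuation assigning the smallest subalgebra of $\alg{D}$ yields a terminal coalgebra for the lifted functor (using $\U\func{T}'=\func{T}\U$ and the fact that the bottom valuation lies below every valuation, the unique Stone-level morphism into it is automatically a morphism in the lifted category); dualizing back gives the initial $\func{L}'$-algebra. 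Substituting this argument for your finitarity claim repairs the proof; everything else stands.
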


\begin{proof}
In light of Definition~\ref{defin:expressivity}, we note that by duality it can be seen that $\Alg{\func{L}'}$ has an initial object if $\Alg{\func{L}}$ has one (endow the terminal coalgebra of the dual of $\func{L}$ with the `bottom' evaluation which always assigns the smallest subalgebra of $\alg{D}$). Furthermore, $\SetD$ has epi-mono factorizations for essentially the same reason that $\Set$ does. 

We need to show that if the adjoint transpose $\delta^\dagger$ is a componentwise monomorphism, then so is the adjoint transpose $(\delta')^\dagger$. This works similarly to the proof in the primal case \cite[Theorem 12(c)]{KurzPoiger2023}. It suffices to show that $\U(\delta')^\dagger = \delta^\dagger\Skel$ holds up to natural isomorphism, since $\func{U}$ is fully faithful and thus reflects monomorphisms. In other words, we want to show that the following diagram commutes. 
$$
\begin{tikzcd}[row sep=4.5em,column sep=4.5em]
\func{UT'S'} \arrow[r, "\U\varepsilon'\func{T'S'}"] \arrow[d, ""{name=d1}] & \func{US'P'T'S'} \arrow[r, "\func{US'}\delta'\func{S'}"] \arrow[d, ""{name=d2}] &  \func{US'L'P'S'} \arrow[r, "\func{US'L'\eta'}"] \arrow[d, ""{name=d3}] & \func{S'L'} \arrow[d, ""{name=d4}] \\
\func{TS}\Skel \arrow[r, "\varepsilon\func{TS}\Skel"'] &  \func{SPTS}\Skel \arrow[r, "\func{S}\delta\func{S}\Skel"'] &  \func{SLPS}\Skel \arrow[r, "\func{SL}\eta\Skel"'] &  \func{SL}\Skel 
\arrow[phantom,from=d1,to=d2,"D_1"]
\arrow[phantom,from=d2,to=d3,"D_2"]
\arrow[phantom,from=d3,to=d4,"D_3"]
\end{tikzcd}
$$  
Here, the upper border of the (entire) diagram is $\U(\delta')^\dagger$ and the lower border is $\delta^\dagger\Skel$. All vertical arrows are natural isomorphisms obtained via the natural isomorphisms $\Psi\colon \func{PU}\Rightarrow \Skel\func{P'}$, $\Phi\colon \func{US'}\Rightarrow \func{S}\Skel$ from Lemma~\ref{lem:SetDAdjunctionsandBAAdjunctions}, the identity $\func{UT'} = \func{TU}$ (which clearly holds by Proposition~\ref{prop:LiftingSetStandard}) and natural isomorphism $\Skel\func{L'} \cong \func{L'}\Skel$ (which exists by the dual of Proposition~\ref{prop:LiftingStoneMonoEpi}).

The diagram $D_1$ commutes because, applying the first equation of Lemma~\ref{lem:SetDAdjunctionsandBAAdjunctions}(iv), we can compute
$$
\func{SPT}\Phi \circ \func{S}\Psi\func{T'S'} \circ \Phi\func{P'T'S'} \circ \U\varepsilon'\func{T'S'} = \func{SPT}\Phi \circ (\func{S}\Psi \circ \Phi\func{P}' \circ \U\varepsilon')\func{T'S'} =  \func{SPT}\Phi \circ \varepsilon\func{UT'S'},
$$
which coincides with $\varepsilon\func{TS}\Skel \circ \func{T}\Phi$.

The diagram $D_3$ commutes for similar reasons since, applying the second equation of Lemma~\ref{lem:SetDAdjunctionsandBAAdjunctions}(iv), we can compute
$$
\func{SL}\eta\Skel \circ \func{SLP}\Phi \circ \func{SL}\Psi\func{S'} \circ \Phi\func{L'P'S'} = \func{SL}(\Psi\func{S}' \circ \func{P}\Phi \circ \eta\Skel) \circ \Phi\func{L'P'S'} = \func{SL}\Skel\eta' \circ \Phi{\func{L'P'S'}},
$$
which coincides with $\Phi\func{L'} \circ \func{S'L'}\eta'$. 

Finally, to see that the diagram $D_2$ commutes, one uses $\Skel\delta' = \delta\U$ (up to natural isomorphisms), as shown in the proof of Theorem~\ref{thm:OneStepComplPreservation}.
\end{proof}

While $(\func{L}',\delta')$ is a coalgebraic logic for the lifting $\func{T}'$ of $\func{T}$, it also directly yields a coalgebraic logic for $\func{T}$ itself. Indeed, with the exception of \cite{HansoulTeheux2013}, all results on many-valued modal logic interpret formulas over Kripke frames (\emph{i.e.}, $\func{T}$-coalgebras) rather than Kripke $\alg{D}$-frames (\emph{i.e.}, $\func{T}'$-coalgebras). This is easily dealt with, since from $(\func{L}',\delta')$ we can always obtain a coalgebraic logic $(\func{L}', \delta^\top)$ for $\func{T}$ by composing with the adjunction $\V^\top \dashv \U$. That is, we simply define $\delta^\top\colon \func{L'P'}\Vtop \to \func{P'}\Vtop\func{T}$ to be $\delta'\Vtop$ (which is well-defined because $\func{T}'\Vtop = \Vtop \func{T}$). In the case $\func{T} = \mathcal{P}$, this essentially means that we identify a Kripke frame $(X,R)$ with the corresponding Kripke $\alg{D}$-frame $(X,R,v^\top)$ (which from a logical perspective means that models can have arbitrary valuations $\mathsf{Val}\colon X \times \mathsf{Prop} \to D$). It is obvious by definition that one-step completeness of $(\func{L}',\delta')$ implies one-step completeness of $(\func{L}',\delta^\top)$. Furthermore, the fact that $(\delta^\top)^\dagger = \U (\delta')^\dagger$ yields the analogous result for expressivity. Thus, together with Theorems~\ref{thm:OneStepComplPreservation} and \ref{thm:ExpressPreservation} we showed the following.

\begin{cor}\label{cor:ReplaceT'byT}
Let $(\func{L},\delta)$ be a classical abstract coalgebraic logic for $\func{T}$ as in Theorem~\ref{thm:OneStepComplPreservation}, let $(\func{L}',\delta')$ be its lifting and let $\delta^\top = \delta' \Vtop$. Then $(\func{L}',\delta^\top)$ is an abstract coalgebraic logic for $\func{T}$, which is one-step complete if $(\func{L},\delta)$ is one-step complete and expressive if $(\func{L},\delta)$ is expressive.   
\end{cor}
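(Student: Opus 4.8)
The plan is to realize the semantic side of the new logic as a composite adjunction and then transfer the two properties componentwise. First I would observe that the base of $(\func{L}',\delta^\top)$ is the dual adjunction between $\Set$ and $\var{A}$ given by $\U\func{S}'\dashv\func{P}'\Vtop$, obtained by pasting the dual adjunction $\func{S}'\dashv\func{P}'$ with the ordinary adjunction $\Vtop\dashv\U$; the required bijection $\var{A}(\alg{A},\func{P}'\Vtop X)\cong\SetD(\Vtop X,\func{S}'\alg{A})\cong\Set(X,\U\func{S}'\alg{A})$ is immediate from the two given adjunctions. It then remains to check that $\delta^\top=\delta'\Vtop$ has the correct type $\func{L}'\func{P}'\Vtop\Rightarrow\func{P}'\Vtop\func{T}$, which follows from the identity $\func{T}'\Vtop=\Vtop\func{T}$. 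This identity is a direct consequence of Proposition~\ref{prop:LiftingSetStandard}: since $\Vtop$ assigns the top subalgebra $\alg{D}$ to every point, $\CS\Vtop(X)$ is empty unless $\alg{S}=\alg{D}$, so the assignment $\hat{v}$ is constantly $\alg{D}$ on $\func{T}(X)$ and $\func{T}'\Vtop(X)=(\func{T}(X),v^\top)=\Vtop\func{T}(X)$. This establishes that $(\func{L}',\delta^\top)$ is an abstract coalgebraic logic for $\func{T}$.

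For one-step completeness I would simply note that $\delta^\top_X=\delta'_{\Vtop X}=\delta'_{(X,v^\top)}$ is a component of $\delta'$. By Theorem~\ref{thm:OneStepComplPreservation}, one-step completeness of $(\func{L},\delta)$ makes $\delta'$ a componentwise monomorphism; hence each $\delta^\top_X$ is a monomorphism, and $(\func{L}',\delta^\top)$ is one-step complete.

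For expressivity I would first confirm that the hypotheses of Definition~\ref{defin:expressivity} carry over: $\Alg{\func{L}'}$ has an initial object (exactly as in the proof of Theorem~\ref{thm:ExpressPreservation}), $\Set$ has epi-mono factorizations, and $\func{T}$ preserves monomorphisms by assumption. The crux is the identity $(\delta^\top)^\dagger=\U(\delta')^\dagger$. Once it is available, the claim follows because $\U\colon\SetD\to\Set$ is a right adjoint and hence preserves monomorphisms, so that the componentwise monicity of $(\delta')^\dagger$ guaranteed by Theorem~\ref{thm:ExpressPreservation} transfers to $(\delta^\top)^\dagger$. To prove the identity I would expand the triangular formula for $(\delta^\top)^\dagger$ using the unit and counit of the composite adjunction $\U\func{S}'\dashv\func{P}'\Vtop$ — these are obtained by whiskering $\eta'$ and $\varepsilon'$ with the (co)unit of $\Vtop\dashv\U$, the latter unit being the identity since $\U\Vtop=1_{\Set}$ — and then collapse the resulting composite via the triangle identities together with the functorial identities $\func{T}'\Vtop=\Vtop\func{T}$ and $\U\func{T}'=\func{T}\U$, until it coincides with $\U$ applied to the triangular formula defining $(\delta')^\dagger$.

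The main obstacle is precisely this last diagram chase: one must keep careful track of which variable each whiskered (co)unit acts on and invoke the triangle law for $\Vtop\dashv\U$ at exactly the right spot, in the same spirit as the computation of the cells $D_1$, $D_2$, $D_3$ in the proof of Theorem~\ref{thm:ExpressPreservation}. Everything else — exhibiting the composite dual adjunction, type-checking $\delta^\top$, and the two monomorphism transfers — is routine once $\func{T}'\Vtop=\Vtop\func{T}$ and $\U\func{T}'=\func{T}\U$ are in hand.
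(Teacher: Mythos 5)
Your overall route coincides with the paper's own: the paper proves this corollary in the paragraph preceding it, by defining $\delta^\top = \delta'\Vtop$ (typed via the asserted identity $\func{T}'\Vtop = \Vtop\func{T}$), observing that components of $\delta^\top$ are components of $\delta'$ to transfer one-step completeness, and invoking $(\delta^\top)^\dagger = \U(\delta')^\dagger$ for expressivity. Your additions --- making the pasted dual adjunction $\U\func{S}'\dashv\func{P}'\Vtop$ explicit, and outlining the diagram chase for the transpose identity in the style of the cells $D_1$, $D_2$, $D_3$ of Theorem~\ref{thm:ExpressPreservation} --- are exactly the details the paper leaves implicit, and your two monomorphism-transfer arguments are sound.

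There is, however, a genuine flaw in your justification of the key identity $\func{T}'\Vtop=\Vtop\func{T}$; indeed the identity itself is false in general. From $\CS\Vtop(X)=\emptyset$ for $\alg{S}\neq\alg{D}$ you infer that $\hat{v}$ is constantly $\alg{D}$ on $\func{T}(X)$, but $\func{T}(\emptyset)$ need not be empty: by Proposition~\ref{prop:LiftingSetStandard}, any $Z$ in the image of $\func{T}(\emptyset)\to\func{T}(X)$ lies in $\func{T}\CS(X,v^\top)$ for \emph{every} $\alg{S}$, so $\hat{v}(Z)$ is the \emph{least} subalgebra of $\alg{D}$, not $\alg{D}$. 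Concretely, for $\func{T}=\mathcal{P}$ --- the case to which the corollary is chiefly applied --- Example~\ref{exam:LiftingPowerset} gives $\hat{v}(\emptyset)=\bigvee\emptyset$, the smallest subalgebra, so $\mathcal{P}'\Vtop\neq\Vtop\mathcal{P}$ whenever $\alg{D}$ has a proper subalgebra (e.g.\ $\alg{D}=\lucas_n$ with $n\geq 2$); equality can only hold when $\alg{D}$ is primal or $\func{T}(\emptyset)=\emptyset$. To be fair, the paper asserts the same equality without proof, so you inherited this gap rather than introduced it; but since you offer an explicit argument for it, that argument must be repaired. The repair is mild: there is a canonical natural transformation $\iota\colon\Vtop\func{T}\Rightarrow\func{T}'\Vtop$ which is the identity on underlying sets (it is a $\SetD$-morphism because $\hat{v}\leq v^\top$ pointwise), and one should define $\delta^\top:=\func{P}'\iota\circ\delta'\Vtop$, which then has the required type $\func{L}'\func{P}'\Vtop\Rightarrow\func{P}'\Vtop\func{T}$. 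Since $\func{P}'\iota$ is an inclusion of products, hence injective, your componentwise-monomorphism argument for one-step completeness is unaffected; and since $\U\iota=\id$ and $\U\func{T}'=\func{T}\U$ holds on the nose, the identity $(\delta^\top)^\dagger=\U(\delta')^\dagger$ --- and with it your expressivity argument --- goes through as before.
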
     

In this section we showed that both one-step completeness and expressivity of classical coalgebraic logics are preserved under the lifting to the many-valued level. On the level of abstract coalgebraic logics this is satisfying, but from a more `practical' perspective these results only become interesting once we discuss concrete coalgebraic logics and provide an axiomatization of the lifted logic. This is the content of the next section, where we also explicitly show how our results apply to classical modal logic (Kripke semantics) and to neighborhood semantics.  
\section{Many-valued concrete coalgebraic logic}\label{sec:MVConcrete}
In this section, we deal with liftings of classical concrete coalgebraic logics. Most notably, we propose a method which, in some cases (including classical modal logic), allows us to find a presentation of $\func{L}'\colon \var{A} \to \var{A}$, given a presentation of $\func{L}\colon \BA \to \BA$. As in the previous section, the results here can be viewed as direct generalizations of those of the first two authors \cite{KurzPoiger2023} about the case where the algebra of truth-degrees $\alg{D}$ is assumed to be primal. In Subsection~\ref{subsec:applications}, we show how our tools may successfully be used in some sample applications.  

But first, a note on completeness in the case of concrete coalgebraic logics. Assume that $\func{L}\colon \BA \to \BA$ has a presentation by operations and equations. By Corollary~\ref{cor:L'haspresentation} we know that its lifting $\func{L}'\colon \var{A}\to \var{A}$ has a presentation by operations and equations as well. Furthermore, we know that one-step completeness of a classical abstract coalgebraic logic $(\func{L},\delta)$ transfers to the lifted logic $(\func{L'},\delta')$. As discussed in the paragraph after Definition~\ref{defin:oneStepCompleteness}, for classical concrete coalgebraic logics $(\func{L},\delta)$, it is known that one-step completeness implies completeness with respect to the semantics determined by the initial algebra \cite{KupkeKurzPattinson2004,KurzPetrisan2010}. The proof of this fact (\emph{e.g.}, as presented in \cite[Theorem 6.15]{KurzPetrisan2010}) can be easily adapted to work for $(\func{L}',\delta')$ as well, after one notes that $\func{L}'$ preserves monomorphisms (because $\func{L}$ preserves monomorphisms, $\Skel\func{L}' \cong \func{L}\Skel$ and $\Skel$ preserves and reflects monomorphisms). For semi-primal $\FLew$-algebras, a similar result has been shown in \cite{LinLiau2022}.  
\begin{cor}\label{cor:OneStepImpliesCompleteness}
Let $(\func{L}',\delta')$ be a concrete coalgebraic logic for $\func{T'}\colon \Set \to \Set$ such that $\func{L}'$ preserves monomorphisms. Then one-step completeness implies completeness. In particular, this holds if $(\func{L}',\delta')$ is a lifting of a classical coalgebraic logic as in Theorem~\ref{thm:OneStepComplPreservation}.   
\end{cor}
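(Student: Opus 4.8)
The plan is to reduce the statement to the classical completeness theorem \cite[Theorem~6.15]{KurzPetrisan2010}, which asserts that for a classical concrete coalgebraic logic one-step completeness (Definition~\ref{defin:oneStepCompleteness}) entails completeness with respect to the semantics determined by the initial algebra, and then to verify that its proof survives the passage from $(\Set,\BA)$ to $(\SetD,\var{A})$. Let me first recall the shape of that argument in order to locate where the base variety is used. Since $\func{L}'$ has a presentation by operations and equations it is finitary (Lemma~\ref{lem:L'siftedcolimits}(i)), so the initial $\func{L}'$-algebra $I$ exists as the colimit of the initial chain $\bot \to \func{L}'(\bot) \to (\func{L}')^2(\bot) \to \cdots$ (with $\bot$ the initial object of $\var{A}$) and plays the role of the Lindenbaum algebra of formulas. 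Each $\func{T}'$-coalgebra induces by initiality an interpretation $I \to \func{P}'(X,v)$, and completeness is the statement that these interpretations are jointly monic. The standard route, which I would follow, builds a canonical $\func{T}'$-coalgebra on $\func{S}'(I)$ and shows the interpretation into it to be a monomorphism; in the classical proof this is achieved by an induction along the initial chain that relies essentially on two facts: that $\delta'$ is a monomorphism and that the syntax functor preserves monomorphisms.

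For the general statement both of these facts are available: one-step completeness is precisely the assumption that $\delta'$ is monic, and $\func{L}'$ preserving monomorphisms is a standing hypothesis. The remaining ingredients — the dual adjunction $\func{S}'\dashv\func{P}'$, the existence of the initial $\func{L}'$-algebra, and the epi--mono factorizations of $\SetD$ (which $\SetD$ admits for essentially the same reason as $\Set$) — are purely category-theoretic, so I expect the proof of \cite[Theorem~6.15]{KurzPetrisan2010} to transfer essentially verbatim. The main obstacle is exactly this bookkeeping: checking that no step of the Kurz--Petrisan argument secretly exploits a feature special to the Boolean base rather than to the abstract dual-adjunction setting, so that the induction genuinely goes through over $\var{A}$ and $\SetD$.

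For the final sentence, suppose $(\func{L}',\delta')$ is the lifting of a classical coalgebraic logic $(\func{L},\delta)$ as in Theorem~\ref{thm:OneStepComplPreservation}, so that $\func{L}$ and $\func{T}$ preserve mono- and epimorphisms. I would verify the two hypotheses of the general statement. First, $\func{L}'$ preserves monomorphisms: for a monomorphism $m$ in $\var{A}$, the natural isomorphism $\Skel\func{L}' \cong \func{L}\Skel$ gives $\Skel(\func{L}'(m)) \cong \func{L}(\Skel(m))$; here $\Skel(m)$ is monic because $\Skel$ preserves monomorphisms, hence $\func{L}(\Skel(m))$ is monic because $\func{L}$ does, and since $\Skel$ also reflects monomorphisms, $\func{L}'(m)$ is monic. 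Second, one-step completeness of $(\func{L},\delta)$ lifts to one-step completeness of $(\func{L}',\delta')$ by Theorem~\ref{thm:OneStepComplPreservation}. Both hypotheses of the general statement thus hold, and completeness follows.
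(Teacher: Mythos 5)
Your proposal matches the paper's own argument: the paper likewise justifies the corollary by noting that the proof of \cite[Theorem 6.15]{KurzPetrisan2010} adapts once one knows $\func{L}'$ preserves monomorphisms, and it establishes that mono-preservation for the lifted case exactly as you do, via $\Skel\func{L}' \cong \func{L}\Skel$ together with $\Skel$ preserving and reflecting monomorphisms (with one-step completeness of the lifting supplied by Theorem~\ref{thm:OneStepComplPreservation}). Your additional sketch of the Kurz--Petrisan induction along the initial chain is consistent with, and slightly more detailed than, what the paper records.
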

This means that, as soon as we find a presentation of the lifted functor $\func{L}'$ occurring in the lifting $(\func{L}',\delta')$ of the classical concrete logic $(\func{L},\delta)$, we get the many-valued completeness result directly from the corresponding classical one. In the following, we show that it is sometimes possible to come up with a presentation of $\func{L}'$ in a straightforward way. For this, we make use of the algebraic structure of $\alg{D}$ (recall Assumption~\ref{ass:mainassumption}). 

\subsection{Lifting Axiomatizations}\label{subs:LiftingAxiomatizations}
Similar to \cite[Section 4]{KurzPoiger2023}, to lift axiomatizations we need to take the algebraic structure of $\alg{D}$ into consideration. In particular, we make use of characterization (5) in Theorem~\ref{thm:semiprimal_characterizations} to identify elements $e\in \alg{D}$ from the information $\tau_d(e)$, where $d$ ranges over the set $D{\setminus}\{ 0 \}$, which we will denote by $D^+$. 

Note that the map $\tau_{(\cdot)}(e)\colon D^+ \to 2$ satisfies $\tau_{(d_1 \vee d_2)}(e) = \tau_{d_1}(e) \wedge \tau_{d_2}(e)$ since $e \geq d_1 \vee d_2 \Leftrightarrow e \geq d_1$ and $e \geq d_2$. Conversely, every map $D^+ \to 2$ satisfying this condition is of this form. This is proved completely analogous to \cite[Lemma 14]{KurzPoiger2023}).         
\begin{lem}\label{lem:tausCharacterization}
Let $\mathcal{T}\colon D^+ \to 2$ satisfy $$\mathcal{T}(d_1 \vee d_2) = \mathcal{T}(d_1) \wedge \mathcal{T}(d_2)$$ for all $d_1, d_2 \in D^+$. Then $\mathcal{T} = \tau_{(\cdot)}(e)$ for $e = \bigvee\{ d \mid \mathcal{T}(d) = 1 \}$.
\end{lem}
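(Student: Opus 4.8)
The plan is to analyze the fiber $U := \{d \in D^+ \mid \mathcal{T}(d) = 1\}$, whose join is by definition exactly $e$, and to show that $U$ coincides with the principal downset $\{d \in D^+ \mid d \leq e\}$. Since $\tau_d(e) = 1$ iff $e \geq d$, this equality of sets is precisely the assertion $\mathcal{T}(d) = \tau_d(e)$ for all $d \in D^+$, so it suffices to prove both inclusions.

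First I would extract two structural properties of $U$ directly from the hypothesis. Closure under binary joins is immediate: if $d_1, d_2 \in U$ then $\mathcal{T}(d_1 \vee d_2) = \mathcal{T}(d_1) \wedge \mathcal{T}(d_2) = 1$, and $d_1 \vee d_2 \in D^+$ because it dominates the nonzero element $d_1$. Downward closure follows by specializing the hypothesis to a comparable pair: if $d_1 \leq d_2$ in $D^+$ then $d_1 \vee d_2 = d_2$, so $\mathcal{T}(d_2) = \mathcal{T}(d_1) \wedge \mathcal{T}(d_2)$, which in $2$ forces $\mathcal{T}(d_2) \leq \mathcal{T}(d_1)$. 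Hence $\mathcal{T}$ is antitone and $U$ is a downset of $D^+$.

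The key consequence of the finiteness of $\alg{D}$ is that $U$ is a finite set closed under joins, so (provided $U \neq \emptyset$) its join $e = \bigvee U$ is itself a member of $U$, giving $\mathcal{T}(e) = 1$. With this in hand both inclusions are short. For the forward direction, $\mathcal{T}(d) = 1$ gives $d \in U$, hence $d \leq \bigvee U = e$, that is $\tau_d(e) = 1$. For the converse, suppose $\tau_d(e) = 1$, i.e.\ $d \leq e$; since $e \in U$, the set $U$ is downward closed and $d \in D^+$, we conclude $d \in U$, so $\mathcal{T}(d) = 1$.

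The only point requiring a little care — and the one I would single out as the main (minor) obstacle — is the degenerate case $U = \emptyset$, where $e = \bigvee \emptyset = 0$: here $\mathcal{T}$ is identically $0$ on $D^+$, and since no $d \in D^+$ satisfies $d \leq 0$, the map $\tau_{(\cdot)}(0)$ is also identically $0$, so the two agree. Separating this case keeps the appeal to $e \in U$ legitimate in the backward inclusion. Everything else is a direct translation between the Boolean condition $\mathcal{T}(d) = 1$ and the order condition $d \leq e$, exactly along the lines of the cited primal analogue \cite[Lemma 14]{KurzPoiger2023}.
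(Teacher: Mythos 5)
Your proof is correct and complete: the fiber $U$, its join-closure and downward closure, the finiteness argument giving $e \in U$ when $U \neq \emptyset$, and the separate treatment of the empty case are exactly what is needed. The paper itself gives no explicit argument but merely notes the lemma is proved ``completely analogous to'' the primal-case Lemma 14 of \cite{KurzPoiger2023}, and your self-contained argument is precisely the standard proof that reference leaves implicit.
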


Suppose that $\func{L}\colon \BA \to \BA$ has a presentation by one unary operation $\Box$. We want to find a presentation of $\func{L}'\colon \var{A} \to \var{A}$ by one unary operation $\Box'$ as well. In light of Lemma~\ref{lem:tausCharacterization}, the idea now is to try to define it in such a way that $\Box'a$ is pieced together by the (crisp) information of all $\Box \tau_d(a)$. We show that this works if the original $\Box$ is meet-preserving. We thus generalize \cite[Corollary 16]{KurzPoiger2023} from the primal to the semi-primal case. 

\begin{thm}\label{thm:liftPresentationTaus}
Let $\func{L}\colon \BA \to \BA$ have a presentation by one unary operation $\Box$ and equations which all hold in $\alg{D}$ if $\Box$ is replaced by any $\tau_d, d\in D^+$, including the equation $\Box(x\wedge y) = \Box x \wedge \Box y$.
Then $\func{L}'$ has a presentation by one unary operation $\Box'$ and the following equations.
\begin{itemize}
\item $\Box'$ satisfies all equations which the original $\Box$ satisfies,
\item $\Box' \tau_d(x) = \tau_d(\Box' x)$ for all $d\in D^+$.
\end{itemize}
\end{thm}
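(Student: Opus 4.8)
The plan is to show that the functor $\func{M}\colon\var{A}\to\var{A}$ determined by the proposed presentation (free on generators $\{\Box' a\mid a\in A\}$ modulo the two families of equations) is naturally isomorphic to $\func{L}'$. Since every element of an algebra in $\var{A}$ is determined by its $\tau_d$-values in the Boolean skeleton, which is the content of Theorem~\ref{thm:semiprimal_characterizations}(5) underlying Lemma~\ref{lem:tausCharacterization}, the natural strategy is to build a comparison morphism $\phi\colon\func{M}\Rightarrow\func{L}'$ by prescribing the image of each generator $\Box' a$ through its $\tau_d$-components. Concretely, I would use the identification $\Skel\func{L}'(\alg{A})\cong\func{L}\Skel(\alg{A})$ (which holds since $\Skel\func{L}'\cong\func{L}\Skel$) and define $\phi_\alg{A}(\Box' a)$ to be the element of $\func{L}'(\alg{A})$ whose $\tau_d$-value equals $\Box(\tau_d a)\in\func{L}\Skel(\alg{A})$ for every $d\in D^+$.

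The first thing to verify is that this prescription yields a genuine element of $\func{L}'(\alg{A})$. For a fixed point $u\colon\func{L}'(\alg{A})\to\alg{D}$, the assignment $d\mapsto u(\Box\tau_d a)$ defines a map $D^+\to 2$, and by Lemma~\ref{lem:tausCharacterization} it corresponds to an element of $\alg{D}$ exactly when it turns joins into meets; since $\tau_{d_1\vee d_2}=\tau_{d_1}\wedge\tau_{d_2}$, this is precisely guaranteed by the hypothesis that $\Box$ preserves binary meets, and this is the step where meet-preservation is indispensable. The genuinely delicate point, and what I expect to be the main obstacle, is the \emph{subalgebra compatibility}: one must check that the reconstructed value at $u$ actually lands in the image subalgebra $\im(u)$, so that the family assembles to an element respecting the valuation of $\func{L}'(\alg{A})$. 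I would establish this using the concrete description of the lifting from Proposition~\ref{prop:LiftingSetStandard} (respectively Proposition~\ref{prop:LiftingStoneMonoEpi} on the dual side, available since a presentable $\func{L}$ has a mono- and epi-preserving dual): there the valuation $\hat{v}(Z)=\bigwedge\{\alg{S}\mid Z\in\func{T}\CS\}$ is an intersection of subalgebras, and the reconstructed value is a corresponding meet of values already lying in each such $\alg{S}$, hence in their intersection because subalgebras are closed under meets. The hypothesis that all defining equations of $\func{L}$ remain valid in $\alg{D}$ when $\Box$ is replaced by any $\tau_d$ is exactly what makes this reconstruction consistent with the equations at every subalgebra level, and not merely at the skeleton.

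Once $\phi$ is seen to be well defined on generators, checking that it respects the relations is routine: the equations which $\Box$ satisfies hold for the images because they already hold in the presentation $\func{L}\Skel(\alg{A})$ after passing to $\tau_d$-components, while the relations $\Box'\tau_d(x)=\tau_d(\Box' x)$ hold by the very definition of the $\tau_d$-components of $\phi_\alg{A}(\Box' a)$; naturality follows from naturality of the isomorphisms involved and of the term operations $\tau_d$. It then remains to prove that each $\phi_\alg{A}$ is an isomorphism, for which I would use that a homomorphism in $\var{A}$ is an isomorphism as soon as its restriction to skeletons is a Boolean isomorphism and its $\StoneD$-dual preserves the subalgebra valuations. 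The skeleton part is immediate: for $b\in\Skel(\alg{A})$ one has $\tau_d b=b$, so $\phi_\alg{A}(\Box' b)$ is skeletal with value $\Box b$, whence $\Skel\phi_\alg{A}$ is exactly the canonical isomorphism $\Skel\func{M}(\alg{A})\cong\func{L}\Skel(\alg{A})\cong\Skel\func{L}'(\alg{A})$ induced by the two presentations; preservation of valuations is built into the construction, since $\phi$ commutes with every $\tau_d$. Combining these facts yields that $\phi$ is a natural isomorphism, so the displayed operation and equations constitute a presentation of $\func{L}'$.
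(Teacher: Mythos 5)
Your strategy---using the universal property of the presentation to build a comparison natural transformation $\phi\colon\func{M}\Rightarrow\func{L}'$ and then proving each component is an isomorphism---differs from the paper's proof, which never constructs an internal map at all but instead reduces, via finitariness and the finite-level dual equivalence, to a natural bijection between the hom-sets $\var{A}\bigl(\func{L}^{\tau}(\prod_{x}v(x)),\alg{S}\bigr)$ and $\BA\bigl(\func{L}(\alg{2}^{\CS(X,v)}),\alg{2}\bigr)$. The difference would be welcome if your argument closed, but it has a genuine gap at exactly the point you yourself call the main obstacle. Lemma~\ref{lem:tausCharacterization} together with meet-preservation only shows that the compatible family $(\Box\tau_d a)_{d\in D^+}$ determines an element of the full Boolean power $\Pow\func{L}\Skel(\alg{A})$, i.e.\ of the $\alg{S}=\alg{D}$ component of the end $\func{L}'(\alg{A})=\int_{\alg{S}}\PS\func{L}\KS(\alg{A})$; membership in the end (equivalently, on finite duals, that the reconstructed value at each $Z$ lies in $\hat{v}(Z)$) is an additional condition, and it is \emph{not} automatic: already for $\alg{A}$ a proper subalgebra $\alg{S}_0$ of $\alg{D}$, the compatible family $(\tau_d(e))_{d\in D^+}$ with $e\in D\setminus S_0$ is realized by no element of $\alg{A}$. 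Your justification---that ``the reconstructed value is a corresponding meet of values already lying in each such $\alg{S}$''---asserts precisely what must be proved; the intersection formula for $\hat{v}$ supplies no reason why the value at a point lies in any proper subalgebra. The paper's proof needs a separate idea here (in the verification $f_{g_f}=f$): for a suitable $d\notin S$ one passes to the least element $s$ of $\alg{S}$ above $d$ and uses the equations $\Box'\tau_d(x)=\tau_d(\Box' x)$ together with monotonicity, via the auxiliary elements $c_x$ and $c_x^d$, to force the value up from $d$ into $\alg{S}$. Nothing in your sketch produces this step; moreover, the hypothesis you invoke for it (that the equations of $\func{L}$ hold for all $\tau_d$) is used in the paper for a different purpose, namely to check that the reconstructed assignment satisfies the non-$\tau$ equations.

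The isomorphism part of your argument is also circular. The claim that $\Skel\phi_{\alg{A}}$ ``is exactly the canonical isomorphism $\Skel\func{M}(\alg{A})\cong\func{L}\Skel(\alg{A})\cong\Skel\func{L}'(\alg{A})$ induced by the two presentations'' presupposes that the canonical homomorphism $\func{L}\Skel(\alg{A})\to\Skel\func{M}(\alg{A})$, $\Box b\mapsto\Box' b$, is an isomorphism. That is not given by the presentations: computing the Boolean skeleton of the $\var{A}$-algebra presented by the $\Box'$-generators and your relations is essentially the content of the theorem itself---it already contains both the injectivity statement that distinct elements of $\func{L}\Skel(\alg{A})$ remain distinct in $\func{M}(\alg{A})$ and the surjectivity statement that every skeletal element of $\func{M}(\alg{A})$ is a Boolean combination of elements $\Box'\tau_d(a)$. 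As a result, surjectivity of $\phi_{\alg{A}}$ is never established on independent grounds. Finally, your element-level and dual-space reasoning (reconstructing elements from their values at homomorphisms $u\colon\func{L}'(\alg{A})\to\alg{D}$, and recognizing isomorphisms from bijectivity plus valuation-preservation of duals) is only valid for finite algebras under the finite-level duality, or else requires working with the topological duality for $\StoneD$ throughout; your proposal never invokes finitariness of $\func{M}$ and $\func{L}'$ to reduce to the finite level, which is the first move of the paper's proof and is what legitimizes all such pointwise arguments.
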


\begin{proof}
Let $\func{L^\tau}\colon \var{A} \to \var{A}$ be the endofunctor presented by the operation $\Box'$ and the equations from the statement. Since $\func{L^\tau}$ and $\func{L}'$ are both finitary and $\func{P',S'}$ restrict to a dual equivalence between the full subcategories $\var{A}^\mathrm{fin}$ and $\SetD^\mathrm{fin}$ consisting of the corresponding finite members, it suffices to show
$$
\func{S'L^\tau P'} \cong \func{S' L' P'}
$$
on the finite level, and from now on we only consider the restrictions of functors to this finite level. Let $\func{T'}$ and $\func{T}$ be the duals of $\func{L'}$ and $\func{L}$, respectively. Since $\func{S'L'P' \cong T'}$ holds, we can equivalently show $\func{S'L^\tau P'} \cong \func{T'}$. By Definition~\ref{def:canonicalLiftingSetD} and Proposition~\ref{prop:LiftingSetStandard}, the functor $\func{T}'$ is completely characterized by $\CS\func{T'}\cong \func{T}\CS$ for all $\alg{S}\in \mathbb{S}(\alg{D})$ (note that, in particular this includes $\U \func{T'} \cong \func{T}\U$). Thus, altogether it suffices to show 
$$
\CS\func{S' L^\tau P'} \cong \func{SLP}\CS \text { for all } \alg{S}\in \mathbb{S}(\alg{D}).
$$ 
By definition of the functors involved, we want to find a bijection between the sets $\var{A}\big(\func{L^\tau} (\prod_{X}v(x)),\alg{S}\big)$ and $\BA\big(\func{L}(\alg{2}^{\CS(X,v)}),\alg{2}\big)$ which is natural in $(X,v)\in \SetD^\mathrm{fin}$. By definition of $\func{L}^\tau$, the former set is naturally isomorphic to the collection of all functions 
$$
f\colon \{ \Box' a \mid a\in \prod_{x\in X} v(x) \} \to \alg{S} \text{ satisfying the equations of } \func{L}^\tau.
$$
Similarly, the latter set is naturally isomorphic to the collection of all functions
$$
g\colon \{ \Box b \mid b\in \alg{2}^{\CS(X,v)} \} \to \alg{2} \text{ satisfying the equations of } \func{L}.
$$
Given $f$ as above, we assign to it $g_f$ defined by $g_f(\Box b) = f(\Box' b^0)$, where $b^0(x) = b(x)$ for $x\in \CS(X,v)$ and $b^0(x) = 0$ otherwise. The map $g_f$ is well-defined since $T_1(f(\Box' b^0)) = f(\Box' T_1(b^0)) = f(\Box' b^0)$, so $f(\Box' b^0) \in \{ 0,1 \}$. Furthermore, $g_f$ satisfies the equations of $\func{L}$ because they are included in the equations of $\func{L}^\tau$, which $f$ satisfies. 

Conversely, given $g$ as above we assign to it $f_g$ defined by 
$$
f_g(\Box' a) = \bigvee\{ d \mid g\big(\Box \tau_d(a^\flat)\big) = 1 \},
$$
where $a^\flat$ is the restriction of $a$ to $\CS(X,v)$. Since we have 
$$
g\big(\Box \tau_{(d_1 \vee d_2)}(a^\flat)\big) = g\big(\Box (\tau_{d_1}(a^\flat)\wedge \tau_{d_2}(a^\flat))\big) = g\big(\Box \tau_{d_1}(a^\flat)\big) \wedge g\big(\Box \tau_{d_2}(a^\flat)\big),
$$
the condition of Lemma~\ref{lem:tausCharacterization} is satisfied here. Therefore, $\tau_d (f_g(\Box' a)) = g(\Box \tau_d(a^\flat))$. On the other hand, we use $\tau_c \circ \tau_d = \tau_d$ and compute 
$$ f_g\big(\Box' \tau_d(a)\big) = \bigvee\{ c \mid g\big(\Box \tau_c (\tau_d(a^\flat))\big) = 1\} = \bigvee\{ c \mid g\big(\Box \tau_d(a^\flat)\big) = 1\} = g\big(\Box \tau_d(a^\flat)\big).$$
Thus we showed that $f_g$ satisfies the equations $\Box' (\tau_d(x)) = \tau_d(\Box x)$ for all $d$. The reason that $f_g$ satisfies the remaining equations of $\func{L}^\tau$, \emph{i.e.}, the equations of $\func{L}$, is that these equations are satisfied by all $\tau_d$ and preserved by $g$. For example, we see that $f_g$ preserves meets by computing
$$
f_g\big(\Box' (a_1 \wedge a_2)\big) = \bigvee \{ d \mid g\big(\Box \tau_d(a_1^\flat \wedge a_2^\flat)\big) = 1 \}
$$
and thus $\tau_d (f_g (\Box' (a_1 \wedge a_2))) = g(\Box \tau_d(a_1^\flat \wedge a_2^\flat))$, which is equal to $g(\Box \tau_d(a_1^\flat)) \wedge g(\Box \tau_d(a_2^\flat))$ because both $\tau_d$ and $g$ preserve meets. On the other hand, we compute 
$$
\tau_d\big(f_g(\Box' a_1) \wedge f_g(\Box' a_2)\big) = \tau_d\big(f_g(\Box' a_1)\big) \wedge \tau_d\big(f_g(\Box' a_2)\big) = g\big(\Box \tau_d(a_1^\flat)\big) \wedge g\big(\Box \tau_d(a_2^\flat)\big).
$$
Thus we showed that $\tau_d (f_g (\Box' (a_1 \wedge a_2))) = \tau_d(f_g(\Box' a_1) \wedge f_g(\Box' a_2))$ holds for all $d\in D^+$, which implies $f_g (\Box' (a_1 \wedge a_2)) = f_g(\Box' a_1) \wedge f_g(\Box' a_2)$ as desired. 

Naturality of the bijection $g \mapsto g_{(\cdot)}$ is easy to check by definition, so we are left to show that the two assignments $f \mapsto g_f$ and $g \mapsto f_g$ are mutually inverse. To show $g_{f_g} = g$, we simply compute
$$
g_{f_g}(\Box b) = f_g(\Box' b^0) = \bigvee\{ d \mid g\big(\Box \tau_d((b^0)^\flat)\big) = 1 \} = \bigvee \{ d \mid g(\Box b) = 1 \} = g(\Box b),
$$
where we used $(b^0)^\flat = b$ which is clear by the definitions and $\tau_d(b) = b$ because $b \in 2^{\CS(X,v)}$. 
Showing that $f_{g_f} = f$ is more involved. We first compute  
\begin{align*}
f_{g_f}(\Box' a) & = \bigvee\{ d \mid g_f\big(\Box \tau_d (a^\flat)\big) = 1\} \\ 
 & = \bigvee\{ d \mid f\big(\Box' \tau_d((a^\flat)^0)\big) = 1 \} \\
 & = \bigvee\{ d \mid \tau_d\big(f(\Box' (a^\flat)^0\big) = 1 \} = f\big(\Box' (a^\flat)^0\big).
\end{align*}  
This means we have to show that $f(\Box' a) = f(\Box' \tilde{a})$ always holds for $\tilde{a}(x) = a(x)$ on $\CS(X,v)$ and $\tilde{a}(x) = 0$ on $X{\setminus}\CS(X,v)$. Clearly this holds if $f$ is constant, so assume that $f$ is not constant. It suffices to show $f(\Box' \alpha) = 1$ for 
$$
\alpha (x) =  \begin{cases}
1 & \text{ if } x \in \CS(X,v) \\
0 & \text{ if } x \notin \CS(X,v),
\end{cases}
$$  
because this implies $f(\Box' \tilde{a}) = f(\Box'(a \wedge \alpha)) = f(\Box' a) \wedge f(\Box' \alpha) = f(\Box' a).$ In order to show that $f(\Box' \alpha) = 1$, we show that $f(\Box' c_x) = 1$ holds for all $x\in X{\setminus}\CS(X,v)$, where
$$
c_x (y) =  \begin{cases}
1 & \text{ if } y \neq x \\
0 & \text{ if } y = x,
\end{cases}
$$
which is sufficient because $\alpha = \bigwedge\{ c_x \mid x\in X{\setminus}\CS(X,v) \}$ (note that this is a finite meet because $X$ is finite). So let $x\in X{\setminus}\CS(X,v)$ and choose some $d\in D{\setminus}v(x)$. Let $s$ be the minimal element of $\alg{S}$ strictly above $d$ (which exists because $d \neq 1$). Let $c_x^d$ be defined by
$$
c_x^d (y) =  \begin{cases}
1 & \text{ if } y \neq x \\
d & \text{ if } y = x,
\end{cases}.
$$
Now $\tau_d(f(\Box' c_x^d)) = f(\Box' \tau_d(c_x^d)) = f(\Box' 1) = 1$ (since otherwise $f(\Box' 1) = 0$ and the fact that $f$ is order-preserving would imply that $f$ is constant $0$). Since $f(\Box' c_x^d) \in \alg{S}$ and $f(\Box' c_x^d) \geq d$, due to our choice of $s$ we have $f(\Box' c_x^d) \geq s$ as well. This implies 
$$
1 = \tau_s \big(f(\Box' c_x^d)\big) = f\big(\Box' \tau_s(c_x^d)\big) = f(\Box' c_x)
$$
as desired, finishing the proof.        
\end{proof}

With the presentation of this theorem, the natural transformation $\delta' \colon \func{L'P'} \Rightarrow \func{P'T'}$ can be obtained from $\delta$ componentwise via 
\begin{align*} 
\delta'_{(X,v)} \colon \func{L}'\big(\prod_{x\in X} v(x)\big) & \to \prod_{Z\in \func{T}X}\hat{v}(Z) \\ 
\Box' a & \mapsto \big(Z \mapsto \bigvee\{ d \mid \delta\big(\Box \tau_d(a) \big)(Z) = 1 \}\big),
\end{align*}
This means that in this case we have complete description of a lifted concrete coalgebraic logic $(\func{L}',\delta')$.

The applicability of Theorem~\ref{thm:liftPresentationTaus} depends on the specific choice of presentation of $\func{L}$. For example, while it does apply to the presentation given by $\Box 1 = 1$ and $\Box (x\wedge y) = \Box x \wedge \Box y$, it does not apply to the (naturally isomorphic) functor presented by one unary operation $\Diamond$ with equations $\Diamond 0 = 0$ and $\Diamond (x \vee y) = \Diamond x \vee \Diamond y$. However, not surprisingly, in this example an order-dual version of Theorem~\ref{thm:liftPresentationTaus} applies. Let $D^- := D{\setminus}\{ 1 \}$ and for all $d\in D^-$ define 
$$
\kappa_d (x) =  \begin{cases}
0 & \text{ if } x \leq d \\
1 & \text{ if } x \not\leq d.
\end{cases}
$$        
Then the following can then be proved completely analogous to Theorem~\ref{thm:liftPresentationTaus}. 
\begin{thm}\label{thm:liftPresentationEtas}
Let $\func{L}\colon \BA \to \BA$ preserve mono- and epimorphisms and have a presentation by one unary operation $\Diamond$ and equations which all hold in $\alg{D}$ if $\Diamond$ is replaced by any $\kappa_d, d\in D^-$, including the equation $\Diamond(x\vee y) = \Diamond x \vee \Diamond y$.
Then $\func{L}'$ has a presentation by one unary operation $\Diamond'$ and the following equations.
\begin{itemize}
\item $\Diamond'$ satisfies all equations which the original $\Diamond$ satisfies,
\item $\Diamond' \kappa_d(x) = \kappa_d(\Diamond' x)$ for all $d\in D^-$.
\end{itemize}
\end{thm}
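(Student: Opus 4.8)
The plan is to run the proof of Theorem~\ref{thm:liftPresentationTaus} verbatim, replacing $\Box,\tau_d$ by $\Diamond,\kappa_d$ and dualizing all order-theoretic bookkeeping (swapping $\bigvee\leftrightarrow\bigwedge$, $0\leftrightarrow 1$, $\leq\leftrightarrow\geq$). The first ingredient is the order-dual of Lemma~\ref{lem:tausCharacterization}: a map $\mathcal{K}\colon D^-\to 2$ satisfying $\mathcal{K}(d_1\wedge d_2) = \mathcal{K}(d_1)\vee\mathcal{K}(d_2)$ is exactly $\kappa_{(\cdot)}(e)$ for $e = \bigwedge\{d\mid\mathcal{K}(d) = 0\}$, because $\kappa_d(e) = 0\Leftrightarrow e\leq d$ and $e\leq d_1\wedge d_2\Leftrightarrow e\leq d_1$ and $e\leq d_2$. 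Its proof is the literal dual of the one for Lemma~\ref{lem:tausCharacterization}.

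Next, let $\func{L}^\kappa$ be the functor presented by $\Diamond'$ together with the equations in the statement. As in Theorem~\ref{thm:liftPresentationTaus}, both $\func{L}^\kappa$ and $\func{L}'$ are finitary and $\func{P}',\func{S}'$ restrict to a dual equivalence between $\var{A}^{\mathrm{fin}}$ and $\SetD^{\mathrm{fin}}$, so it suffices to establish $\func{S}'\func{L}^\kappa\func{P}'\cong\func{S}'\func{L}'\func{P}'$ on finite objects. Using $\func{S}'\func{L}'\func{P}'\cong\func{T}'$ and the characterization $\CS\func{T}'\cong\func{T}\CS$ from Definition~\ref{def:canonicalLiftingSetD} and Proposition~\ref{prop:LiftingSetStandard}, this reduces to $\CS\func{S}'\func{L}^\kappa\func{P}'\cong\func{S}\func{L}\func{P}\CS$ for all $\alg{S}$, i.e. to a bijection, natural in $(X,v)\in\SetD^{\mathrm{fin}}$, between $\var{A}(\func{L}^\kappa(\prod_X v(x)),\alg{S})$ and $\BA(\func{L}(\alg{2}^{\CS(X,v)}),\alg{2})$ — equivalently between functions $f$ on the generators $\{\Diamond'a\}$ and functions $g$ on $\{\Diamond b\}$ satisfying the respective equations.

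The two assignments are the order-duals of those in Theorem~\ref{thm:liftPresentationTaus}. Given $f$, set $g_f(\Diamond b) = f(\Diamond' b^1)$, where $b^1$ extends $b\in\alg{2}^{\CS(X,v)}$ by the value $1$ off $\CS(X,v)$; here well-definedness (landing in $\{0,1\}$) is witnessed by $\kappa_0$ in place of $T_1=\tau_1$, since $\kappa_0(b^1) = b^1$ and the equation $\kappa_0(\Diamond'x) = \Diamond'\kappa_0(x)$ forces $f(\Diamond'b^1)\in\{0,1\}$. Given $g$, set $f_g(\Diamond'a) = \bigwedge\{d\mid g(\Diamond\kappa_d(a^\flat)) = 0\}$, where $a^\flat$ is the restriction of $a$ to $\CS(X,v)$; since $\Diamond$ and $g$ preserve joins and $\kappa_{d_1\wedge d_2} = \kappa_{d_1}\vee\kappa_{d_2}$, the map $d\mapsto g(\Diamond\kappa_d(a^\flat))$ meets the hypothesis of the dual lemma, giving $\kappa_d(f_g(\Diamond'a)) = g(\Diamond\kappa_d(a^\flat))$. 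That $f_g$ satisfies the commutation equations $\Diamond'\kappa_d(x) = \kappa_d(\Diamond'x)$ follows from $\kappa_c\circ\kappa_d = \kappa_d$ (valid as $c\in D^-$), and that it satisfies the remaining equations of $\func{L}$ is checked $\kappa_d$-componentwise just as join/meet-preservation was in the original.

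It remains to check that the two assignments are mutually inverse, and this is where the real work lies. The identity $g_{f_g} = g$ is immediate, using $(b^1)^\flat = b$, $\kappa_d(b) = b$ for $\{0,1\}$-valued $b$, and $\bigwedge\{d\mid g(\Diamond b) = 0\} = g(\Diamond b)$. The hard direction is $f_{g_f} = f$, which reduces, exactly as before, to showing that $f(\Diamond'a)$ depends only on $a^\flat$, i.e. $f(\Diamond'a) = f(\Diamond'\tilde a)$ where $\tilde a$ agrees with $a$ on $\CS(X,v)$ and equals $1$ off it; writing $\tilde a = a\vee\beta$ with $\beta$ the $\{0,1\}$-indicator of $X\setminus\CS(X,v)$ and using join-preservation, this amounts to $f(\Diamond'\beta) = 0$, hence (since $\beta$ is a finite join of indicators $c_x'$ of singletons $\{x\}$, $x\notin\CS(X,v)$) to $f(\Diamond'c_x') = 0$ for each such $x$. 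I expect this final point to be the main obstacle, and it must be argued by the order-dual of the witness construction in Theorem~\ref{thm:liftPresentationTaus}: since $x\notin\CS(X,v)$ gives $v(x)\not\leq\alg{S}$, pick $d\in v(x)\setminus\alg{S}$ and let $c_x'^d$ take value $d$ at $x$ and $0$ elsewhere; then $\kappa_d(c_x'^d)$ is constant $0$, so $f(\Diamond'c_x'^d)\leq d$ (using $f(\Diamond'0) = 0$, which holds unless $f$ is constant, a case in which the claim is trivial). Setting $s = f(\Diamond'c_x'^d)\in\alg{S}$ one checks $s < d$ and $\kappa_s(c_x'^d) = c_x'$, whence $f(\Diamond'c_x') = \kappa_s(f(\Diamond'c_x'^d)) = \kappa_s(s) = 0$, as required. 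All naturality verifications are routine and dual to the original.
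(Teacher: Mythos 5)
Your proposal is correct and takes essentially the same approach as the paper: the paper's own proof of this theorem consists precisely of the remark that it is ``proved completely analogous to Theorem~\ref{thm:liftPresentationTaus}'', and you have carried out exactly that order-dual transcription — the dual characterization lemma, well-definedness via $\kappa_0$ in place of $T_1$, the extension $b^1$ by $1$ off $\CS(X,v)$, the assignment $f_g(\Diamond' a) = \bigwedge\{d \mid g(\Diamond\kappa_d(a^\flat)) = 0\}$, and the reduction of $f_{g_f}=f$ to $f(\Diamond' c'_x)=0$ for $x\notin\CS(X,v)$ — all correctly (your choice $d\in v(x)\setminus\alg{S}$ is in fact the reading that makes the witness land in $\prod_x v(x)$, silently repairing the paper's ``$d\in D\setminus v(x)$''). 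The only cosmetic deviation is in the final step, where you apply $\kappa_s$ with $s = f(\Diamond' {c'_x}^{d})$ itself rather than the order-dual of the paper's ``minimal element of $\alg{S}$ strictly above $d$''; both choices give $\kappa_s({c'_x}^{d}) = c'_x$ and close the argument.
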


Of course, Theorems~\ref{thm:liftPresentationTaus} and \ref{thm:liftPresentationEtas} do not exhaust all possible presentations a functor $\func{L}$ may have, and as of yet we know no systematic method to lift presentations that don't fall within the scope of these theorems. Nevertheless, these theorems already cover some ground, most notably including classical modal logic. In the following subsection, we show how our results apply, among others, to this case.  

\subsection{Sample Applications}\label{subsec:applications} 
We now have a considerable toolkit to lift classical coalgebraic logics to the semi-primal level. In this subsection, we give some sample applications of these tools and indicate how they relate to prior results in many-valued modal logic. As our first example, we apply our techniques to classical modal logic.
\subsubsection{Classical Modal Logic}\label{subsubs:classicalModalLogic}
As in Example~\ref{exam:classicalModalLogic}, let $\func{T} = \mathcal{P}$ be the powerset functor and $\func{L}\colon \BA \to \BA$ be the functor presented by a unary operation $\Box$ and equations $\Box 1 = 1$ and $\Box(x \wedge y) = \Box x \wedge \Box y$. The natural transformation $\delta\colon \func{LP} \Rightarrow \func{PT}$ is given on $X\in \Set$ by
$$
\Box Y \mapsto \{ Z \subseteq X \mid Z \subseteq Y \} \text{ for } Y\subseteq X.
$$
It is well-known that the classical coalgebraic logic $(\func{L},\delta)$ thus defined is (one-step) complete for $\mathcal{P}$ and expressive if we replace $\mathcal{P}$ by the finite powerset functor $\mathcal{P}_\omega$.

Now let $\mathcal{P}'$ be the lifting of $\mathcal{P}$ to $\SetD$. As described in Example~\ref{exam:LiftingPowerset}, the corresponding coalgebras are the Kripke $\alg{D}$-frames $(X,R,v)$ where $(X,v)\in \SetD$ is compatible with the accessibility relation in the sense that $x_1 R x_2 \Rightarrow v(x_2) \leq v(x_1)$.  

Let $(\func{L'},\delta')$ be the lifting of $(\func{L},\delta)$ to $\var{A}$. By Theorem~\ref{thm:liftPresentationTaus} we know that $\func{L}'$ has a presentation by one unary operation $\Box'$ and equations 
\begin{align*} \Box' 1 = 1, && \Box' (x \wedge y) = \Box' x \wedge \Box' y, && \Box'\tau_d(x) = \tau_d (\Box' x), \text{ for all } d\in D^+.
\end{align*}                  
The natural transformation $\delta' \colon \func{L'P'} \Rightarrow \func{P'}\mathcal{P}'$ has components 
$$
\delta'_{(X,v)}(\Box' a)(Z) = \bigvee\{ d \mid \delta_X\big(\Box \tau_d(a)\big)(Z) = 1\}.
$$
Due to the chain of equivalences 
\begin{align*}
\delta_X\big(\Box \tau_d(a)\big)(Z) = 1 & \Leftrightarrow \forall z\in Z\colon \tau_d(a)(z) = 1 \\ & \Leftrightarrow \forall z\in Z\colon a(z) \geq d \\ & \Leftrightarrow \bigwedge_{z\in Z}a(z) \geq d \Leftrightarrow \tau_d\big(\bigwedge_{z\in Z} a(z)\big) = 1,
\end{align*}
we can simplify this to 
$$
\bigvee\{ d \mid \delta_X\big(\Box \tau_d(a)\big)(Z) = 1\} = \bigvee \{ d \mid \tau_d\big(\bigwedge_{z\in Z} a(z)\big) = 1 \} = \bigwedge_{z\in Z} a(z).
$$
Thus, $\delta'$ yields the conventional semantics of a many-valued box-modality via meets. 

From Theorem~\ref{thm:OneStepComplPreservation} and Corollary~\ref{cor:ReplaceT'byT} we get that $(\func{L',\delta'})$ and $(\func{L'},\delta^\top)$ are one-step complete since $(\func{L},\delta)$ is. By Corollary~\ref{cor:OneStepImpliesCompleteness} this implies completeness for Kripke ($\alg{D}$-)frames. Such completeness results has been proven directly in the case where $\alg{D} = \lucas_n$ is a finite \L ukasiewicz chain in \cite{Bou2011, HansoulTeheux2013} and (only for $(\func{L}', \delta^\top)$) in the case where $\alg{D}$ is a Heyting algebra expanded with the unary operations $T_d$ from Theorem~\ref{thm:semiprimal_characterizations} in \cite{Maruyama2009}. 

Replacing $\mathcal{P}$ by $\mathcal{P}_\omega$, from Theorem~\ref{thm:ExpressPreservation} we get that $(\func{L}',\delta')$ and $(\func{L}',\delta^\top)$ are expressive, that is, they satisfy the Hennessy-Milner property. In the case where $\alg{D} = \lucas_n$ with semantics $\delta^\top$ this has been shown in \cite{MartiMetcalfe2018} (see also \cite{BilkovaDostal2016} for a coalgebraic treatment via predicate liftings).

Of course, it is also possible to use Theorem~\ref{thm:liftPresentationEtas} instead to get the analogous completeness and expressivity result for the many-valued $\Diamond'$ satisfying $\Diamond' 0 = 0$, $\Diamond'(x \vee y) = \Diamond' x \vee \Diamond' y$ and $\Diamond' \kappa_d(x) = \kappa_d(\Diamond' x)$ for all $d\in D^-$. Here, as usual, a formula $\Diamond' \varphi$ is interpreted on Kripke ($\alg{D}$-)frames as a join. 
\subsubsection{Filter Frames}\label{subsubs:MonotoneNeighborhoodFrames}
A neighborhood frame $(X,N)$ (see Example~\ref{exam:NeighborhoodFrames}) is a \emph{filter frame} if, for all $x\in X$, the collection of neighborhoods $N(x)$ is closed under finite intersections and supersets (note that we allow $N(x)$ to be empty). The \emph{filter functor} $\var{M}\colon \Set \to \Set$ is the corresponding subfunctor of the neighborhood functor $\var{N}$.

Let $\func{L}\colon \BA \to \BA$ be presented by one unary operation $\Box$ and the equation $\Box(x \wedge y) = \Box x \wedge \Box y$ and let $\delta$ be defined as in Example~\ref{exam:classicalNeighborhoodLogic}. Then the concrete coalgebraic logic $(\func{L},\delta)$ for $\var{M}$ is well-known to be complete. It is expressive if $\var{M}$ is replaced by the functor $\var{M}_\omega$ corresponding to (image-)finite filter frames \cite{HansenKupkePacuit2009}. 

Let $(\func{L}', \delta')$ be the lifting of $(\func{L},\delta)$ to $\var{A}$. Again, by Theorem~\ref{thm:liftPresentationTaus} we get a presentation of $\func{L}'$ by one unary operation $\Box'$ and equations 
\begin{align*}
\Box' (x \wedge y) = \Box' x \wedge \Box' y, && \Box'\tau_d(a) = \tau_d(\Box a) \text{ for all } d\in D^+. 
\end{align*}     
The corresponding semantics $\delta'_{(X,v)} \colon \func{L'P'}(X,v) \to \func{P'}\var{M}'(X,v)$ are given by 
$$
\delta'_{(X,v)}(\Box' a)(N) = \bigvee\{ d \mid \delta_X\big(\Box \tau_d(a)\big)(N) = 1\} = \bigvee \{d \mid \tau_d(a) \in N \}.
$$ 
This can be interpreted as follows. Given a neighborhood model $(X,N,\mathsf{Val})$ where $\mathsf{Val}\colon X \times \mathsf{Prop} \to \alg{D}$, for a formula $\varphi \in \alg{D}^X$ we have
$$
\mathsf{Val}(x,\Box'\varphi) = \bigvee\{ d \mid \tau_d(\varphi) \in N(x) \}.
$$ 
By Theorem~\ref{thm:OneStepComplPreservation} and Corollary~\ref{cor:OneStepImpliesCompleteness} we know that $(\func{L}', \delta')$ is complete for $\mathcal{M}'$- and $\mathcal{M}$-coalgebras. Replacing $\var{M}$ by $\var{M}_\omega$ we also get the corresponding expressivity results by Theorem~\ref{thm:ExpressPreservation}. 

\subsubsection{Neighborhood Frames}\label{subsubs:NeighborhoodFrames}
We conclude with an example where the theory of Section~\ref{sec:ManyvaluedCoalgModalLogic} of lifting abstract coalgebraic logics applies but obtaining an axiomatization via Theorem~\ref{thm:liftPresentationTaus} or \ref{thm:liftPresentationEtas} is not possible. 

Let $\func{L}\colon \BA \to \BA$ be the functor which has a presentation by one unary operation $\bigtriangleup$ and no (that is, the empty set of) equations. Let $\delta$ be as in Example~\ref{exam:NeighborhoodFrames} again. The concrete coalgebraic logic $(\func{L},\delta)$ for $\var{N}$ is again complete, and expressive if we replace $\var{N}$ by an appropriate $\var{N}_\omega$. 

Therefore, the lifting $(\func{L}',\delta')$ of $(\func{L},\delta)$ to $\var{A}$ is a complete (resp. expressive) abstract coalgebraic logic for $\var{N}'$ (resp. $\var{N}'_\omega$).

Furthermore, by Theorem~\ref{cor:L'haspresentation} we know that $\func{L}'$ does have a presentation by operations and equations. However, since the theory of Subsection~\ref{subs:LiftingAxiomatizations} doesn't apply to $\func{L}$, as of yet we do not know a concrete presentation of $\func{L}'$ in the case where $\alg{D}$ is semi-primal but not primal (for the primal case, a presentation of $\func{L}'$ by $|D^+|$-many operations $\bigtriangleup_d$ is provided in \cite[Theorem 15]{KurzPoiger2023}). 

\section{Conclusion}\label{sec:conclusion}
We provided a general method to lift classical abstract coalgebraic logics $(\func{L},\delta)$ to many-valued abstract coalgebraic logics $(\func{L}',\delta')$, where $\func{L}'$ is defined on $\var{A}$, the variety generated by a semi-primal lattice-expansion $\alg{D}$. We also showed that $(\func{L}',\delta')$ inherits the properties of one-step completeness and expressivity from $(\func{L},\delta)$. Furthermore, $\func{L'}$ has a presentation by operations and equations if $\func{L}$ has one, and we partially answered how such a presentation of $\func{L}'$ can be obtained directly from a presentation of $\func{L}$. In particular, we showed how to apply these methods to lift classical modal logic. In the following, we offer some possible directions to follow up this research.

The results of this paper are heavily dependent on the fact that $\alg{D}$ is semi-primal, as this gives rise to the nicely structured dual category $\StoneD$. However, there are many other natural dualities \cite{ClarkDavey1998} on which coalgebraic logics could be built. In particular, an obvious generalization would be to consider algebras $\alg{D}$ which are \emph{quasi-primal} (defined in the paragraph after Definition~\ref{def: semi-primal-algebra}). Another important extension of the results of this paper would be to cover some \emph{infinite} algebras of truth-degrees as well. Future work will be dedicated to attempting this for the \emph{standard $\mathsf{MV}$-chain} based on the real unit interval, providing a coalgebraic analogue of \cite{HansoulTeheux2013}.      

Theorems~\ref{thm:liftPresentationTaus} and \ref{thm:liftPresentationEtas} only deal with presentations of $\func{L}$ by a single unary operation, though the proof allows a straightforward generalization to presentations by a single $n$-ary operation. The case of presentations by more than one operation seems more involved. For example, is it possible to lift the presentation of a complete logic for the distribution functor from \cite{CirsteaPattinson2004} to the semi-primal level?

Other interesting questions arise if we replace the variety of Boolean algebras, on which the `original' two-valued logic is defined, by other ones. For example, building on Priestley duality instead of Stone duality, is it possible to lift positive modal logic (described coalgebraically in \cite{Palmigiano2004}) to a many-valued level? Here, the theory of \emph{order-primality} might come in handy. 

Lastly, in this paper we focused on coalgebraic logics from the point of view of their algebraic semantics similar to \cite{KupkeKurzPattinson2004}, but to gain a more complete picture of many-valued coalgebraic logics we also encourage their further study via relation liftings \cite{Moss1999} and predicate liftings \cite{Pattinson2003}. 

\section*{Acknowledgment}
  \noindent The second author is supported by the Luxembourg National Research Fund under the project  PRIDE17/12246620/GPS.

\bibliographystyle{alphaurl}
\bibliography{References}
\end{document}